\title{Dilatation of a one-dimensional nonlinear crack impacted by a periodic elastic wave\thanks{This work was carried out with the support of the GdR 2501, CNRS, France.}}
\author{St\'ephane Junca\thanks{Laboratoire J.A. Dieudonn\'e, UMR 6621 CNRS, Universit\'e de Nice Sophia-Antipolis, Parc Valrose, 06108 Nice Cedex 02, France ({\tt junca@math.unice.fr}).}
\and Bruno Lombard\thanks{Laboratoire de M\'ecanique et d'Acoustique, UPR 7051 CNRS, 31 chemin Joseph Aiguier, 13402 Marseille, France ({\tt lombard@lma.cnrs-mrs.fr}).}}
\begin{document}

\maketitle

\begin{abstract}
The interactions between linear elastic waves and a nonlinear crack with finite compressibility are studied in the one-dimensional context. Numerical studies on a hyperbolic model of contact with sinusoidal forcing have shown that the mean values of the scattered elastic displacements are discontinuous across the crack. The mean dilatation of the crack also increases with the amplitude of the forcing levels. The aim of the present theoretical study is to analyse these nonlinear processes under a larger range of nonlinear jump conditions. For this purpose, the problem is reduced to a nonlinear differential equation. The dependence of the periodic solution on the forcing amplitude is quantified under sinusoidal forcing conditions. Bounds for the mean, maximum and minimum values of the solution are presented. Lastly, periodic forcing with a null mean value is addressed. In that case, a result about the mean dilatation of the crack is obtained.
\end{abstract}

\begin{keywords} 
Elastic wave scattering, contact acoustic nonlinearity, nonlinear jump conditions, periodic solutions of differential equations, dependence of solutions on parameters
\end{keywords}

\begin{AMS}
34C11, 35B30, 37C60, 74J20
\end{AMS}

\pagestyle{myheadings}
\thispagestyle{plain}
\markboth{S. JUNCA AND B. LOMBARD}{NONLINEAR CRACK AND ELASTIC WAVES}

\section{Introduction}\label{SecIntro}	

\subsection{Aims}

The modeling of interactions between ultrasonic waves and cracks is of great interest in many fields of applied mechanics. When the wavelengths are much larger than the width of the cracks, the latter are usually replaced by zero-thickness interfaces with appropriate jump conditions. Linear models for crack-face interactions have been widely developed \cite{Pyrak90}. However, these models do not prevent the non-physical penetration of crack faces. In addition, laboratory experiments have shown that ultrasonic methods based on linear models often fail to detect partially closed cracks \cite{Solodov98}.

A well-known nonlinear model for cracks is the unilateral contact model \cite{Richardson79,JuRo08}. A more realistic hyperbolic model accounting for the finite compressibility of crack faces under normal loading conditions has been presented   for applications in engineering \cite{Achenbach82} and geomechanical contexts \cite{Bandis83}. The well-posedness and the numerical modeling of the latter model in the context of 1-D linear elastodynamics was previously studied in \cite{Lombard07}. The generation of harmonics and the distortion of simulated scattered velocity and stress waves were also observed. 

Subsequent numerical experiments have brought to light an interesting property. In the case of a sinusoidal incident wave, the mean values of the elastic displacements around a crack were found to be discontinuous. The simulations conducted also indicated that the mean dilatation of the crack increases with the amplitude of the forcing levels. This purely nonlinear phenomenon (the linear models predict no dilatation) is of physical interest: experimenters can measure the dilatation \cite{Korshak02} and use the data obtained to deduce the properties of the crack. 

A preliminary theoretical study based on the use of a perturbation method was presented in \cite{Lombard08}. An analytical expression for the dilatation was obtained in that study and successfully compared with the numerical results. However, this expression gives only a local estimate, valid in the particular case of the hyperbolic model subjected to very low sinusoidal forcing levels. The aim of the present study is to provide a theoretical analysis which can be applied to any range of forcing levels and to a larger number of contact models.


\subsection{Design of the study}

The present paper is organized as follows:
\begin{itemize}
\item In section 2, the physical configuration is described. The family of strictly increasing convex jump conditions dealt with in this study is introduced. Numerical simulations of the process of interest are presented;
\item In section 3, the problem is reduced to a nonautonomous differential equation. Up to section 6, only the case of sinusoidal forcing, which allows a complete understanding of the phenomena involved, is studied;
\item In section 4, preliminary results on the differential equation are presented. Classical tools for dynamical systems are used: Poincar\'e map, phase portrait, lower and upper solutions \cite{Hubbard91,Nayfeh95};
\item In section 5, the main qualitative results of this study are presented. The mean, maximum and minimum aperture of the crack are bounded, and local estimates for small forcing are also proved;
\item In section 6, some of the previous results are extended to non-monochromatic periodic forcing conditions. The increase of the mean dilatation with the forcing parameter is proved in theorem \ref{TheoremYbarGene}. The most general quantitative result of the paper is given in equation (\ref{YbarLocalGene});
\item In section 7, some conclusions are drawn about the physical observables. Some future perspectives are also suggested.
\end{itemize}


\section{Statement of the problem}\label{SecPbStatement}

\subsection{Physical modeling}\label{SecModeling}

\begin{figure}[htbp]
\begin{center}
\includegraphics[scale=0.7]{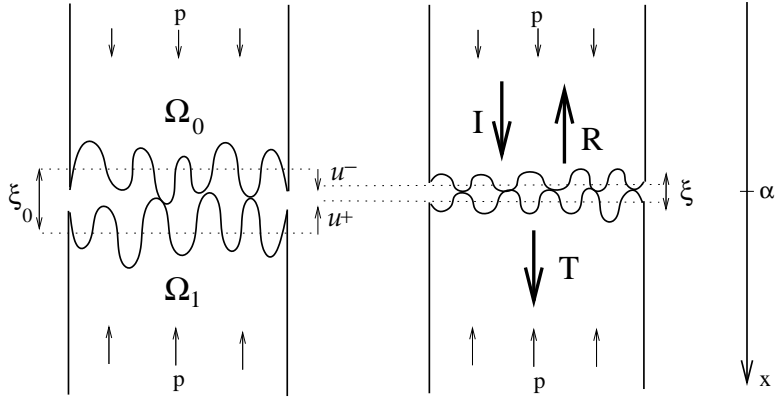}
\caption{Elastic media $\Omega_0$ and $\Omega_1$ with rough contact surfaces, under constant static stress $p$. Static (left) and dynamic (right) case, with incident (I), reflected (R) and transmitted (T) waves.}
\label{FigCrack}
\end{center}
\end{figure}

We consider the case of a single crack with rough faces separating two media $\Omega_0$ and $\Omega_1$, which are both linearly elastic and isotropic, taking $\rho$ to denote the density and $c$ to denote the elastic speed of the compressional waves. These parameters are piecewise constant and may be discontinuous around the crack: $(\rho_0,\,c_0)$ if $x \in \Omega_0$, $(\rho_1,\,c_1)$ if $x \in \Omega_1.$ The media are subject to a constant static stress $p$. At rest, the distance between the planes of average height is $\xi_0(p)>0$ (figure \ref{FigCrack}, left). 

Elastic compressional waves are emitted by a singular source of stress at $x=x_s<\alpha$ in $\Omega_0$, where $\alpha$ is a median plane of the actual flaw surface. The wave impacting $\alpha$ gives rise to reflected (in $\Omega_0$) and transmitted (in $\Omega_1$) compressional waves. These perturbations in $\Omega_0$ and $\Omega_1$ are described by the 1-D elastodynamic equations 
\begin{equation}
\rho \,\frac{\textstyle \partial \,v}{\textstyle \partial \,t}=\frac{\textstyle \partial\,\sigma}{\textstyle \,\partial\, x},\qquad
\frac{\textstyle \partial \,\sigma}{\textstyle \partial \,t}=
\rho \,c^2 \,\frac{\textstyle \partial \,v}{\textstyle \partial \,x}+S(t)\,\delta(x-x_s),
\label{LCscal}
\end{equation}
where $S(t)$ denotes the causal stress source, $v=\frac{\partial\,u}{\partial\,t}$ is the elastic velocity, $u$ is the elastic displacement, and $\sigma$ is the elastic stress perturbation around $p$. The dynamic stresses induced by the elastic waves affect the thickness $\xi(t)$ of the crack (figure \ref{FigCrack}, right). The constraint
\begin{equation}
\xi=\xi_0+[u]\geq \xi_0-d>0
\label{Penetration}
\end{equation}
must be satisfied, where $[u]=u^+-u^-$ is the difference between the elastic displacements on the two sides of the crack, and $d(p)>0$ is the {\it maximum allowable closure} \cite{Bandis83}. We also assume that the wavelengths are much larger than $\xi$, so that the propagation time across the crack is neglected, the latter being replaced by a zero-thickness {\it interface} at $x=\alpha$: $[u]=[u(\alpha,\,t)]=u(\alpha^+,\,t)-u(\alpha^-,\,t)$. 

\begin{figure}[htbp]
\begin{center}
\begin{tabular}{cc}
\includegraphics[scale=0.65]{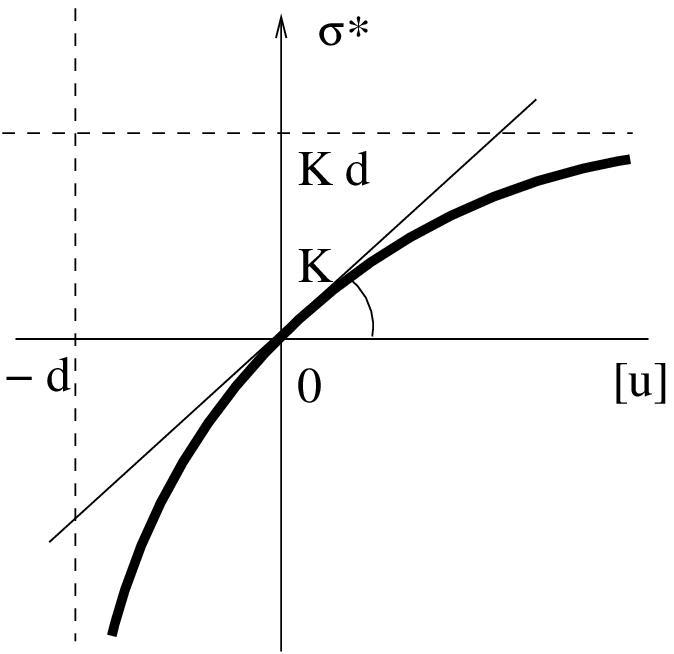} &
\includegraphics[scale=0.65]{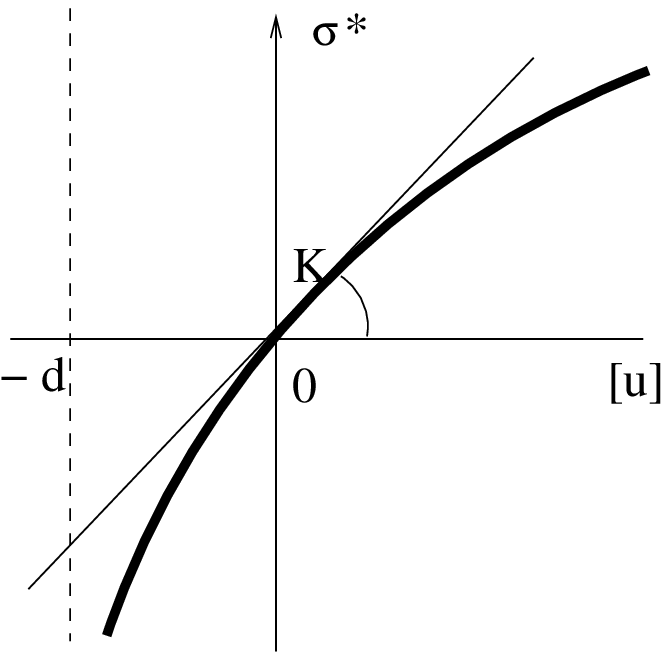} 
\end{tabular}
\end{center}
\caption{Sketch of the nonlinear relation between the stress and the jump of the elastic displacement at $\alpha$. Left row: model 1 (\ref{Model1}), right row: model 2 (\ref{Model2}).}
\label{FigContact}
\end{figure}

Two independent jump conditions at $\alpha$ need to be defined to obtain a well-posed problem. The discontinuity of $\sigma$ is proportional to the mass of the interstitial medium present between $\Omega_0$ and $\Omega_1$ \cite{Rokhlin91}. Since the crack is dry and contains only air, the density of which is much smaller than $\rho_0$ or $\rho_1$, the elastic stress is assumed to be continuous:
\begin{equation}
\left[\sigma(\alpha,\,t)\right]= 0\,\Rightarrow\,\sigma(\alpha^+,\,t)=\sigma(\alpha^-,\,t)=\sigma^*(t).
\label{JCsigma}
\end{equation}
Establishing the second jump condition is a more complex task. Experimental and theoretical studies have shown that $u$ is discontinuous, and that the discontinuity is proportional to the stress applied. The linear model has often been considered \cite{Pyrak90}:
\begin{equation}
\sigma^*(t)=K\,\left[u(\alpha,\,t)\right],
\label{JClin}
\end{equation}
where $K$ is the {\it interfacial stiffness}. Welded conditions $[u(\alpha,\,t)]=0$ are obtained if $K \rightarrow +\infty$. However, the linear condition (\ref{JClin}) violates (\ref{Penetration}) under large compression loadings conditions: $\sigma^*(t)<-K\,\delta \, \Rightarrow \, \xi <\xi_0-\delta$. The linear condition (\ref{JClin}) is therefore realistic only with very small perturbations. With larger ones, a nonlinear jump condition is required.

To develop this relation, it should be noted that compression loading increases the surface area of the contacting faces. A smaller stress is therefore needed to open than to close a crack; an infinite stress is even required to close the crack completely. In addition, the constraint (\ref{Penetration}) must be satisfied, and the model must comply with (\ref{JClin}) in the case of  small stresses. Lastly, concave stress-closure relations have been observed experimentally \cite{Malama03}. Dimensional analysis shows that the general relation
\begin{equation}
\sigma^*(t)=K\,d\,{\cal F}\left([u(\alpha,\,t)]/d\right)
\label{JCnonlin}
\end{equation}
is suitable, where ${\cal F}$ is a smooth increasing concave function
\begin{equation}
\begin{array}{l}
\displaystyle
{\cal F}:\,]-1,\,+\infty[\rightarrow]-\infty,\,{\cal F}_{\max}[,\quad \lim_{X\rightarrow-1}{\cal F}(X)=-\infty,\quad 0<{\cal F}_{\max}\leq +\infty,\\
[6pt]
\displaystyle
{\cal F}(0)=0,\quad {\cal F}^{'}(0)=1,\quad {\cal F}^{''}<0<{\cal F}^{'}.
\end{array}
\label{Fconcave}
\end{equation}
Two models illustrate the nonlinear relation (\ref{JCnonlin}). First, the so-called {\it model 1} presented in \cite{Achenbach82,Bandis83} is 
\begin{equation}
\sigma^*(t)=\frac{\textstyle K\,\left[u(\alpha,\,t)\right]}{\textstyle 1+\left[u(\alpha,\,t)\right]/d}\,\Leftrightarrow \,{\cal F}(X)=\frac{\textstyle X}{\textstyle 1+X},\quad {\cal F}_{\max}=1.
\label{Model1}
\end{equation}
Secondly, the so-called {\it model 2} presented in \cite{Malama03} is
\begin{equation}
\sigma^*(t)=K\,d\,\ln\left(1+[u(\alpha,\,t)]/d\right)\,\Leftrightarrow\, {\cal F}(X)=\ln(1+X),\quad {\cal F}_{\max}=+\infty.
\label{Model2}
\end{equation}
These two models are sketched in figure \ref{FigContact}. The straight line with a slope $K$ tangential to the curves at the origin gives the linear jump conditions (\ref{JClin}). 


\subsection{Numerical experiments}\label{SecNumExp}

\begin{figure}[htbp]
\begin{center}
\begin{tabular}{cc}
(a) & (b)\\
\includegraphics[scale=0.31]{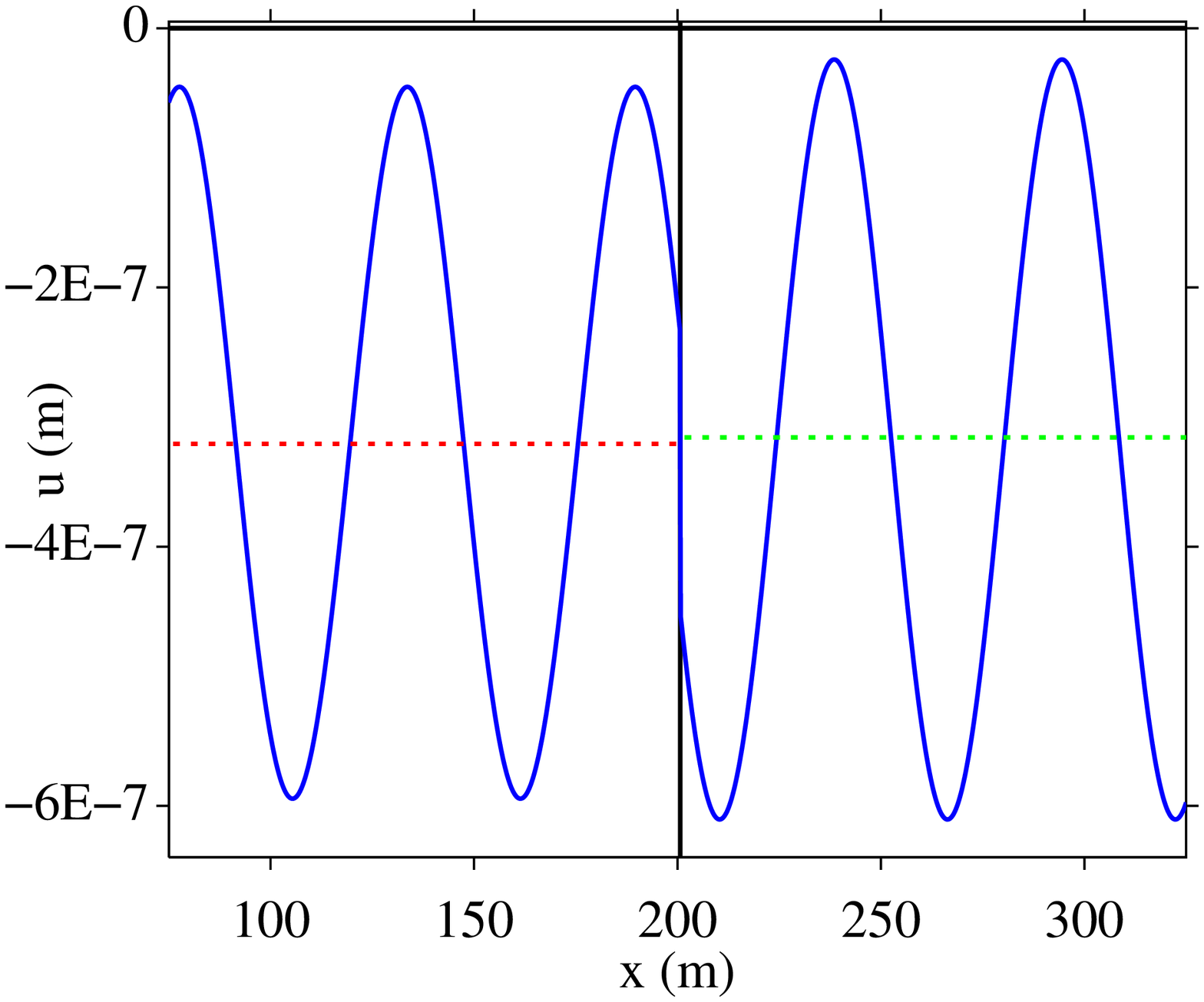}&
\includegraphics[scale=0.31]{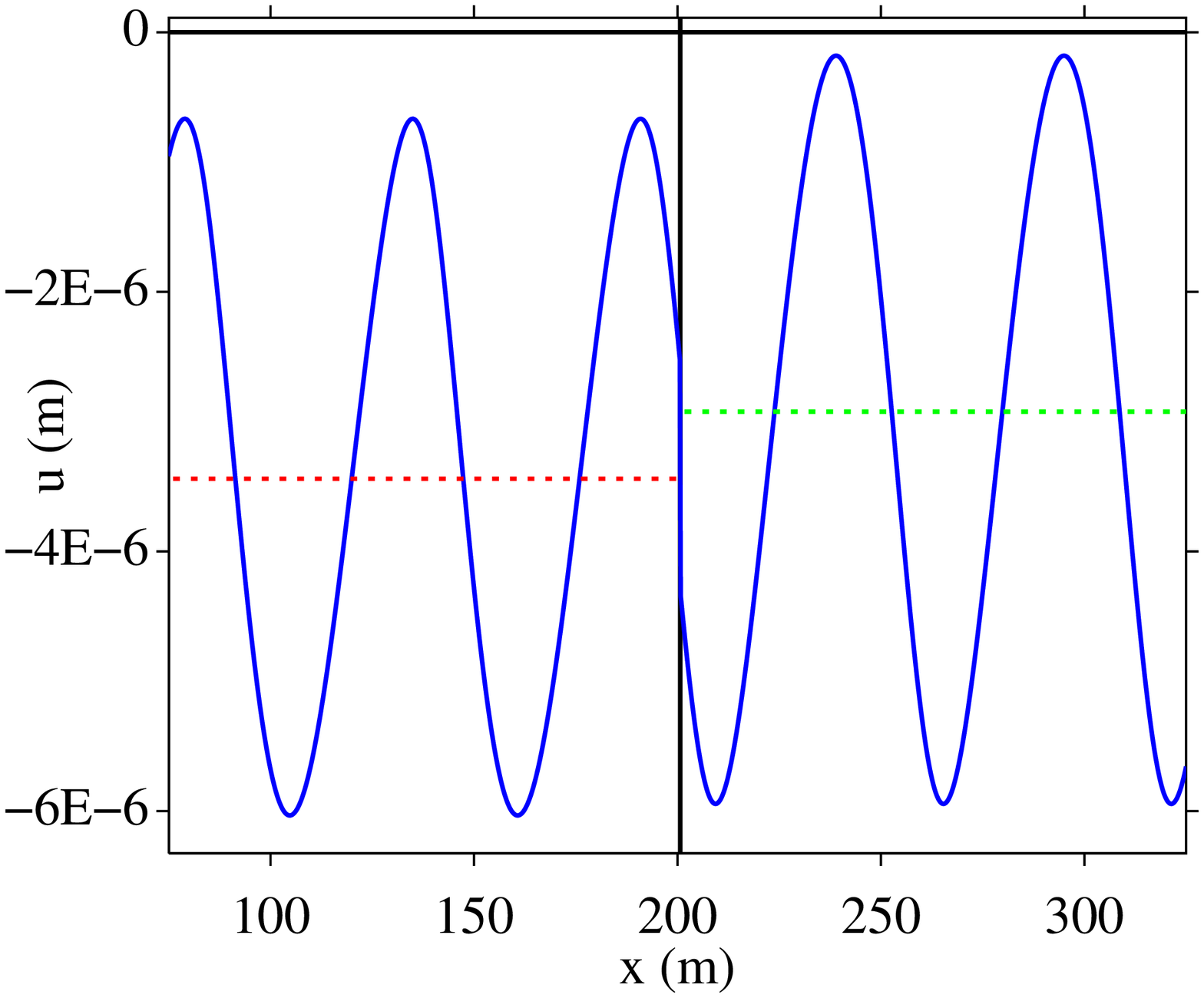}\\
(c) & (d)\\
\includegraphics[scale=0.31]{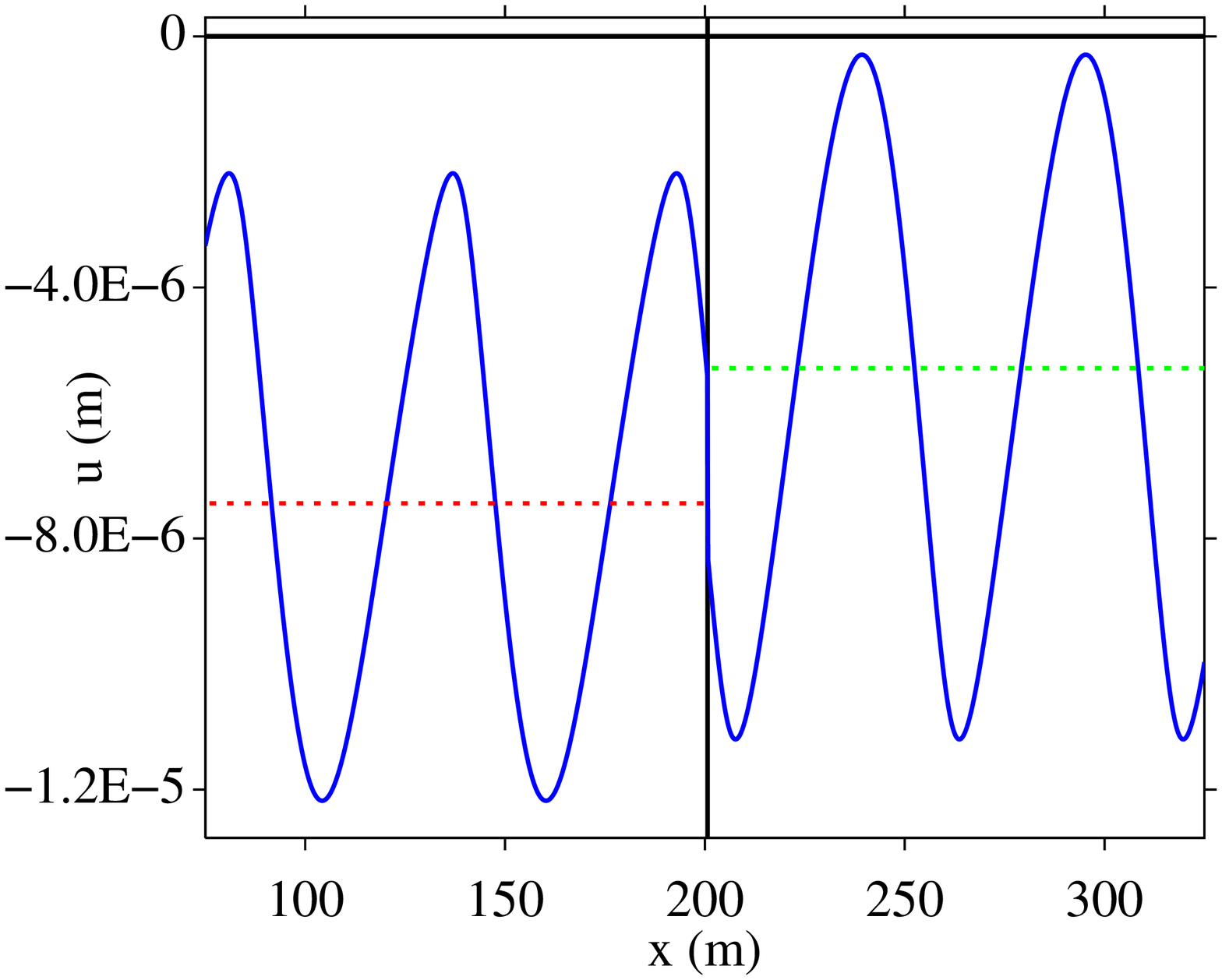}&
\includegraphics[scale=0.31]{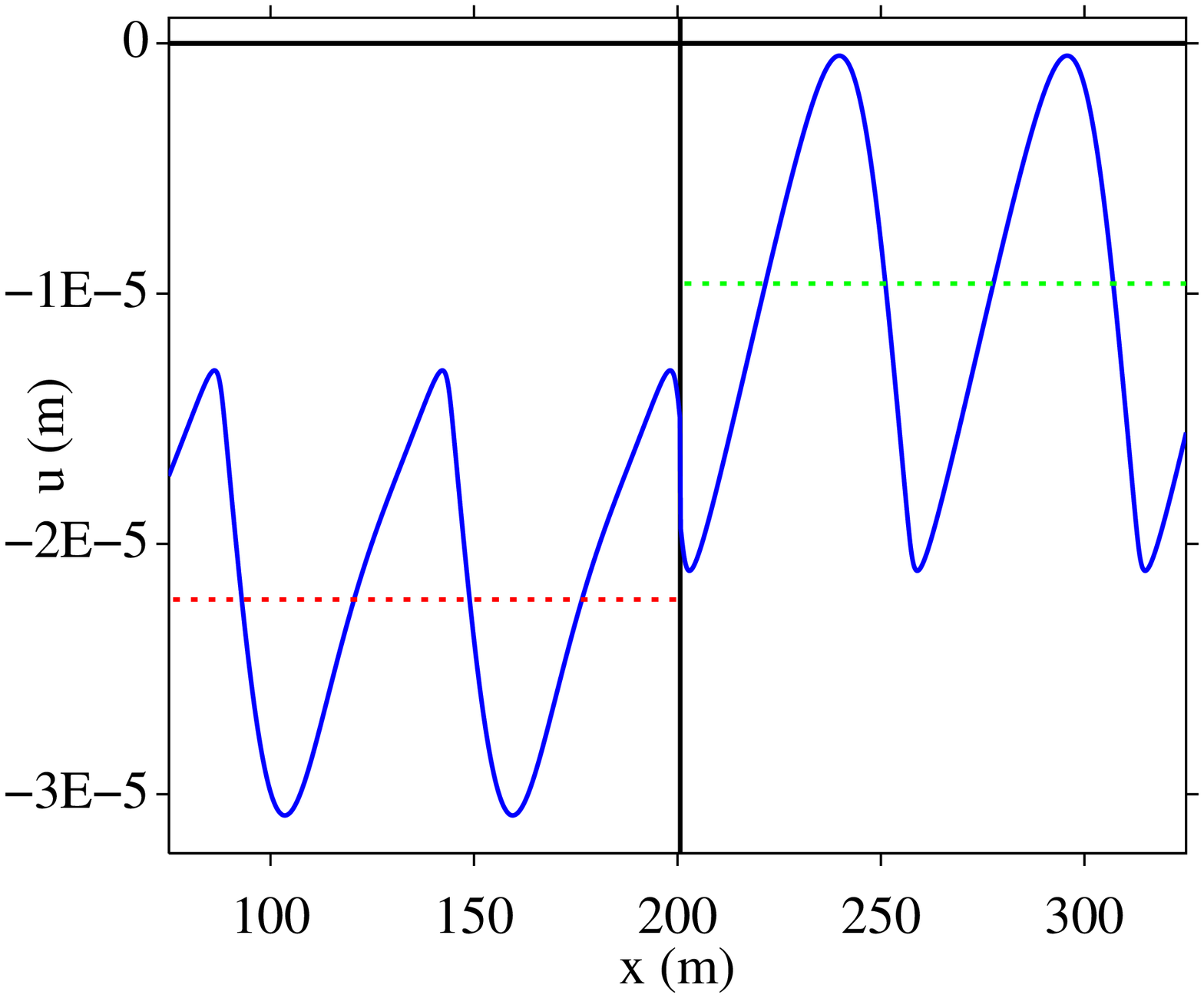}\\
\end{tabular}
\end{center}
\caption{Model 1 (\ref{Model1}): snapshots of the elastic displacement, for various amplitudes $v_0$ of the incident elastic velocity: $10^{-4}$ m/s (a), $10^{-3}$ m/s (b), $2\,10^{-3}$ m/s (c) and $5\,10^{-3}$ m/s (d). The vertical solid line denotes the location $\alpha=200$ m of the crack. The red and green dotted horizontal lines denote the mean value of the elastic displacement on both sides of $\alpha$.} 
\label{FigNumU}
\end{figure}

Here, we describe the influence of the nonlinear jump condition (\ref{JCnonlin}) on the wave scattering. For this purpose, we consider a single crack described by model 1 (\ref{Model1}), a sinusoidal source $S$ with a frequency of 50 Hz, and parameters
$$
\left\{
\begin{array}{l}
\rho_0=\rho_1=1200\,\mbox{ kg.m}^{-3},\quad K=1.3\,10^9\,\mbox{ kg.m}^{-1}\mbox{.s}^{-2},\\
[5pt]
c_0=c_1=2800\,\mbox{  m.s}^{-1}, \hspace{0.65cm} d=6.1\,10^{-6}\,\mbox{m}. 
\end{array}
\right.
$$
The amplitude $v_0$ of the incident elastic velocity ranges from $10^{-4}$ m/s to $5\,10^{-3}$ m/s. This latter maximal amplitude corresponds to a maximal strain $\varepsilon=v_0/c_0\approx10^{-6}$, so that the linear elastodynamic equations (\ref{LCscal}) are always valid \cite{Achenbach73}. The linear first-order hyperbolic system (\ref{LCscal}) and the jump conditions (\ref{JCsigma}) and (\ref{Model1}) are solved numerically on a $(x,\,t)$ grid. For this purpose, a fourth-order finite-difference ADER scheme is combined with an immersed interface method to account for the jump conditions \cite{Lombard07}. At each time step, numerical integration of $v$ also gives $u$.

Figure \ref{FigNumU} shows snapshots of $u$ after the transients have disappeared and the periodic regime has been reached. The mean values of the incident and reflected displacements ($x<\alpha$) and the transmitted displacement ($x>\alpha$) are given by horizontal dotted lines. With $v_0=10^{-4}$ m/s (a), these mean values are continuous across $\alpha$. At higher amplitudes, a positive jump from $\alpha^-$ to $\alpha^+$ is observed. This jump, which amounts to a mean dilatation of the crack, also increases with $v_0$ (b,c,d). 

These properties can be more clearly seen in figure \ref{FigNumSaut}, where the numerically measured time history of $[u]$ is shown, with $v_0=2\,10^{-3}$ m/s (a) and $v_0=5\,10^{-3}$ m/s (b). It can also be seen from this figure that the maximum value of $[u]$ increases with $v_0$, whereas the minimum value of $[u]$ decreases and is bounded by $-d$, as required by (\ref{Penetration}). These findings will be analysed whatever ${\cal F}$ in (\ref{JCnonlin}) in the following sections.

Distortion of the scattered fields can also be observed in figures \ref{FigNumU} and \ref{FigNumSaut}, increasing with the amplitude of the forcing levels. It is beyond the scope of this paper to sudy this classical nonlinear phenomenon. A local analysis in the case of model 1 (\ref{Model1}) was presented in \cite{Lombard08}.

\begin{figure}[htbp]
\begin{center}
\begin{tabular}{cc}
(a) & (b)\\
\includegraphics[scale=0.31]{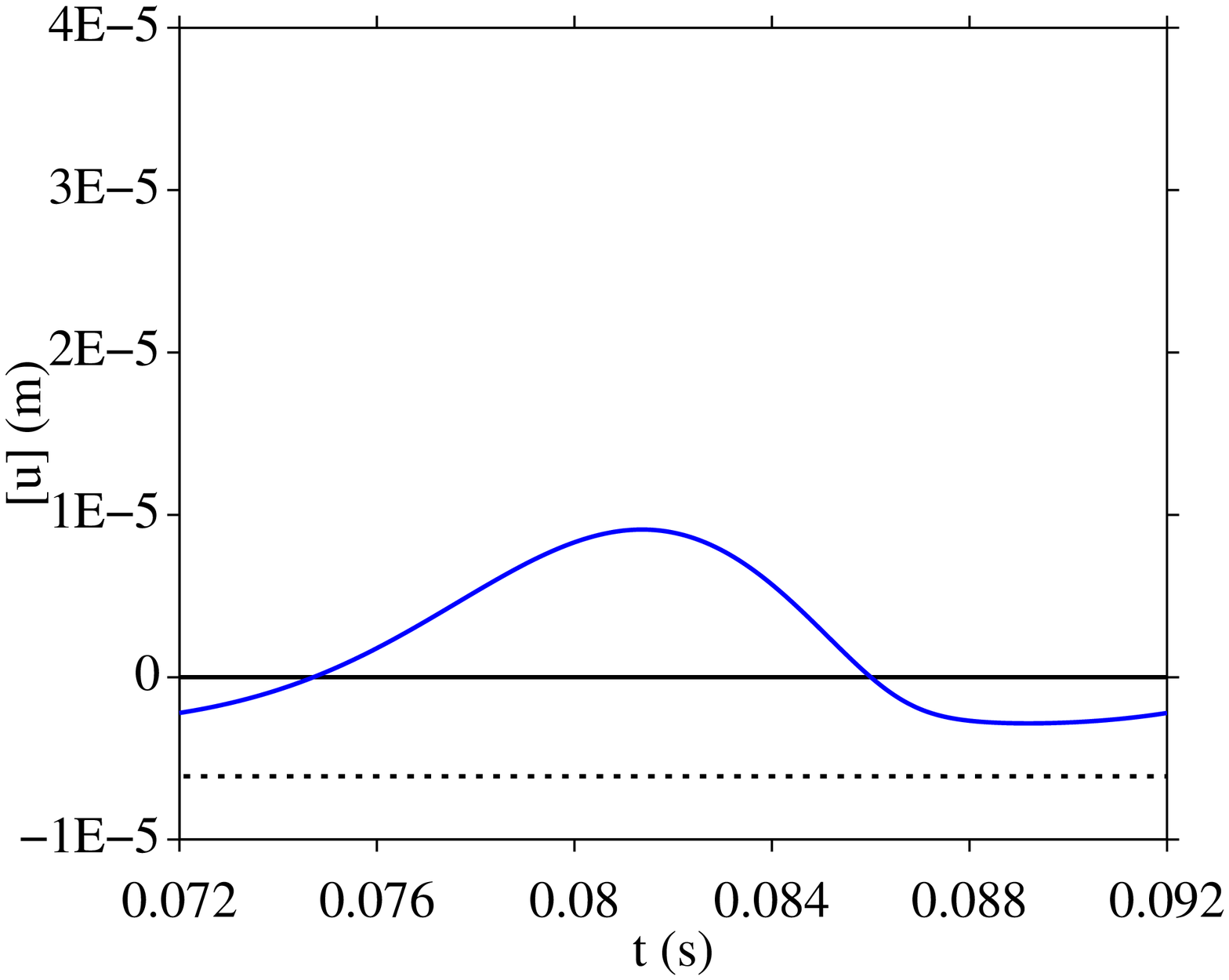}&
\includegraphics[scale=0.31]{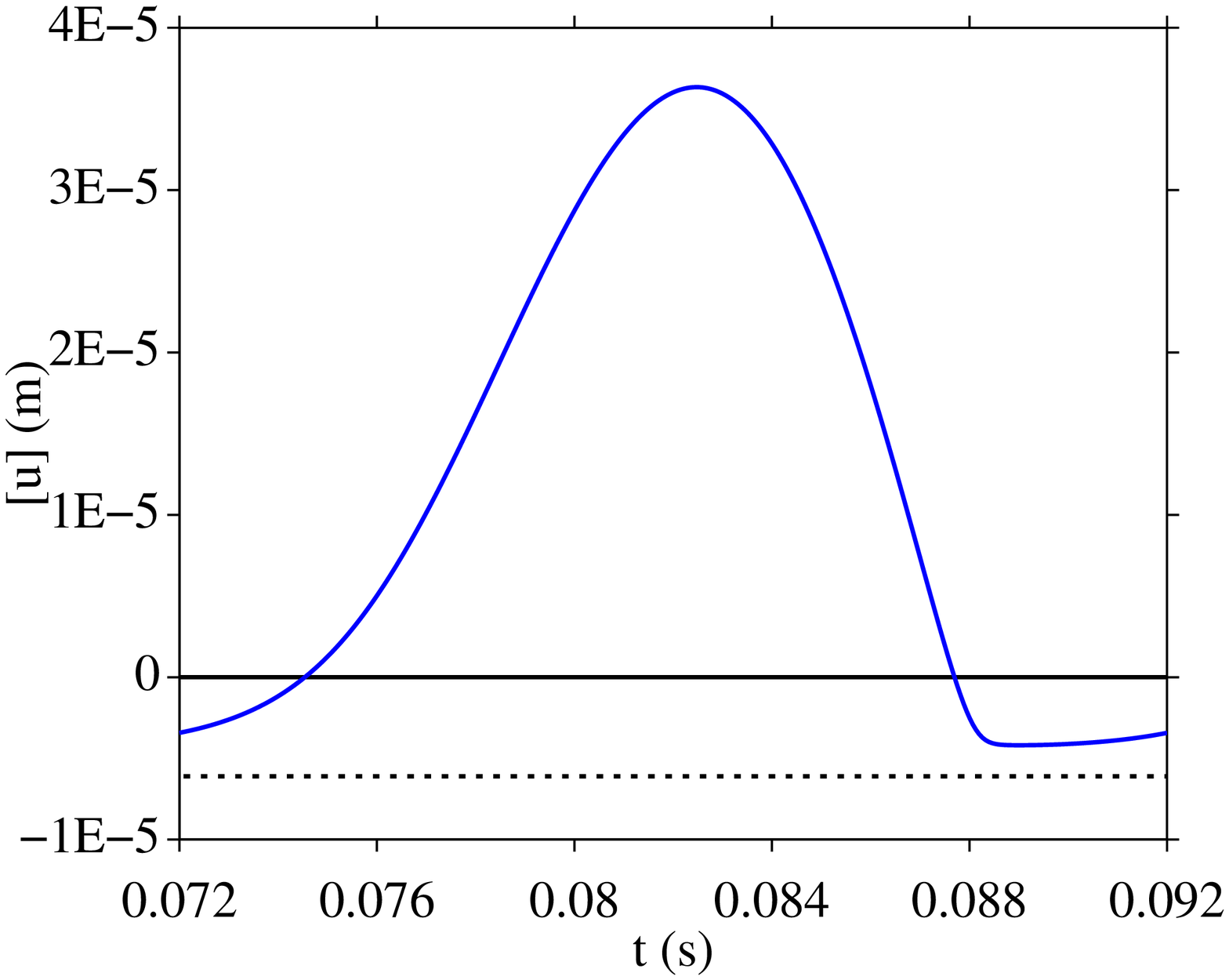}
\end{tabular}
\end{center}
\caption{Time history of $[u(\alpha,\,t)]$, at two incident elastic velocities: $v_0=2\,10^{-3}$ m/s (a) and $v_0=5\,10^{-3}$ m/s (b). The horizontal dotted line denotes $-d$.} 
\label{FigNumSaut}
\end{figure}


\section{Model problem}\label{SecModelPb}

\subsection{Link with an ODE}

To explain the numerical findings, we look for an evolution equation satisfied by $[u(\alpha,\,t)]$. Considering the elastic wave emitted by the source $S$ in (\ref{LCscal}), which impacts a single crack modeled by (\ref{JCsigma}) and (\ref{JCnonlin}), leads to the following proposition. 

\begin{proposition}
The nondimensionalized jump of displacement $y=[u(\alpha,\,t)]/d$ satisfies the nonautonomous scalar differential equation:
\begin{equation}
\frac{\textstyle d\,y}{\textstyle d\,t}=-{\cal F}(y)+\frac{\textstyle 1}{\textstyle \beta\,d}\,\frac{\textstyle 1}{\textstyle \rho_0\,c_0^2}\,S\left(t/\beta\right), \qquad y(0)=y_0=0,
\label{ODEorig}
\end{equation}
where $\beta=K((\rho_0\,c_0)^{-1}+(\rho_1\,c_1)^{-1})>0$. The time $t$ in (\ref{ODEorig}) has been successively shifted: $t-(\alpha-x_s)/c_0 \rightarrow t$, and rescaled: $t\rightarrow \beta\,t$.
\label{PropositionODE}
\end{proposition}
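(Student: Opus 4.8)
The goal is to show that $y(t) = [u(\alpha,t)]/d$ satisfies the ODE (\ref{ODEorig}). The strategy is to use d'Alembert's decomposition of the solution of the 1-D linear elastodynamic system (\ref{LCscal}) in each half-space, impose the two jump conditions (\ref{JCsigma}) and (\ref{JCnonlin}) at $x = \alpha$, and eliminate everything except $[u(\alpha,t)]$.

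\textbf{Step 1: d'Alembert decomposition.} In $\Omega_0$, to the right of the source $x_s$, the solution of (\ref{LCscal}) is a superposition of a right-going wave (the incident wave generated by $S$) and a left-going wave (the reflected wave): one writes $u(x,t) = f_I(t - (x-x_s)/c_0) + f_R(t + (x - \alpha)/c_0)$ for $x_s < x < \alpha$, with the corresponding velocity $v = \partial_t u$ and stress $\sigma = \rho_0 c_0(\partial_t f_I \cdot (\text{sign}) - \ldots)$ obtained from the Riemann invariants: along each characteristic the combination $\sigma \mp \rho_0 c_0 v$ is constant. Concretely, for a right-going wave $\sigma = -\rho_0 c_0 v$ and for a left-going wave $\sigma = +\rho_0 c_0 v$ (signs fixed by (\ref{LCscal})). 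In $\Omega_1$, $x > \alpha$, radiation at infinity forces a purely right-going transmitted wave $u(x,t) = f_T(t - (x-\alpha)/c_1)$, so there $\sigma^+ = -\rho_1 c_1 v^+$.

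\textbf{Step 2: impose the jump conditions.} Continuity of stress (\ref{JCsigma}) gives $\sigma^-(\alpha,t) = \sigma^+(\alpha,t) = \sigma^*(t)$. Writing the incident part at $\alpha$ as a known forcing — it is exactly the source profile delayed by the travel time $(\alpha - x_s)/c_0$, which is why the shift $t - (\alpha-x_s)/c_0 \to t$ appears in the statement — the reflected field and the transmitted field each become explicit affine functions of $\sigma^*$ and of the incident data. Differentiating the displacement relations in $t$ and combining the invariants, one obtains
\[
v^+(\alpha,t) - v^-(\alpha,t) = -\left(\frac{1}{\rho_0 c_0} + \frac{1}{\rho_1 c_1}\right)\sigma^*(t) + (\text{incident term}).
\]
Since $\dfrac{d}{dt}[u(\alpha,t)] = v^+(\alpha,t) - v^-(\alpha,t)$ and the incident term, after one integration in $t$ of the source equation in (\ref{LCscal}), is proportional to $S$ evaluated at the shifted time, this is already the ODE up to substituting the nonlinear law $\sigma^*(t) = K d\, {\cal F}(y)$ from (\ref{JCnonlin}) and collecting constants.

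\textbf{Step 3: rescale and collect constants.} Substituting $\sigma^* = K d\, {\cal F}(y)$ turns the first term into $-K d\left((\rho_0 c_0)^{-1} + (\rho_1 c_1)^{-1}\right){\cal F}(y) = -\beta d\, {\cal F}(y)$ with $\beta = K((\rho_0 c_0)^{-1} + (\rho_1 c_1)^{-1})$. Dividing through by $d$ and then rescaling time $t \to \beta t$ (which divides the right-hand side by $\beta$) produces exactly $dy/dt = -{\cal F}(y) + \dfrac{1}{\beta d}\dfrac{1}{\rho_0 c_0^2}\, S(t/\beta)$. The initial condition $y(0) = 0$ reflects the causal source and the medium being at rest before the incident wave arrives at $\alpha$.

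\textbf{Main obstacle.} The only delicate point is bookkeeping the signs and the $\rho c$ versus $\rho c^2$ factors consistently: the source in (\ref{LCscal}) enters the stress equation with coefficient $\rho c^2$, so converting the emitted stress pulse into the velocity forcing that drives $[u]$ requires care with which Riemann invariant carries the source and with the one time-integration relating $\sigma$-source to $v$-forcing; this is exactly where the stray factor $1/(\rho_0 c_0^2)$ (rather than $1/(\rho_0 c_0)$) in (\ref{ODEorig}) comes from. Everything else is the routine d'Alembert/characteristics computation for a 1-D interface problem.
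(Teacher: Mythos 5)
Your proposal is correct and follows essentially the same route as the paper's proof: d'Alembert decomposition in each half-space, elimination of the reflected and transmitted fields via the impedance (Riemann-invariant) relations and stress continuity, substitution of the nonlinear law $\sigma^*(t)=K\,d\,{\cal F}([u(\alpha,t)]/d)$, the classical incident-velocity formula $u_I'=-\tfrac{1}{2\rho_0 c_0^2}S$ (which the paper also only cites rather than derives), and the final time shift and rescaling. The only cosmetic difference is that you work directly with $\tfrac{d}{dt}[u(\alpha,t)]=v^+(\alpha,t)-v^-(\alpha,t)$, whereas the paper introduces the auxiliary weighted displacement $\psi$ before eliminating it — a bookkeeping choice that does not change the argument.
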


\begin{proof} 
We adapt a procedure described in \cite{Richardson79,Biwa04,Kim06}. The elastic displacement interacting with the crack is
\begin{equation}
u(x,\,t)=\left\{
\begin{array}{l}
\displaystyle
u_I(t-x/c_0)+u_R(t+x/c_0)\quad \mbox{ if } x<\alpha,\\
[6pt]
u_T(t-x/c_1)\qquad \hspace{1.97cm} \mbox{ if } x>\alpha,
\end{array}
\right.
\label{ODEorig-proofU}
\end{equation}
where $u_I$, $u_R$ and $u_T$ are the incident, reflected and transmitted displacements, respectively. The elastic stress deduced from (\ref{ODEorig-proofU}) is \cite{Achenbach73}
\begin{equation}
\sigma(x,\,t)=\left\{
\begin{array}{l}
\displaystyle
-\rho_0\,c_0\,u_I^{'}(t-x/c_0)+\rho_0\,c_0\,u_R^{'}(t+x/c_0)\quad \mbox{ if } x<\alpha,\\
[6pt]
-\rho_1\,c_1\,u_T^{'}(t-x/c_1)\qquad \hspace{2.75cm} \mbox{ if } x>\alpha.
\end{array}
\right.
\label{ODEorig-proofSig}
\end{equation}
Calculations based on (\ref{ODEorig-proofU}) and (\ref{ODEorig-proofSig}) yield
\begin{equation}
\begin{array}{l}
\displaystyle
\sigma(\alpha^-,\,t)=-2\,\rho_0\,c_0\,u_I^{'}(t-\alpha/c_0)+\rho_0\,c_0\,u^{'}(\alpha^-,\,t),\\
[6pt]
\displaystyle
\sigma(\alpha^+,\,t)=-\rho_1\,c_1\,u^{'}(\alpha^+,\,t).
\end{array}
\label{ODEorig-proofSigpm}
\end{equation}
At $\alpha$, the jump in the displacement $\chi$ and the weighted displacement $\psi$ are introduced
$$
\begin{array}{l}
\displaystyle
\chi(t)=[u(\alpha,\,t)]=u(\alpha^+,\,t)-u(\alpha^-,\,t),\\
[6pt]
\displaystyle
\psi(t)=\frac{\textstyle 1}{\textstyle 2}\left(u(\alpha^-,\,t)+\frac{\textstyle \rho_1\,c_1}{\textstyle \rho_0\,c_0}\,u(\alpha^+,\,t)\right),
\end{array}
$$
or, alternatively,
\begin{equation}
\begin{array}{l}
\displaystyle
u(\alpha^+,\,t)=\frac{\textstyle 2\,\rho_0\,c_0}{\textstyle \rho_0\,c_0+\rho_1\,c_1}\,\psi(t)+\frac{\textstyle \rho_0\,c_0}{\textstyle \rho_0\,c_0+\rho_1\,c_1}\,\chi(t),\\
[10pt]
\displaystyle
u(\alpha^-,\,t)=\frac{\textstyle 2\,\rho_0\,c_0}{\textstyle \rho_0\,c_0+\rho_1\,c_1}\,\psi(t)-\frac{\textstyle \rho_1\,c_1}{\textstyle \rho_0\,c_0+\rho_1\,c_1}\,\chi(t).
\end{array}
\label{ODEorig-proofPhi}
\end{equation}
The continuity of $\sigma$ (\ref{JCsigma}) means
\begin{equation}
\frac{\textstyle d\,\psi}{\textstyle d\,t}=u_I^{'}(t-\alpha/c_0).
\label{ODEorig-proofdPsi}
\end{equation}
The nonlinear relation (\ref{JCnonlin}) and the first equation of (\ref{ODEorig-proofSigpm}) give
$$
u^{'}(\alpha^-,\,t)=2\,u_I^{'}(t-\alpha/c_0)+\frac{\textstyle K\,d}{\textstyle \rho_0\,c_0}\,{\cal F}(\chi/d).
$$
Using (\ref{ODEorig-proofPhi}) and (\ref{ODEorig-proofdPsi}), we obtain after some operations
$$
\frac{\textstyle d\,\chi}{\textstyle d\,t}=-\beta\,d\,{\cal F}(\chi/d)-2\,u^{'}_I(t-\alpha/c_0), 
$$
where $\beta$ is defined in proposition \ref{PropositionODE}. Classical calculations of elastodynamics yield the incident elastic velocity generated by $S$ and impacting the crack \cite{Achenbach73}
\begin{equation}
v(x,\,t)=u_I^{'}(t-x/c_0)=-\frac{\textstyle 1}{2\,\rho_0\,c_0^2}S\left(t-\frac{\textstyle x-x_s}{\textstyle c_0}\right).
\label{Vinc}
\end{equation}
The time shift $t-(\alpha-x_s)/c_0 \rightarrow t$ therefore gives the differential equation
$$
\frac{\textstyle d\,\chi}{\textstyle d\,t}=-\beta\,d\,{\cal F}(\chi/d)+\frac{\textstyle 1}{\textstyle \rho_0\,c_0^2}\,S(t).
$$
Nondimensionalization $y=\chi/d$ and time scaling $t\rightarrow \beta\,t$ give the ODE (\ref{ODEorig}). Since the source $S$ is causal and $x_s<\alpha$, then $y(0)=0$.
\qquad\end{proof}


\subsection{Sinusoidal forcing}\label{SecSinus}

From $t=0$, the source is assumed to be sinusoidal with the angular frequency $\Omega$
\begin{equation}
S(t)=2\,v_0\,\rho_0\,c_0^2\,\sin \Omega t,
\label{ForcageSinus}
\end{equation}
which results in an incident elastic velocity with the amplitude $v_0$ (\ref{Vinc}). The source (\ref{ForcageSinus}) is injected in (\ref{ODEorig}). Then, setting the nondimensionalized parameters 
\begin{equation}
\begin{array}{l}
\displaystyle
A=\frac{\textstyle 2\,v_0}{\textstyle \beta\,d},\qquad\omega=\frac{\textstyle \Omega}{\textstyle \beta},\qquad T=\frac{\textstyle 2\,\pi}{\textstyle \omega}=\frac{\textstyle 2\,\pi\,\beta}{\textstyle \Omega},\\
[8pt]
\displaystyle
y=\frac{\textstyle [u(\alpha,\,t)]}{\textstyle d},\quad f(y)=-{\cal F}(y),\qquad f_{\min}=-{\cal F}_{\max},
\end{array}
\label{Adimentionalize}
\end{equation}
the model problem becomes a nonautonomous differential equation with sinusoidal forcing:
\begin{equation}
\left\{
\begin{array}{l}
\displaystyle
\frac{\textstyle d\,y}{\textstyle d\,t}=f(y)+A\,\sin\omega t=F(t,\,y),\qquad \omega>0,\qquad 0\leq A< +\infty,\\
[6pt]
\displaystyle
f:\,]-1,\,+\infty[\rightarrow]f_{\min},\,+\infty[,\quad\lim_{y\rightarrow-1}f(y)=+\infty,\quad -\infty\leq f_{\min}<0,\\
[6pt]
\displaystyle
f(0)=0,\quad f^{'}(0)=-1,\quad f^{'}(y)<0<f^{''}(y), \\
[6pt]
\displaystyle
y(0)=y_0\in]-1,\,+\infty[.
\end{array}
\right.
\label{ODE_Y}
\end{equation}
For the sake of generality, contrary to what occured in proposition \ref{PropositionODE}, $y_0$ can differ here from 0. The properties of $f$ in (\ref{ODE_Y}) mean that the reciprocal function $f^{-1}$ satisfies
\begin{equation}
\begin{array}{l}
\displaystyle
f^{-1}:\,]f_{\min},\,+\infty[\rightarrow]-1,\,+\infty[,\qquad\lim_{y\rightarrow+\infty}f^{-1}(y)=-1,\\
[6pt]
\displaystyle
f^{-1}(0)=0,\quad (f^{-1})^{'}<0<(f^{-1})^{''}.
\end{array}
\label{PropFm1}
\end{equation}
In model 1 (\ref{Model1}), $f$ is involutive and $f_{\min}=-1$. The function $f$ and its derivatives are
$$
f(y)=-\frac{\textstyle y}{\textstyle 1+y},\quad f^{'}(y)=-\frac{\textstyle 1}{\textstyle (1+y)^2},\quad f^{''}(y)=\frac{\textstyle 2}{\textstyle (1+y)^3}, \quad f^{'''}(y)=-\frac{\textstyle 6}{\textstyle (1+y)^4}.
$$
In model 2 (\ref{Model2}), $f_{\min}=-\infty$, and the function $f$ and its derivatives are
$$
f(y)=-\ln(1+y),\quad f^{'}(y)=-\frac{\textstyle 1}{\textstyle 1+y},\quad f^{''}(y)=\frac{\textstyle 1}{\textstyle (1+y)^2},\quad f^{'''}(y)=-\frac{\textstyle 2}{\textstyle (1+y)^3}.
$$
In the next sections, solutions of (\ref{ODE_Y}) and related ODE will be presented. These solutions are computed numerically using a fourth-order Runge-Kutta method, with model 1 (\ref{Model1}). The parameters are those used in section \ref{SecNumExp}, with nondimensionalization (\ref{Adimentionalize}). In addition and for the sake of clarity, the dependence of solutions on the parameters $A$ and $\omega$ is omitted, except when necessary. Lastly, we use an overline to denote the mean value of a function during  one period $[0,\,T]$: given $s(t)$,
$$
\overline{s}=\frac{\textstyle 1}{\textstyle T}\int_0^Ts(t)\,dt.
$$


\section{Preliminary results}\label{SecPreRes}

\subsection{Existence and uniqueness of a periodic solution $Y$}

The $T$-periodic isocline of zero slope deduced from (\ref{ODE_Y}) is
\begin{equation}
I_0^y(t)=f^{-1}(-A\,\sin \omega t).
\label{IsoclineY}
\end{equation}

\begin{proposition}
There is a unique $T$-periodic solution $Y(t)$ of (\ref{ODE_Y}). This solution is asymptotically stable.
\label{PropositionYperiod}
\end{proposition}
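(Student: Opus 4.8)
The plan is to reduce the existence and uniqueness of a $T$-periodic solution to a fixed-point property of the Poincar\'e (time-$T$) map $P:y_0\mapsto y(T;y_0)$ associated to (\ref{ODE_Y}), where $y(\cdot;y_0)$ is the solution with initial value $y_0$. First I would establish that solutions are global in forward time and cannot escape the domain $]-1,+\infty[$. The lower bound follows from the barrier $\lim_{y\to-1}f(y)=+\infty$: for $y$ close enough to $-1$ the right-hand side $F(t,y)=f(y)+A\sin\omega t$ is positive uniformly in $t$, so $-1$ is a (strict) lower fence and no trajectory can reach it. For an upper bound, the concavity/decay of $f$ (with $f(0)=0$, $f'(0)=-1$, $f'<0$) gives $f(y)\le -y$ for $y\ge 0$, hence $y'\le -y+A$, so every solution eventually enters and remains in a fixed bounded interval $[-1+\eta,\,M]$ with, say, $M=A+1$. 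Consequently $P$ is a well-defined, $C^1$, monotone increasing map of a compact interval into itself, so it has at least one fixed point, and each fixed point corresponds to a $T$-periodic solution.

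For uniqueness and asymptotic stability I would use the variational equation. Writing $w(t)=\partial y/\partial y_0$, one has $w'=f'(y(t))\,w$, so
\begin{equation*}
P'(y_0)=\exp\!\left(\int_0^T f'\!\bigl(y(t;y_0)\bigr)\,dt\right).
\end{equation*}
Since $f'<0$ everywhere on the domain, the integrand is strictly negative and bounded away from $0$ on any compact set of states, so $0<P'(y_0)<1$ for every relevant $y_0$. A strictly increasing contraction on a compact interval has exactly one fixed point $y_0^\ast$; the corresponding periodic solution $Y(t)$ is therefore unique. Its Floquet multiplier is $P'(y_0^\ast)\in(0,1)$, which gives local exponential (hence asymptotic) stability; because $P$ is monotone with a single fixed point and maps the invariant compact interval into itself, $P^n(y_0)\to y_0^\ast$ for every admissible $y_0$, so $Y$ is in fact globally asymptotically stable among solutions of (\ref{ODE_Y}).

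The main obstacle is purely technical rather than conceptual: making the a priori bounds fully rigorous when $f_{\min}=-\infty$ (model 2), where $f$ is unbounded below as $y\to+\infty$ is not an issue, but one must check that the lower barrier near $-1$ and the linear upper comparison are uniform in $t\in[0,T]$ and in the range of admissible $A$, so that a single compact interval works and $P$ is genuinely a self-map. Everything else — global existence via the linear bound $y'\le -y+A$, smoothness of $P$, and the sign of $P'$ — is immediate from the structural hypotheses on $f$ listed in (\ref{ODE_Y}).
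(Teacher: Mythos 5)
Your fixed-point/contraction machinery (Poincar\'e map $P$, $P'(y_0)=\exp\bigl(\int_0^T f'(y(t;y_0))\,dt\bigr)\in(0,1)$, hence uniqueness of the fixed point and global asymptotic stability) is sound and is essentially equivalent to the paper's uniqueness and stability arguments. The gap is in the a priori upper bound that you need to make $P$ a self-map of a compact interval. You claim that the hypotheses on $f$ give $f(y)\le -y$ for $y\ge 0$, hence $y'\le -y+A$; this inequality is backwards. The hypotheses in (\ref{ODE_Y}) make $f$ \emph{convex} ($f''>0$) with $f(0)=0$, $f'(0)=-1$, so $f$ lies \emph{above} its tangent at the origin: $f(y)\ge -y$ for all admissible $y$. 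For model 1 (\ref{Model1}), $f(y)=-y/(1+y)\ge -1$, and for model 2 (\ref{Model2}), $f(y)=-\ln(1+y)\ge-y$; in no case does the linear dissipative bound $y'\le -y+A$ hold, so the claim that every solution enters and remains in $[-1+\eta,\,A+1]$ is unjustified.

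The consequence is that the genuinely delicate regime is not the one you flag. When $f_{\min}=-\infty$ (model 2), the horizontal line $f^{-1}(-A)$ is always defined and is a weak upper fence (above it $f(y)+A\sin\omega t\le -A+A\sin\omega t\le 0$), so a compact flow-invariant interval $[f^{-1}(A),\,f^{-1}(-A)]$ exists and your scheme goes through. The hard case is $f_{\min}$ finite and $A\ge|f_{\min}|$ (e.g.\ model 1 with $A\ge 1$): then $f(y)+A\sin\omega t>0$ for \emph{every} $y$ during the part of the period where $A\sin\omega t>|f_{\min}|$, so no time-independent upper fence exists and no fixed interval is invariant under the flow. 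The paper handles this with a separate argument: for $y>0$ one has $f_{\min}+A\sin\omega t<y'<A\sin\omega t$, and integrating over one period shows that starting from $y_0=2|f_{\min}|T$ the solution stays positive and satisfies $y(T)<y_0$, so the interval $[f^{-1}(A),\,2|f_{\min}|T]$ is invariant under the Poincar\'e map (though not under the flow), which restores compactness and the fixed-point argument. Your proposal needs this (or an equivalent) case distinction; as written, the existence step fails precisely in the large-forcing regime that the paper's Case 3 is designed to cover, while your stated concern about model 2 is the easy case.
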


\begin{proof}
For the sequel, we denote $Y_0=Y(0)$ the unique initial data such that the solution of (\ref{ODE_Y}) is $T$-periodic. Three cases can be distinguished.

\underline{Case 1}: $A=0$. Equation (\ref{ODE_Y}) becomes a scalar autonomous equation with the steady state solution $Y=0$. Since $f$ is a $C^1$ function and $f^{'}(0)=-1$, the fixed point 0 is asymptotically stable. In the case of model 1 (\ref{Model1}), the exact solution is a known special function: $y\,e^y=y_0\,e^{y_0-t}\,\Rightarrow \,y(t)={\cal L}\left(y_0\,e^{y_0-t}\right)$, where ${\cal L}$ is the Lambert function \cite{Lambert96}.

\underline{Case 2}: $0<A<|f_{\min}|$. The $T$-periodic isocline (\ref{IsoclineY}) is defined on $\mathbb{R}$, with lower and upper bounds $f^{-1}(A)<0$ and $f^{-1}(-A)>0$ (figure \ref{FigIsoclineY}-a). Below $I_0^y$, we obtain $\frac{d\,y}{d\,t}>0$; and above $I_0^y$, we obtain $\frac{d\,y}{d\,t}<0$. Consequently, the null slope of the horizontal line $f^{-1}(A)$ is smaller than $F(t,\,y)$ when these curves intersect, except at $t=3\,T/4$, where they are both equal to zero: $f^{-1}(A)$ is called a {\it weak lower fence} for (\ref{ODE_Y}) \cite{Hubbard91}. Similarly, $F(t,\,y)\leq(f^{-1}(-A))^{'}=0$, where equality occurs only at $T/4$: $f^{-1}(-A)$ is called a {\it weak upper fence} for (\ref{ODE_Y}). Since $y\mapsto F(t,\,y)$ is Lipschitz continuous, $f^{-1}(\pm A)$ are {\it nonporous fences}: once a solution has crossed these fences, it cannot sneak through them \cite{Hubbard91}. Based on the {\it funnel's theorem}, the subset of the $(t,\,y)$ plane defined by
$$
K=\mathbb{R}^+ \times K_0,\quad K_0=\left[f^{-1}(A),\,f^{-1}(-A)\right],  
$$
is a funnel \cite{Hubbard91}: once a solution has entered $K$, it stays inside. Defining the subsets 
$$
U^+=\left\{t\geq 0;\,-1<y<f^{-1}(A)\right\},\qquad U^-=\left\{t\geq 0;\,y>f^{-1}(-A)\right\},
$$
the solution of (\ref{ODE_Y}) increases in $U^+$ and decreases in $U^-$ as $t$ increases. In addition, zero slope $\frac{d\,y}{d\,t}=0$ in $U^\pm$ occurs only at $T/4$ and $3\,T/4$. The  solution therefore enters $K$ in finite time, and $K_0$ is invariant under the flow $\varphi:\mathbb{R}^+\times]-1,\,+\infty[\rightarrow ]-1,\,+\infty[;\,(t,\,y_0)\mapsto y(t)=\varphi(t,\,y_0)$. The solutions of (\ref{ODE_Y}) are therefore always bounded, which ensures that the solutions are global solutions in time.

The invariance of $K_0$ ensures $\Pi(K_0)\subset K_0$, with the Poincar\'e map $\Pi:]-1,\,+\infty[\rightarrow]-1,\,+\infty[;\,y_0\mapsto \Pi(y_0)=\varphi(T,\,y_0)$. Since $\Pi$ is continuous on $K$, it has at least one fixed point $Y_0$. The solutions $Y(t)=\varphi(t,\,Y_0)$ and $Y_T(t)=\varphi(t+T,\,Y_0)$ both satisfy the same ODE with the same initial value: $Y_T(0)=\varphi(T,\,Y_0)=\Pi(Y_0)=Y_0=Y(0)$. Therefore $Y(t+T)=Y(t)$ at all values of $t$, which proves that $Y(t)$ is $T$-periodic.

\begin{figure}[htbp]
\begin{center}
\begin{tabular}{cc}
(a) & (b)\\
\includegraphics[scale=0.31]{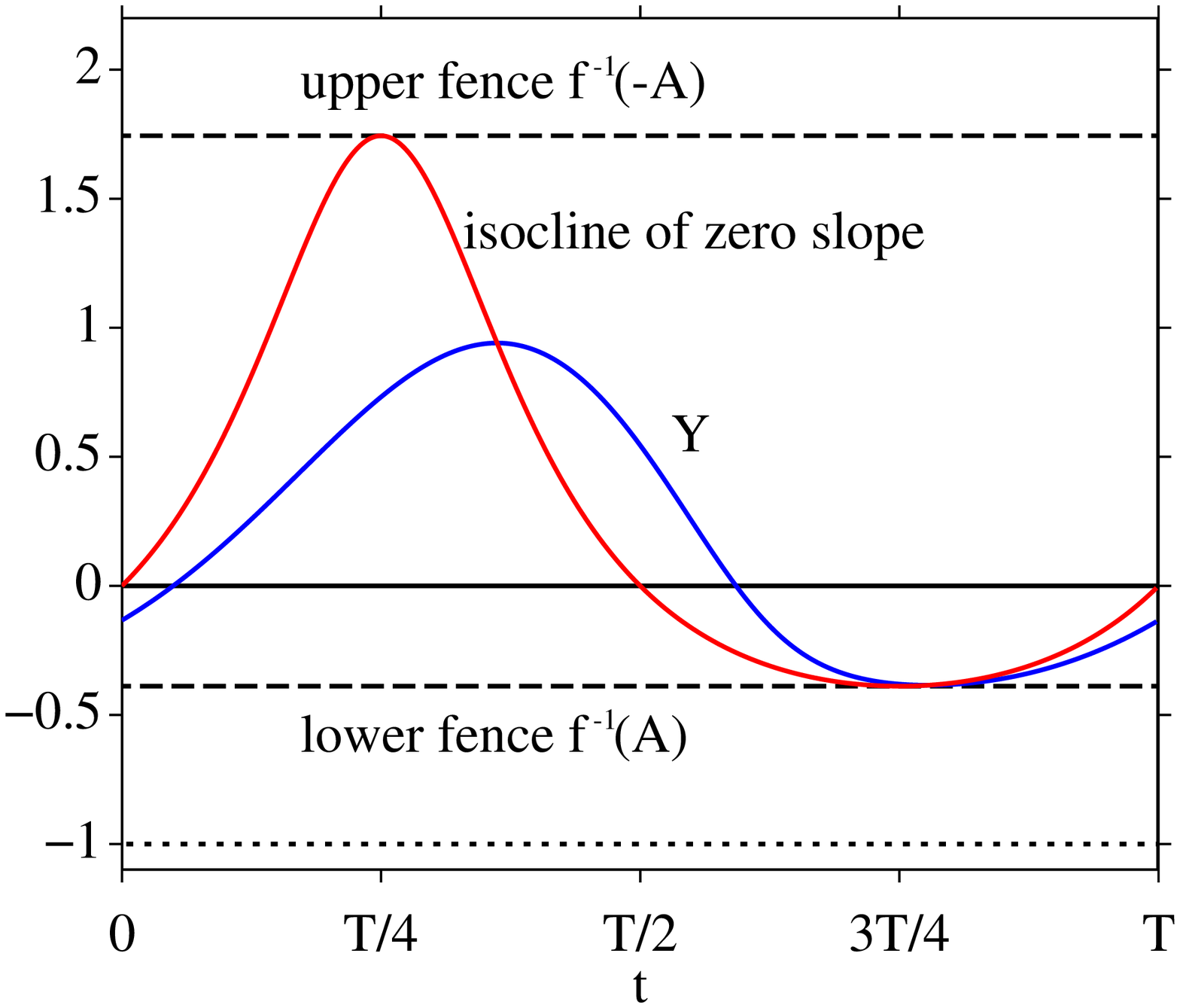}&
\includegraphics[scale=0.31]{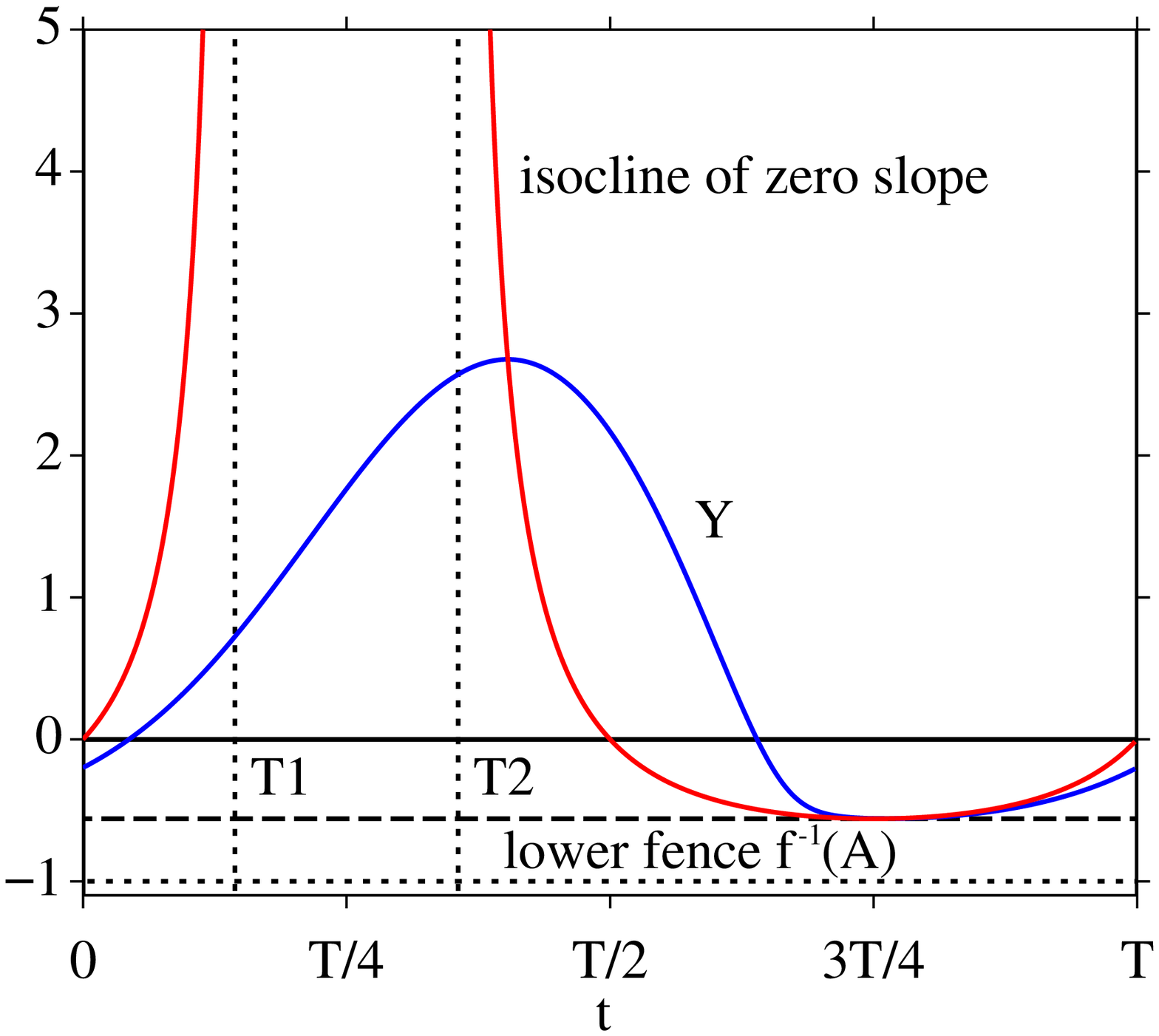}
\end{tabular}
\end{center}
\caption{Phase portrait of (\ref{ODE_Y}) with model 1 (\ref{Model1}), at $A=0.6$ (a) and $A=1.2$ (b): isocline of zero slope $I_0^y$ and periodic solution $Y$. Dashed horizontal lines: lower fence $f^{-1}(A)<0$, upper fence $f^{-1}(-A)>0$ (a). The dotted horizontal line stands for $y=-1$. In (b), the dotted vertical lines stand for asymptotes at $T_1$ and $T_2$.} 
\label{FigIsoclineY}
\end{figure}

Let us consider another $T$-periodic solution ${\tilde Y}(t)$ of (\ref{ODE_Y}), with ${\tilde Y}(0)={\tilde Y}_0$. Assuming that $Y_0>{\tilde Y}_0$ yields $Y(t)>{\tilde Y}(t)$ at all values of $t\in[0,\,T]$, hence $f(Y(t))<f({\tilde Y}(t))$. The integral forms of $Y(T)$ and ${\tilde Y}(T)$, together with the $T$-periodicity, give
\begin{equation}
\begin{array}{lll}
\displaystyle
Y_T-{\tilde Y}_T&=&\displaystyle Y_0-{\tilde Y}_0+\int_0^T\left(f(Y(t))-f({\tilde Y}(t)\right)\,dt<Y_0-{\tilde Y}_0,\\
[8pt]
&=& \displaystyle Y_0-{\tilde Y}_0.
\end{array}
\label{ProofYperiodUnique}
\end{equation}
This results in $Y_0-{\tilde Y}_0<Y_0-{\tilde Y}_0$, which is impossible. The opposite case $Y_0<{\tilde Y}_0$ can be handled in the same way, which proves the uniqueness of the $T$-periodic solution. Lastly, the properties of $f$ in (\ref{ODE_Y}) mean that
\begin{equation}
a_0=\int_0^T\frac{\textstyle \partial\,F}{\textstyle \partial\,y}(t,\,\varphi(t,\,y_0))\,dt=\int_0^Tf^{'}(\varphi(t,\,y_0))\,dt<0,
\label{ProofYperiodA0}
\end{equation}
therefore $Y(t)$ is asymptotically stable (see e.g. theorem 4-22 in \cite{HaleKocak91}).

\underline{Case 3}: $A \geq |f_{\min}|$. This case can occur only if $f_{\min}>-\infty$. The isocline (\ref{IsoclineY}) is defined outside the two vertical asymptotes at $T_1$ and $T_2$
\begin{equation}
0<T_1=\frac{\textstyle 1}{\textstyle \omega}\,\arcsin\left(\frac{\textstyle |f_{\min}|}{\textstyle A}\right)<\frac{\textstyle T}{\textstyle 4}<T_2=\frac{\textstyle T}{\textstyle 2}-T_1.
\label{AsymptotsY}
\end{equation}
(figure \ref{FigIsoclineY}-b). The set $[f^{-1}(A),\,+\infty[$ is invariant under the flow but not compact. To prove the existence of solutions at all time, let us examine the subset of the $(t,\,y)$ plane where the solution can blow up: $\Gamma=\left\{0<t<T/2;\,y>0\right\}$. In $\Gamma$, $y>0$ and $f(y)<0$, hence $\frac{d\,y}{d\,t}<A\,\sin \omega t$. By integrating, we obtain
$$
0<y<y_0+\int_0^{T/2}A\,\sin \omega t\,dt=y_0+\frac{\textstyle 2\,A}{\textstyle \omega}<+\infty, 
$$
which proves that the solution is global in time.

To obtain an invariant set, we take $y>0$, which means that $f_{\min}<f(y)<0$ and thus $f_{\min}+A\,\sin \omega t<\frac{d\,y}{d\,t}<A\,\sin \omega t$. As long as $y>0$, integration will  ensure that $y$ is bounded by a lower solution and an upper solution
$$
y_0+f_{\min}\,t+\frac{\textstyle A}{\textstyle \omega}\,\left(1-\cos \omega t\right)<y(t)<y_0+\frac{\textstyle A}{\textstyle \omega}\,\left(1-\cos \omega t\right).
$$ 
Taking $y_0>|f_{\min}|\,T$, for instance $y_0=2\,|f_{\min}|\,T$, ensures that $y>0$ for all values of $t\in[0,\,T]$ and that $y(T)<y_0$. Therefore, upon defining the subset of the $(t,\,y)$ plane
$$
K=\mathbb{R}^+ \times K_0,\quad K_0=\left[f^{-1}(A),\,2\,|f_{\min}|\,T\right],   
$$
then $K_0$ is invariant under the Poincar\'e map, yielding $\Pi(K_0)\subset K_0$. The same arguments as those given in case 2 hold here, which proves the existence, uniqueness and asymptotic stability of the $T$-periodic solution. 
\qquad\end{proof}\\

As seen in the previous proof, the phase portrait of (\ref{ODE_Y}) depends on $A$. However, the evolution of $Y$ with $t$ is the same at all values of $A$, as seen in figure \ref{FigIsoclineY} and proved in section \ref{SecQualitY}. Some auxiliary solutions are first introduced. These solutions make it possible to investigate how the attractive periodic solution $Y$ evolves with the parameter $A$.


\subsection{Auxiliary solutions}\label{SecAuxiliary}

The first derivative of $Y$ with respect to $A$ is introduced: $Z(t,\,A)=\frac{\partial\,Y}{\partial\,A}$. Applying the chain-rule to (\ref{ODE_Y}) shows that the $T$-periodic solution $Z$ satisfies 
\begin{equation}
\left\{
\begin{array}{l}
\displaystyle
\frac{\textstyle d\,Z}{\textstyle d\,t}=f^{'}(Y)\,Z+\sin \omega t,\\
\displaystyle
Z(0,\,A)=Z_0=\frac{\textstyle d\,Y_0}{\textstyle d\,A}.
\displaystyle
\end{array}
\right.
\label{ODE_Z}
\end{equation}
The isocline of zero slope of (\ref{ODE_Z}) is deduced from (\ref{ODE_Y}):
\begin{equation}
I_0^Z(t)=-\frac{\textstyle \sin \omega t}{\textstyle f^{'}(Y(t))}=\frac{\textstyle \sin \omega t}{\textstyle \left|f^{'}(Y(t))\right|}.
\label{IsoclineZ}
\end{equation}
Below $I_0^Z$, we obtain $\frac{d\,Z}{d\,t}>0$; and above $I_0^Z$, we obtain $\frac{d\,Z}{d\,t}<0$. The phase portrait of (\ref{ODE_Z}) obtained in the case of model 1 is displayed in figure \ref{FigIsoclineZ}, which shows the qualitative properties of $Z$ that will be described in the next lemma and proposition.

\begin{figure}[htbp]
\begin{center}
\begin{tabular}{cc}
(a) & (b)\\
\includegraphics[scale=0.31]{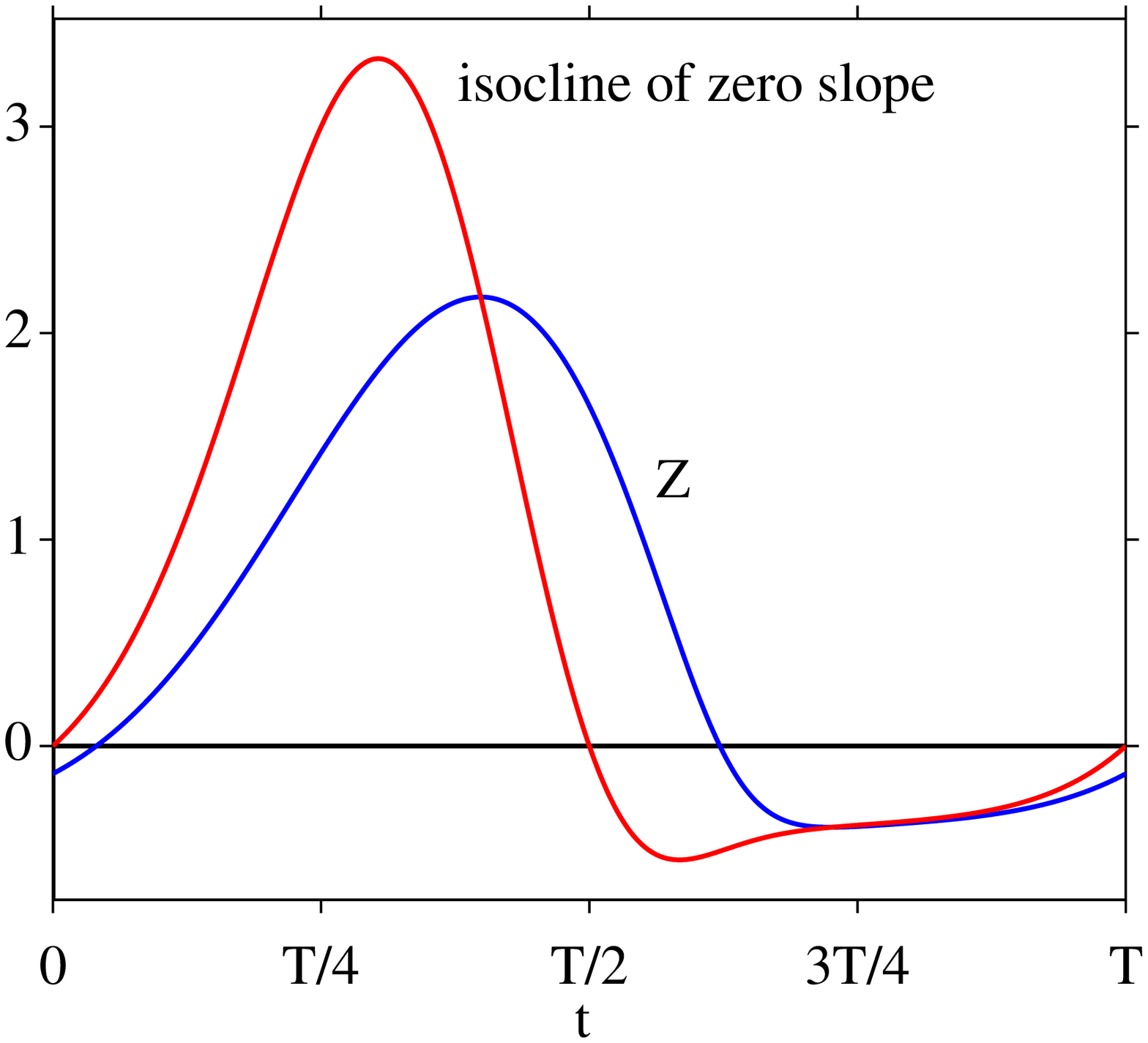}&
\includegraphics[scale=0.31]{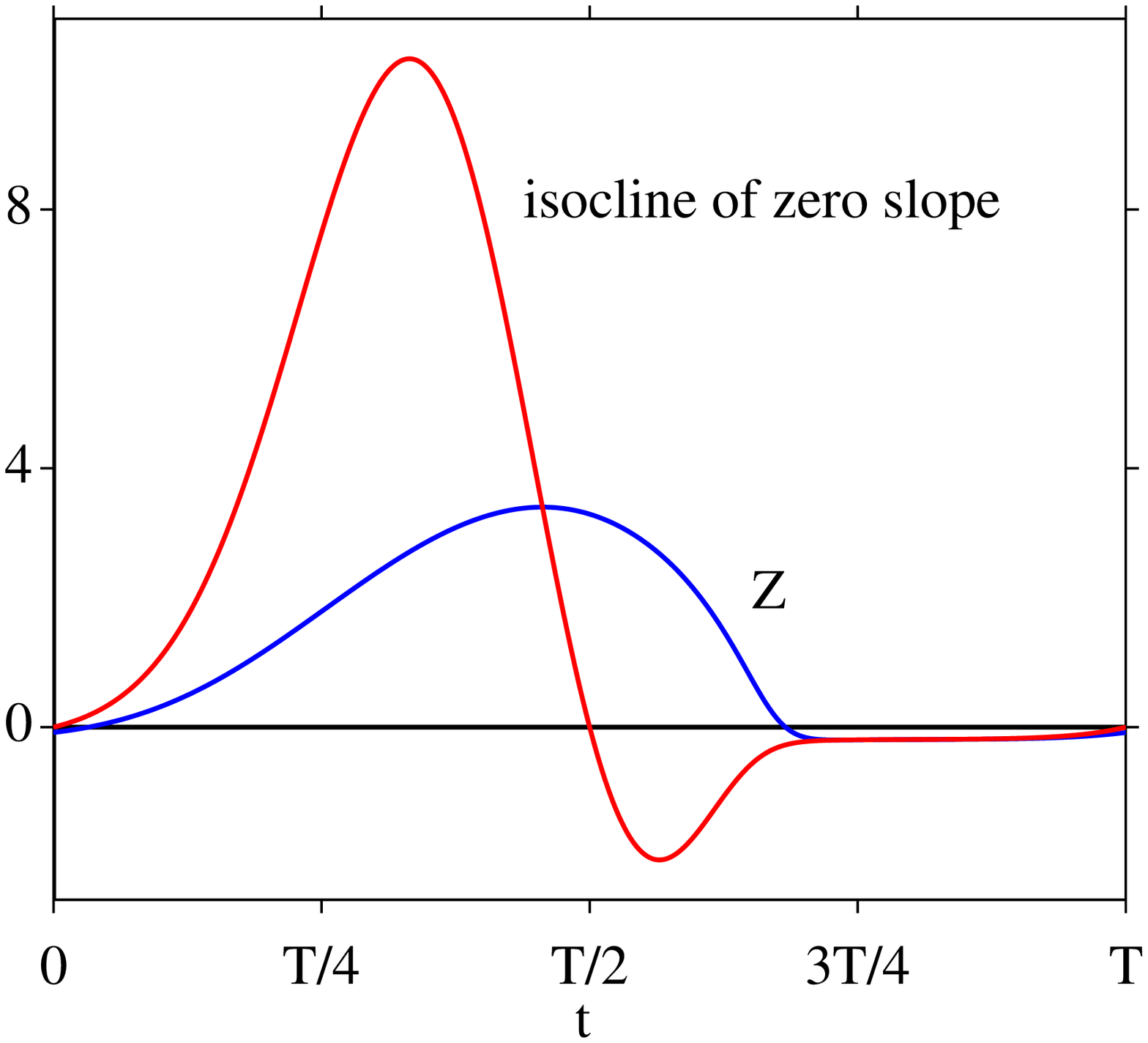}
\end{tabular}
\end{center}
\caption{Phase portrait of (\ref{ODE_Z}) with model 1 (\ref{Model1}), at $A=0.6$ (a) and $A=1.2$ (b): isocline of zero slope $I_0^Z$ and periodic solution $Z$.} 
\label{FigIsoclineZ}
\end{figure}

\begin{lemma}
Setting
$$
B(s)=-\int_0^s f^{'}(Y(\tau))\,d\tau>0,
$$
the initial value of the $T$-periodic solution $Z$ in (\ref{ODE_Z}) satisfies 
\begin{equation}
\displaystyle
Z_0=\frac{\textstyle 1}{\textstyle e^{B(T)}-1}\int_0^Te^{B(s)}\,\sin\omega s\,ds<0.
\label{Z_0}
\end{equation}
\label{LemmaZ0}
\end{lemma}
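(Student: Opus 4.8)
The plan is to use the fact that (\ref{ODE_Z}) is a \emph{linear} first-order equation in $Z$, so its $T$-periodic solution can be written down explicitly by variation of constants, and then to extract the sign from the strict monotonicity of the weight $e^{B(s)}$.

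First I would check that $Z=\partial Y/\partial A$ is well defined and $T$-periodic. Since ${\cal F}$ (hence $f$) is smooth, the flow $\varphi(t,\,y_0)$ and the Poincar\'e map $\Pi$ depend smoothly on $(y_0,\,A)$. By proposition \ref{PropositionYperiod} the fixed point $Y_0$ of $\Pi$ is asymptotically stable, so $\Pi^{'}(Y_0)=e^{a_0}=e^{-B(T)}\in(0,\,1)$; in particular $\Pi^{'}(Y_0)\neq 1$, and the implicit function theorem provides a $C^1$ branch $A\mapsto Y_0(A)$, hence a $C^1$ map $Y(t,\,A)$. Differentiating the identity $Y(t+T,\,A)=Y(t,\,A)$ in $A$ shows $Z(t+T,\,A)=Z(t,\,A)$, and differentiating (\ref{ODE_Y}) in $A$ yields (\ref{ODE_Z}).

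Next I would integrate (\ref{ODE_Z}) with the integrating factor $e^{B(t)}$. Since $f^{'}<0$ we have $B(0)=0$ and $B^{'}(s)=-f^{'}(Y(s))>0$, so $B$ is strictly increasing and $B(T)>0$, whence $e^{B(T)}-1>0$. From $\big(e^{B(t)}Z(t)\big)^{'}=e^{B(t)}\sin\omega t$ and $B(0)=0$ one gets $e^{B(t)}Z(t)=Z_0+\int_0^t e^{B(s)}\sin\omega s\,ds$. Imposing the periodicity $Z(T)=Z(0)=Z_0$ and dividing by $e^{B(T)}-1\neq 0$ gives exactly formula (\ref{Z_0}). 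It then remains to prove that $J=\int_0^T e^{B(s)}\sin\omega s\,ds<0$, which, combined with $e^{B(T)}-1>0$, forces $Z_0<0$.

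For that last step I would split $[0,\,T]$ at $T/2$ and substitute $s\mapsto s+T/2$ in the second half; since $\omega T/2=\pi$, $\sin\omega(s+T/2)=-\sin\omega s$, so
$$
J=\int_0^{T/2}\big(e^{B(s)}-e^{B(s+T/2)}\big)\,\sin\omega s\,ds.
$$
On $(0,\,T/2)$ one has $\sin\omega s>0$, while $e^{B(s)}-e^{B(s+T/2)}<0$ because $B$ is strictly increasing; the integrand is therefore strictly negative on $(0,\,T/2)$, so $J<0$ and $Z_0<0$. The computation is essentially routine: the only points needing care are the justification in the first step that $Z=\partial Y/\partial A$ exists as a $T$-periodic function (which relies on the hyperbolicity of $Y_0$ from proposition \ref{PropositionYperiod}) and the pairing trick above for the sign, which is the one genuinely non-mechanical part.
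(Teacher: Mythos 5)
Your proposal is correct and follows essentially the same route as the paper: solve the linear equation (\ref{ODE_Z}) with the integrating factor $e^{B(t)}$, impose $T$-periodicity to obtain (\ref{Z_0}), and use the strict monotonicity of $e^{B(s)}$ together with the sign pattern of $\sin\omega s$ over the two half-periods to conclude $Z_0<0$. Your pairing of $s$ with $s+T/2$ is just a cosmetic variant of the paper's comparison of $e^{B(s)}$ with $e^{B(T/2)}$ on each half-period, and the added justification of the existence and $T$-periodicity of $Z=\partial Y/\partial A$ via the hyperbolicity of $Y_0$ is a welcome (correct) supplement that the paper leaves implicit.
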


\begin{proof}
Given $Y(t)$, the ODE (\ref{ODE_Z}) is linear, which leads to the closed-form expression
$$
Z(t)=\left(D+\int_0^Te^{B(s)}\,\sin \omega s\,ds\right)\,e^{-B(t)}.
$$
$T$-periodicity of $Z$ gives $D$ and proves the equality established in (\ref{Z_0}). Since $s\mapsto e^{B(s)}$ is a strictly positive and increasing function of $s$, we obtain the following bound
$$
\int_0^Te^{B(s)}\,\sin \omega s\,ds< \displaystyle e^{B(T/2)}\int_0^{T/2}\sin \omega s\,ds+e^{B(T/2)}\int_{T/2}^T\sin \omega s\,ds=0.
$$
Substituting this inequality into (\ref{Z_0}) proves that $Z_0<0$.
\qquad\end{proof}\\

\begin{proposition}
In $[0,\,T]$, the $T$-periodic solution $Z$ of (\ref{ODE_Z}) has two roots $t_{Z_1}$ and $t_{Z_2}$, which are ordered as follows:
$$
0<t_{Z_1}<T/2<t_{Z_2}<T.
$$
Therefore $Z<0$ in $[0,\,t_{Z_1}[\cup ]t_{Z_2}\,T]$, and $Z>0$ on $]t_{Z_1},\,t_{Z_2}[$. In the limit of null-forcing case $A=0$, $Z$ and its roots are determined analytically:
\begin{equation}
\left\|
\begin{array}{l}
\displaystyle
Z(t,\,0)=\frac{\textstyle 1}{\textstyle \sqrt{1+\omega^2}}\,\sin(\omega t-\theta),\qquad\theta=\arctan \omega,\\
\\
\displaystyle
t_{Z_1}(0)=\frac{\textstyle \theta}{\textstyle \omega},\qquad t_{Z_2}(0)=\frac{\textstyle \theta}{\textstyle \omega}+\frac{\textstyle T}{\textstyle 2}.
\label{Zeps0}
\end{array}
\right.
\end{equation}
\label{PropositionTz}
\end{proposition}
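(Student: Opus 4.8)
The plan is to exploit the closed form of $Z$ obtained in the proof of Lemma \ref{LemmaZ0}. Writing $g(t)=Z_0+\int_0^t e^{B(s)}\,\sin\omega s\,ds$, the solution of the linear equation (\ref{ODE_Z}) is $Z(t)=e^{-B(t)}\,g(t)$, and since $e^{-B(t)}>0$ the roots of $Z$ in $[0,T]$ are exactly those of $g$. Now $g^{'}(t)=e^{B(t)}\,\sin\omega t$ has the sign of $\sin\omega t$, so $g$ is strictly increasing on $[0,T/2]$ and strictly decreasing on $[T/2,T]$; hence $\max_{[0,T]}g=g(T/2)$, and $g$ has at most one root in $]0,T/2[$ and at most one in $]T/2,T[$. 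Moreover $g(0)=Z_0<0$ by (\ref{Z_0}), and using the identity $\int_0^T e^{B(s)}\sin\omega s\,ds=Z_0\,(e^{B(T)}-1)$ that follows from (\ref{Z_0}) one gets $g(T)=Z_0\,e^{B(T)}<0$; in particular neither $0$ nor $T$ is a root.

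The key remaining point, and the step I expect to be the main obstacle, is to show that $g(T/2)>0$, i.e. that $Z$ is not everywhere nonpositive. For this I would integrate (\ref{ODE_Z}) over one period: $T$-periodicity of $Z$ together with $\int_0^T\sin\omega t\,dt=0$ gives $\int_0^T f^{'}(Y(t))\,Z(t)\,dt=0$. Since $f^{'}(Y(t))<0$ for all $t$ and $Z\not\equiv 0$ (because $Z_0<0$), $Z$ must take strictly positive values somewhere; were $g(T/2)\le 0$, then $g\le 0$ on all of $[0,T]$ (its maximum being at $T/2$), hence $Z\le 0$ on $[0,T]$, a contradiction. Combining $g(T/2)>0$ with the monotonicity of $g$ and the intermediate value theorem, $g$, and hence $Z$, has exactly one root $t_{Z_1}\in\,]0,T/2[$ and exactly one root $t_{Z_2}\in\,]T/2,T[$, with $g<0$ on $[0,t_{Z_1}[$, $g>0$ on $]t_{Z_1},t_{Z_2}[$, and $g<0$ on $]t_{Z_2},T]$. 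This gives the ordering $0<t_{Z_1}<T/2<t_{Z_2}<T$ and the stated sign pattern for $Z$.

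For the limiting case $A=0$, the periodic solution of (\ref{ODE_Y}) is $Y\equiv 0$, so $f^{'}(Y)=f^{'}(0)=-1$ and (\ref{ODE_Z}) reduces to the constant-coefficient equation $Z^{'}=-Z+\sin\omega t$; its unique $T$-periodic solution is found by the ansatz $Z=a\sin\omega t+b\cos\omega t$ (equivalently from the closed form of Lemma \ref{LemmaZ0} with $B(s)=s$), giving $Z(t,0)=\frac{1}{\sqrt{1+\omega^2}}\sin(\omega t-\theta)$ with $\theta=\arctan\omega\in\,]0,\pi/2[$. Its zeros on $[0,T[$ are $\omega t-\theta\in\{0,\pi\}$, i.e. $t_{Z_1}(0)=\theta/\omega$ and $t_{Z_2}(0)=\theta/\omega+\pi/\omega=\theta/\omega+T/2$; since $0<\theta<\pi/2$ one has $0<t_{Z_1}(0)<T/4<T/2<t_{Z_2}(0)<3T/4<T$, in agreement with the general statement, and $Z(0,0)=-\sin\theta/\sqrt{1+\omega^2}<0$ recovers the sign $Z_0<0$ of Lemma \ref{LemmaZ0}.

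Everything except the positivity $g(T/2)>0$ is elementary monotonicity plus the intermediate value theorem; the one genuine ingredient is the period-integral identity $\int_0^T f^{'}(Y)\,Z\,dt=0$, which converts the sign condition $f^{'}(Y)<0$ into the fact that the $T$-periodic solution $Z$ necessarily changes sign.
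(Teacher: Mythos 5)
Your proof is correct, and it takes a somewhat different route from the paper's. The paper shares your two key ingredients — $Z_0<0$ from lemma \ref{LemmaZ0} and the period-average identity $\int_0^T f^{'}(Y)\,Z\,dt=0$ forcing a sign change — but it localizes the roots by a qualitative phase-portrait argument: the isocline $I_0^Z$ in (\ref{IsoclineZ}) has the sign of $\sin\omega t$, and the authors rule out $t_{Z_1}\geq T/2$ and push $t_{Z_2}$ past $T/2$ by tracking how the trajectory can cross that isocline. You instead exploit the integrating-factor representation $Z=e^{-B}g$ with $g(t)=Z_0+\int_0^t e^{B(s)}\sin\omega s\,ds$, so that $g^{'}$ has the explicit sign of $\sin\omega t$; combined with $g(0)=Z_0<0$, $g(T)=Z_0e^{B(T)}<0$ (both from (\ref{Z_0})) and your contradiction argument giving $g(T/2)>0$, the root count and the ordering $0<t_{Z_1}<T/2<t_{Z_2}<T$ follow from monotonicity and the intermediate value theorem alone. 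What your version buys is rigor and economy: you get \emph{exactly} two roots and the full sign pattern without any appeal to the figure or to fence/funnel language, and every step is an elementary computation available because (\ref{ODE_Z}) is linear in $Z$. What the paper's version buys is uniformity of method: the same isocline-crossing reasoning is reused later for $Y$ and $W$, where the equations are not linear and no closed form is available, so the phase-portrait style is the one that generalizes within the paper. Your treatment of the $A=0$ case coincides with the paper's (constant-coefficient equation $Z^{'}=-Z+\sin\omega t$ and the explicit periodic solution (\ref{Zeps0})).
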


\begin{proof}
$T$-periodicity of $Z$ and integration of (\ref{ODE_Z}) over one period gives
$$
\int_0^T f^{'}(Y(\tau))\,Z(\tau)\,d\tau=0.
$$
Since $f^{'}<0$, the sign of $Z$ changes in $[0,\,T]$. From lemma \ref{LemmaZ0} and the phase portrait of (\ref{ODE_Z}), it can be deduced that $Z$ must cross the $t$-axis twice, at $t_{Z_1}$ and $t_{Z_2}$ in $]0,\,T[$. Let us assume $t_{Z_1}\geq T/2$, where the isocline (\ref{IsoclineZ}) is negative (figure \ref{FigIsoclineZ}). 
Then $Z$ will intersect $I_0^Z$ twice at negative values and will never cross the $t$-axis, which is impossible. After $t_{Z_1}$, $Z$ increases and intersects $I_0^Z$ with a zero slope at $t_{Z_1}<t<T/2$. It then decreases but cannot intersect again $I_0^Z$ as long as $I_0^Z>0$ i.e. at $t<T/2$. This means that $t_{Z_2}>T/2$.

Taking $A=0$ results in $Y=0$, as stated in the proof of proposition \ref{PropositionYperiod} (case 1), and hence (\ref{ODE_Z}) becomes the linear ODE
$$
\displaystyle Z^{'}=-Z+\sin \omega t,
$$
the $T$-periodic solution of which is given in (\ref{Zeps0}). 
\qquad\end{proof}\\

\begin{figure}[htbp]
\begin{center}
\begin{tabular}{cc}
(a) & (b)\\
\includegraphics[scale=0.31]{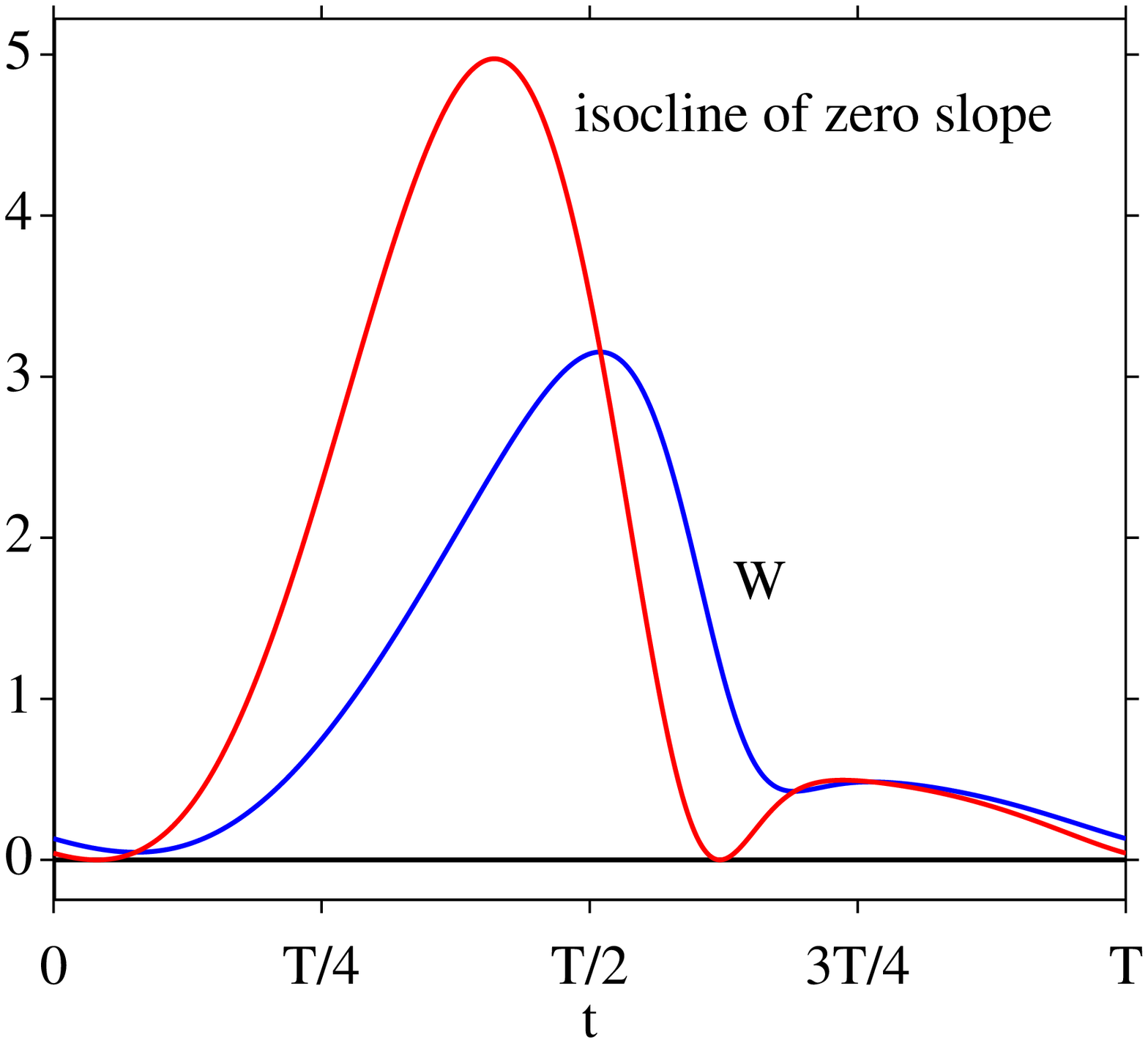}&
\includegraphics[scale=0.31]{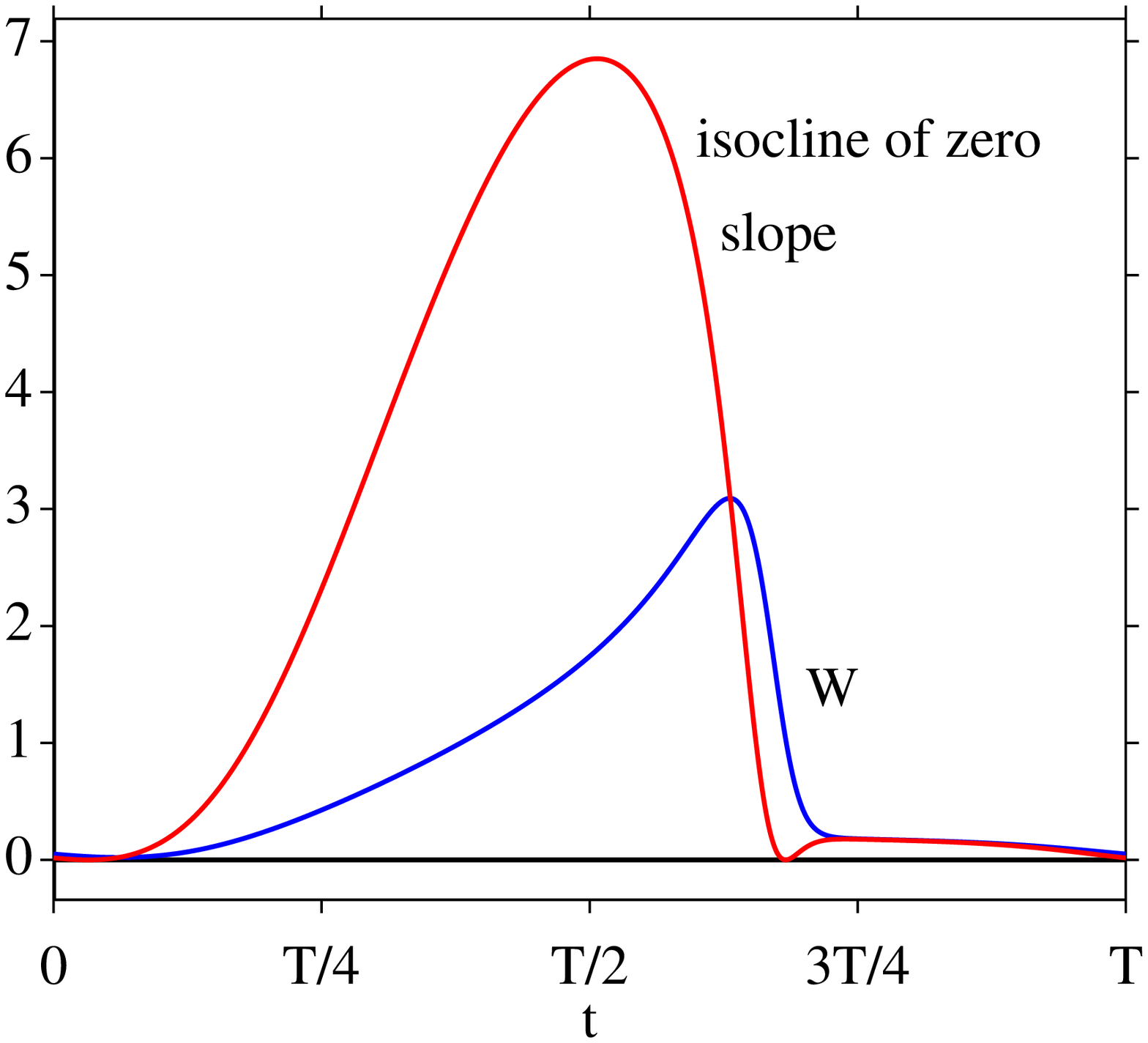}
\end{tabular}
\end{center}
\caption{Phase portrait of (\ref{ODE_W}) with model 1 (\ref{Model1}), at $A=0.6$ (a) and $A=1.2$ (b): isocline of zero slope $I_0^W$ and periodic solution $W$.} 
\label{FigIsoclineW}
\end{figure}

The second derivative of $Y$ with respect to $A$ is also introduced: $W(t,\,A)=\frac{\partial^2\,Y}{\partial\,A^2}$. Applying the chain-rule to (\ref{ODE_Z}) shows that the $T$-periodic solution $W$ satisfies
\begin{equation}
\left\{
\begin{array}{l}
\displaystyle
\frac{\textstyle d\,W}{\textstyle d\,t}=f^{''}(Y)\,Z^2+f^{'}(Y)\,W,\\
\displaystyle
W(0,\,A)=W_0=\frac{\textstyle d^2\,Y_0}{\textstyle d\,A^2}.
\displaystyle
\end{array}
\right.
\label{ODE_W}
\end{equation}
The isocline of zero slope of $W$, which is deduced from (\ref{ODE_Y}) and (\ref{ODE_W}), is 
\begin{equation}
I_0^W(t)=\frac{\textstyle f^{''}(Y(t))}{\textstyle \left|f^{'}(Y(t))\right|}\,Z^2(t)\geq 0.
\label{IsoclineW}
\end{equation}
This isocline, which is shown in figure \ref{FigIsoclineW}, is equal to zero only at $t_{Z_1}$ and $t_{Z_2}$, when $Z$ vanishes (proposition \ref{PropositionTz}). Below $I_0^W$, we obtain $\frac{d\,W}{d\,t}>0$; and above $I_0^W$, we obtain $\frac{d\,W}{d\,t}<0$.

\begin{proposition}
The $T$-periodic solution $W$ of (\ref{ODE_W}) is strictly positive: for all $t\in[0,\,T]$, $W(t,\,A)>0$.
\label{PropositionW}
\end{proposition}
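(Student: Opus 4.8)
The plan is to use the same ``fence / phase‑portrait'' machinery that established Propositions \ref{PropositionYperiod}--\ref{PropositionTz}, now applied to the linear nonautonomous equation (\ref{ODE_W}) for $W$. Since $f'(Y)<0$ and $f''(Y)>0$, equation (\ref{ODE_W}) is of the form $W'=a(t)W+g(t)$ with $a(t)=f'(Y(t))<0$ and forcing $g(t)=f''(Y(t))Z^2(t)\ge 0$, the latter vanishing only at $t_{Z_1},t_{Z_2}$. First I would write the closed‑form solution exactly as in Lemma \ref{LemmaZ0}: with $B(s)=-\int_0^s f'(Y(\tau))\,d\tau>0$ one gets, by imposing $T$‑periodicity,
$$
W_0=\frac{1}{e^{B(T)}-1}\int_0^T e^{B(s)}\,f''(Y(s))\,Z^2(s)\,ds .
$$
The integrand is nonnegative and not identically zero, and $e^{B(T)}-1>0$, so $W_0>0$ immediately.

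Next I would propagate positivity forward in time. The cleanest route is the variation‑of‑constants representation
$$
W(t)=e^{-B(t)}\Bigl(W_0+\int_0^t e^{B(s)}\,f''(Y(s))\,Z^2(s)\,ds\Bigr),
$$
which is a product of the strictly positive factor $e^{-B(t)}$ with a quantity that starts at $W_0>0$ and is nondecreasing in $t$ (its derivative is $e^{B(s)}f''(Y)Z^2\ge 0$). Hence $W(t)>0$ for all $t\in[0,T]$, and by $T$‑periodicity for all $t$. This is essentially the whole argument; alternatively one can phrase it with fences: since $I_0^W(t)\ge 0$ with equality only at $t_{Z_1},t_{Z_2}$, the line $W\equiv 0$ is a weak lower fence (below it $dW/dt=g(t)\ge 0$, strictly positive except at two instants), it is nonporous because $W\mapsto F(t,W)$ is Lipschitz, and $W_0>0$ lies strictly above it; a nonporous weak lower fence cannot be crossed downward, so $W$ stays positive.

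I do not anticipate a genuine obstacle here: both the explicit‑integral argument and the fence argument are short, and the only facts needed are $f'(Y)<0$, $f''(Y)>0$ (from (\ref{ODE_Y})), the sign structure of $I_0^W$ in (\ref{IsoclineW}), and $Z\not\equiv 0$ (Proposition \ref{PropositionTz}). The one point deserving a word of care is that $Z^2$ really is not identically zero on $[0,T]$ — but this is guaranteed by Proposition \ref{PropositionTz}, which locates exactly two simple zeros of $Z$ — so $g(t)=f''(Y)Z^2$ is positive on a set of full measure, making the integral defining $W_0$ strictly positive. If one wants to avoid even invoking the closed form, the fence formulation needs only that $Z$ is not identically zero to conclude $W$ cannot stay at $0$, and then monotonicity of the funnel does the rest. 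I would present the explicit‑integral version as the main proof since it parallels Lemma \ref{LemmaZ0} verbatim.
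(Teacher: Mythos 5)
Your proposal is correct. It differs from the paper's proof mainly in how each half is executed. For the sign of $W_0$, the paper argues by contradiction from the phase portrait of (\ref{ODE_W}) (if $W_0\leq 0$, the nonnegative isocline $I_0^W$ forces $W(T)>W_0$, contradicting periodicity), whereas you derive the closed-form periodic initial value $W_0=\bigl(e^{B(T)}-1\bigr)^{-1}\int_0^T e^{B(s)}f^{''}(Y(s))\,Z^2(s)\,ds$ exactly as in Lemma \ref{LemmaZ0} and read off positivity from the sign of the integrand (for which, as you note, only $Z\not\equiv 0$ is needed, so there is no circularity in citing Proposition \ref{PropositionTz}). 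For propagating positivity, the paper compares $W$ with the homogeneous lower solution $w^{'}=f^{'}(Y)\,w$, $0<w(0)<W_0$, which stays positive; your variation-of-constants representation $W(t)=e^{-B(t)}\bigl(W_0+\int_0^t e^{B(s)}f^{''}(Y)Z^2\,ds\bigr)\geq e^{-B(t)}W_0$ is the same comparison made explicit, and your alternative fence formulation ($W\equiv 0$ as a nonporous weak lower fence) is the same idea in the Hubbard--West language the paper already uses. What your version buys is an explicit formula for $W_0$ and a quantitative pointwise lower bound, replacing the slightly informal ``the phase portrait entails'' step; what the paper's buys is brevity and reuse of the fence/lower-solution machinery already set up, which is the form invoked again in the proof of Theorem \ref{TheoremYbarGene} (your argument transfers there just as well, since it only uses $f^{'}(Y)<0$, $f^{''}(Y)>0$ and $Z\not\equiv 0$).
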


\begin{proof}
If $W_0\leq 0$, the phase portrait of (\ref{ODE_W}) entails that $W(T)>W_0$, which is contradictory with the $T$-periodicity of $W$: hence, $W_0>0$. We then consider 
$$
\left\{
\begin{array}{l}
w^{'}=f^{'}(Y)\,w,\\
[6pt]
0<w(0)<W_0.
\end{array}
\right.
$$
Since $f{''}>0$ and $W_0>w(0)$, $w$ is a lower solution of $W$, hence $W(t)>w(t)$ at all $t$. However, the $T$-periodic lower solution is $w=0$, which completes the proof.
\qquad\end{proof}\\

\begin{lemma}
In the limit case of null forcing, the $T$-periodic solution $W$ of (\ref{ODE_W}) contains only the harmonics $\sin 2\omega t$ and $\cos 2\omega t$, and has a non-null mean value 
\begin{equation}
\overline{W}(0)=\frac{\textstyle f^{''}(0)}{\textstyle 2}\,\frac{\textstyle 1}{\textstyle 1+\omega^2}.
\label{Weps0}
\end{equation}
\label{LemmaW0}
\end{lemma}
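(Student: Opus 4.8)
The plan is to exploit the drastic simplification that occurs for $A=0$. By case 1 of the proof of proposition \ref{PropositionYperiod}, $A=0$ forces $Y\equiv 0$, so that $f^{'}(Y(t))=f^{'}(0)=-1$ and $f^{''}(Y(t))=f^{''}(0)$ are \emph{constant}. Moreover, proposition \ref{PropositionTz} gives the $T$-periodic solution $Z$ explicitly, $Z(t,\,0)=(1+\omega^2)^{-1/2}\sin(\omega t-\theta)$ with $\theta=\arctan\omega$. Substituting these into (\ref{ODE_W}) turns it into the constant-coefficient linear equation $W^{'}=-W+f^{''}(0)\,Z^2(t,\,0)$.

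First I would linearize the forcing term. The double-angle identity gives $Z^2(t,\,0)=\frac{1}{2(1+\omega^2)}\bigl(1-\cos(2\omega t-2\theta)\bigr)$, a constant plus a pure second harmonic; expanding $\cos(2\omega t-2\theta)$ in $\cos 2\omega t$ and $\sin 2\omega t$ shows that the right-hand side of the $W$-equation lies in the three-dimensional space $E=\mathrm{span}\{1,\,\cos 2\omega t,\,\sin 2\omega t\}$. The operator $\frac{d}{dt}+\mathrm{id}$ maps $E$ into itself and is invertible on $E$, since its matrix in the basis $\{1,\,\cos 2\omega t,\,\sin 2\omega t\}$ has determinant $1+4\omega^2\neq 0$. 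Hence the unique $T$-periodic solution of the linear equation --- uniqueness following, as in lemma \ref{LemmaZ0}, from the integrating-factor formula $W(t)=e^{-t}\bigl(C+f^{''}(0)\int_0^te^s\,Z^2(s,\,0)\,ds\bigr)$ with $C$ fixed by $W(T)=W(0)$ --- belongs to $E$. This proves that $W(t,\,0)$ contains, besides its mean value, only the harmonics $\cos 2\omega t$ and $\sin 2\omega t$.

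It then remains to compute $\overline{W}(0)$. Integrating the $W$-equation over $[0,\,T]$ and using $\overline{W^{'}}=0$ (by $T$-periodicity) yields $\overline{W}=f^{''}(0)\,\overline{Z^2(\cdot,\,0)}$. The mean of $\sin^2(\omega t-\theta)$ over a full period is $1/2$, so $\overline{Z^2(\cdot,\,0)}=\frac{1}{2(1+\omega^2)}$, whence $\overline{W}(0)=\frac{f^{''}(0)}{2}\,\frac{1}{1+\omega^2}$, which is (\ref{Weps0}). Since $f^{''}>0$, this value is strictly positive, hence in particular non-null.

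There is no genuine obstacle here: every step is an elementary computation, the only point deserving a word of care being that the $T$-periodic solution of the linearized $W$-equation is unique and stays in the finite-dimensional trigonometric subspace $E$, which is immediate from the explicit integrating-factor formula.
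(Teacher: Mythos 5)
Your proof is correct and follows essentially the same route as the paper: substitute $Y\equiv 0$ and the explicit $Z(t,\,0)$ from (\ref{Zeps0}) into (\ref{ODE_W}) to get $W^{'}=-W+f^{''}(0)\,Z^2(t,\,0)$, then integrate over one period to obtain $\overline{W}(0)=f^{''}(0)\,\overline{Z^2}=\frac{f^{''}(0)}{2\,(1+\omega^2)}$. The only difference is that you spell out the harmonic-content claim (invertibility of $\frac{d}{dt}+\mathrm{id}$ on $\mathrm{span}\{1,\,\cos 2\omega t,\,\sin 2\omega t\}$ plus uniqueness of the periodic solution), which the paper leaves implicit.
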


\begin{proof}
We inject $Y=0$ into (\ref{ODE_Y}) and (\ref{ODE_W}). Using (\ref{Zeps0}) gives
$$
W^{'}(t,\,0)=\frac{\textstyle f^{''}(0)}{\textstyle 1+\omega^2}\,\sin^2(\omega t-\theta)-W(t,\,0).
$$
Integrating the latter equation over $[0,\,T]$ gives the result.
\qquad\end{proof}\\

Lastly, the third derivative of $Y$ with respect to $A$ is introduced: $X(t,\,A)=\frac{\partial^3\,Y}{\partial\,A^3}$. Applying the chain-rule to (\ref{ODE_W}) shows that the $T$-periodic solution $X$ satisfies
\begin{equation}
\left\{
\begin{array}{l}
\displaystyle
\frac{\textstyle d\,X}{\textstyle d\,t}=f^{'''}(Y)\,Z^3+3\,f^{''}(Y)\,Z\,W+f^{'}(Y)\,X,\\
\displaystyle
X(0,\,A)=X_0=\frac{\textstyle \partial^3\,Y_0}{\textstyle \partial\,A^3}.
\displaystyle
\end{array}
\right.
\label{ODE_X}
\end{equation}

\begin{lemma}
In the limit case of null forcing, the mean value of the $T$-periodic solution $X$ of (\ref{ODE_X}) is $\overline{X}(0)=0$.
\label{LemmaX0}
\end{lemma}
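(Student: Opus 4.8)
The plan is to imitate the proofs of Lemma \ref{LemmaW0} and Proposition \ref{PropositionTz}: set $A=0$, use $Y(t,0)=0$ (case 1 of Proposition \ref{PropositionYperiod}), substitute the known closed forms for $Z(t,0)$ from (\ref{Zeps0}) and $W(t,0)$ from Lemma \ref{LemmaW0} into the $X$-equation (\ref{ODE_X}), integrate over one period, and exploit $T$-periodicity of $X$ together with orthogonality of the trigonometric harmonics. Since $f^{(k)}(0)$ are all constants when $Y\equiv 0$, (\ref{ODE_X}) becomes the linear nonautonomous ODE
$$
X'(t,0) = -X(t,0) + f^{'''}(0)\,Z^3(t,0) + 3\,f^{''}(0)\,Z(t,0)\,W(t,0),
$$
using $f^{'}(0)=-1$. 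Integrating over $[0,T]$ and using $\int_0^T X'\,dt = X(T,0)-X(0,0)=0$ by periodicity gives
$$
\overline{X}(0) = f^{'''}(0)\,\overline{Z^3}(0) + 3\,f^{''}(0)\,\overline{Z\,W}(0),
$$
so it suffices to show that each of these two mean values vanishes.

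The key observation is a parity/harmonic-content argument. From (\ref{Zeps0}), $Z(t,0)=\frac{1}{\sqrt{1+\omega^2}}\sin(\omega t-\theta)$ is a pure first harmonic, so $Z^3$ is a combination of $\sin(\omega t-\theta)$ and $\sin(3\omega t-3\theta)$ — odd harmonics only, hence $\overline{Z^3}(0)=0$ (more concretely, $\sin^3\phi = \tfrac34\sin\phi - \tfrac14\sin 3\phi$ has zero mean over a period). For the second term, Lemma \ref{LemmaW0} tells us $W(t,0)$ contains only the harmonics $\sin 2\omega t$ and $\cos 2\omega t$ plus the constant $\overline{W}(0)$; equivalently $W(t,0)$ is built from $\cos(2\omega t - 2\theta)$, $\sin(2\omega t-2\theta)$ and a constant. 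Then $Z\,W$ is a product of a first-harmonic function with a (constant $+$ second-harmonic) function, so it expands into first and third harmonics only — no constant term — whence $\overline{Z\,W}(0)=0$. Combining, $\overline{X}(0)=0$.

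The only mildly delicate point is to be sure that $W(t,0)$ really has \emph{no} zeroth-harmonic contribution hiding beyond the $\sin 2\omega t,\cos 2\omega t$ already identified, and that the product $Z\,W$ genuinely produces no DC term: this is just the elementary fact that $\sin\phi\cdot\cos 2\phi$, $\sin\phi\cdot\sin 2\phi$ and $\sin\phi\cdot 1$ all have zero mean over $[0,2\pi]$ (the frequencies $1\pm 2 = \{1,3\}$ never hit $0$). So I expect no real obstacle; the proof is a two-line harmonic-balance computation once (\ref{Zeps0}) and Lemma \ref{LemmaW0} are invoked. An even slicker route, avoiding the explicit form of $W$ entirely: write $\overline{ZW}(0)$ using the $Z$-equation $Z' = -Z+\sin\omega t$ to replace $\sin\omega t$, and the $W$-equation, but the direct harmonic count above is shortest and is the one I would write down.
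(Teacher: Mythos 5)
Your proposal is correct and follows essentially the same route as the paper: substitute $Y(t,0)=0$, integrate (\ref{ODE_X}) over one period using $T$-periodicity to reduce to $\overline{X}(0)=f^{'''}(0)\,\overline{Z^3}(0)+3\,f^{''}(0)\,\overline{Z\,W}(0)$, and kill both means via the harmonic content of $Z(t,0)$ from (\ref{Zeps0}) and of $W(t,0)$ from Lemma \ref{LemmaW0}. Your explicit harmonic bookkeeping ($\sin^3\phi$ has only odd harmonics; first harmonic times constant-plus-second harmonic has no DC term) just spells out what the paper asserts more tersely.
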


\begin{proof}
Upon substituting $Y=0$ into (\ref{ODE_Y}), then (\ref{ODE_X}) becomes
$$
X^{'}(t,\,0)=f^{'''}(0)\,Z^3(t,\,0)+3\,f^{''}(0)\,Z(t,\,0)\,W(t,\,0)-X(t,\,0).
$$
Integrating this differential equation over $[0,\,T]$ and the $T$-periodicity of $X$ give
$$
\overline{X}(0)=f^{'''}(0)\,\overline{Z(t,\,0)^3}+3\,f^{''}(0)\,\overline{Z(t,\,0)\,W(t,\,0)}.
$$
In (\ref{Zeps0}), $Z(t,\,0)$ is a sinusoidal function with a null mean value, hence $\overline{Z(t,0)^3}=0$. Likewise, (\ref{Zeps0}) and the properties of $W(t,\,0)$ stated in lemma \ref{LemmaW0} give $\overline{Z(t,\,0)\,W(t,\,0)}=0$, which proves the result obtained.
\qquad\end{proof}

\vspace{1cm}


\section{Qualitative properties of $Y$}\label{SecQualitY}

\subsection{Phase portrait of $Y$}

The properties of $Z$ and $W$ provide qualitative insight about $Y$, stated in the next propositions and corollary.

\begin{figure}[htbp]
\begin{center}
\begin{tabular}{c}
\includegraphics[scale=0.31]{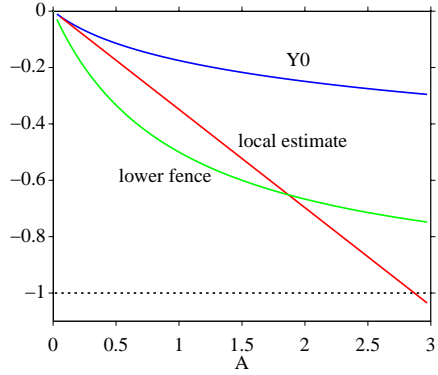}
\end{tabular}
\end{center}
\caption{Evolution of $Y_0$ with the forcing amplitude $A$ in the case of model 1 (\ref{Model1}). See proposition \ref{PropositionY0}.} 
\label{FigY0}
\end{figure}

\begin{proposition}
The initial value $Y_0$ in (\ref{ODE_Y}) is a strictly decreasing convex function of $A$, which satisfies 
$$
-1<\max\left(f^{-1}(A),\,-\frac{\textstyle A\,\omega}{\textstyle 1+\omega^2}\right) \leq Y_0(A) \leq 0.
$$
Equality $Y_0=0$ occurs only at $A=0$. At low forcing levels, we obtain the local estimate
$$
Y_0=-\frac{\textstyle A\,\omega}{\textstyle 1+\omega^2}+{\cal O}(A^2).
$$
\label{PropositionY0}
\end{proposition}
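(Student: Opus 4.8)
The plan is to derive each claim about $Y_0(A)$ from the auxiliary solutions $Z$ and $W$ established in Section~\ref{SecAuxiliary}, since $Z_0 = dY_0/dA$ and $W_0 = d^2Y_0/dA^2$ by definition. \emph{Monotonicity and convexity:} Lemma~\ref{LemmaZ0} gives $Z_0 < 0$ for all $A$ (the argument there is uniform in $A$), so $A \mapsto Y_0(A)$ is strictly decreasing; Proposition~\ref{PropositionW} gives $W_0 = W(0,A) > 0$ for all $A$, so $Y_0$ is strictly convex. Since $Y_0(0) = 0$ (at $A=0$ the unique periodic solution is $Y \equiv 0$, as noted in Case~1 of the proof of Proposition~\ref{PropositionYperiod}) and $Y_0$ is strictly decreasing, equality $Y_0 = 0$ holds only at $A = 0$, and $Y_0(A) \le 0$ throughout. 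The lower bound $Y_0 > -1$ is just the fact that $Y$ takes values in $]-1,+\infty[$, which is the domain of $f$ in (\ref{ODE_Y}).

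\emph{The explicit lower bounds.} For the bound $Y_0 \ge f^{-1}(A)$: in Cases~2 and~3 of Proposition~\ref{PropositionYperiod} the invariant interval $K_0$ has left endpoint $f^{-1}(A)$, and $Y_0 \in K_0$, so $Y_0 \ge f^{-1}(A)$. For the bound $Y_0 \ge -A\omega/(1+\omega^2)$: here I would use convexity. Since $Y_0$ is a convex function of $A$ with $Y_0(0) = 0$, it lies above its tangent line at $A = 0$, namely $Y_0(A) \ge Y_0(0) + Z_0(0)\, A = Z_0(0)\, A$. From Proposition~\ref{PropositionTz}, equation (\ref{Zeps0}), one reads off $Z(t,0) = \sin(\omega t - \theta)/\sqrt{1+\omega^2}$ with $\theta = \arctan\omega$, hence $Z_0(0) = Z(0,0) = -\sin\theta/\sqrt{1+\omega^2} = -\omega/(1+\omega^2)$ (using $\sin(\arctan\omega) = \omega/\sqrt{1+\omega^2}$). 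Therefore $Y_0(A) \ge -A\omega/(1+\omega^2)$, and combining with the previous bound gives the stated maximum.

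\emph{The local estimate.} Taylor-expanding $Y_0(A)$ about $A = 0$ gives $Y_0(A) = Y_0(0) + Z_0(0)\,A + \tfrac12 W_0(0)\,A^2 + o(A^2) = -\tfrac{A\omega}{1+\omega^2} + \mathcal{O}(A^2)$, using $Y_0(0) = 0$ and the value of $Z_0(0)$ just computed; the $\mathcal{O}(A^2)$ term is controlled because $W_0(A) = d^2 Y_0/dA^2$ is continuous (indeed $W$ depends smoothly on $A$ through the smooth dependence of $Y$ on the parameter $A$), so it is locally bounded near $A = 0$. Strictly speaking one should note $W_0(0) = \overline{W}(0)$ need not vanish — Lemma~\ref{LemmaW0} shows it equals $f''(0)/(2(1+\omega^2)) \ne 0$ — which is exactly why the remainder is genuinely $\mathcal{O}(A^2)$ and not smaller.

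The main obstacle is ensuring that the differentiation of $Y_0$ with respect to $A$ is legitimate, i.e.\ that $A \mapsto Y_0(A)$ is genuinely $C^2$ (indeed $C^3$) so that $Z_0$, $W_0$ really are its derivatives and the tangent-line and Taylor arguments apply. This follows from the implicit function theorem applied to the Poincar\'e map: $Y_0$ is the fixed point of $\Pi(\cdot\,;A)$, and $\partial_{y_0}\Pi(Y_0;A) = e^{a_0} \ne 1$ by (\ref{ProofYperiodA0}) with $a_0 < 0$, while $\Pi$ is smooth in $(y_0,A)$ because $F(t,y)$ is; hence $Y_0(A)$ is as smooth as $F$, and the chain-rule derivations of (\ref{ODE_Z})--(\ref{ODE_W}) are justified. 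Once this regularity is in hand, every assertion in the proposition is a short consequence of the sign results for $Z_0$ and $W_0$ together with the value (\ref{Zeps0}) at $A=0$.
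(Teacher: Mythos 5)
Your proposal is correct and follows essentially the same route as the paper: $Z(0,A)<0$ and $W(0,A)>0$ give the strict decrease and convexity, the lower fence/invariant interval gives $Y_0\ge f^{-1}(A)$, convexity plus the tangent line at $A=0$ with $Z_0(0)=-\omega/(1+\omega^2)$ from (\ref{Zeps0}) gives the second lower bound, and a first-order Taylor expansion gives the local estimate; your extra justification of the smooth dependence of $Y_0$ on $A$ via the Poincar\'e map and the implicit function theorem is a welcome addition the paper leaves implicit. One inessential slip: $W_0(0)=W(0,0)$ is not equal to $\overline{W}(0)$ (Lemma \ref{LemmaW0} gives the mean, and $W(t,0)$ also carries $\sin 2\omega t$, $\cos 2\omega t$ harmonics), but this only concerns your side remark on the sharpness of the ${\cal O}(A^2)$ remainder, not the validity of the estimate itself.
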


\begin{proof}
Propositions \ref{PropositionTz} and \ref{PropositionW} yield $\frac{d\,Y_0}{d\,A}=Z(0,\,A)<0$ and $\frac{d^2\,Y_0}{d\,A^2}=W(0,\,A)>0$, which proves the convex decreasing of $Y_0$. Since $Y_0(0)=Y(0,\,0)=0$, then $Y_0(A)\leq 0$, which gives the upper bound of $Y_0$.

We now prove the lower bounds of $Y_0$. In the proof of proposition \ref{PropositionYperiod}, it was established that $f^{-1}(A)$ is a lower nonporous fence, which means that $Y(t)\geq f^{-1}(A)$, and hence that $Y_0\geq f^{-1}(A)$. On the other hand, lemma \ref{LemmaZ0} and proposition \ref{PropositionW} show that $A \mapsto Y_0(A)$ is a convex function, which reaches a maximum at $A=0$, with $Y_0(0)=0$. $Y_0(A)$ is therefore always above the straight line going through the origin with slope $A\,\frac{\partial\,Y}{\partial\,A}(0)$: $Y_0 \geq A\,Z_0(0)$, where $Z_0(0)$ is deduced from (\ref{Zeps0}). 

At $A \ll 1$, a Taylor expansion of $Y$ is written as follows
$$
Y_0(A)=Y_0(0)+A\,\frac{\textstyle \partial\,Y_0}{\textstyle \partial\,A}(0)+{\cal O}(A^2).
$$
If $A=0$, then $Y_0=Y=0$, and (\ref{Zeps0}) again provides $Z_0(0)=\frac{\partial\,Y_0}{\partial\,A}(0)$. 
\qquad\end{proof}\\

\begin{figure}[htbp]
\begin{center}
\begin{tabular}{cc}
(a)&(b)\\
\includegraphics[scale=0.31]{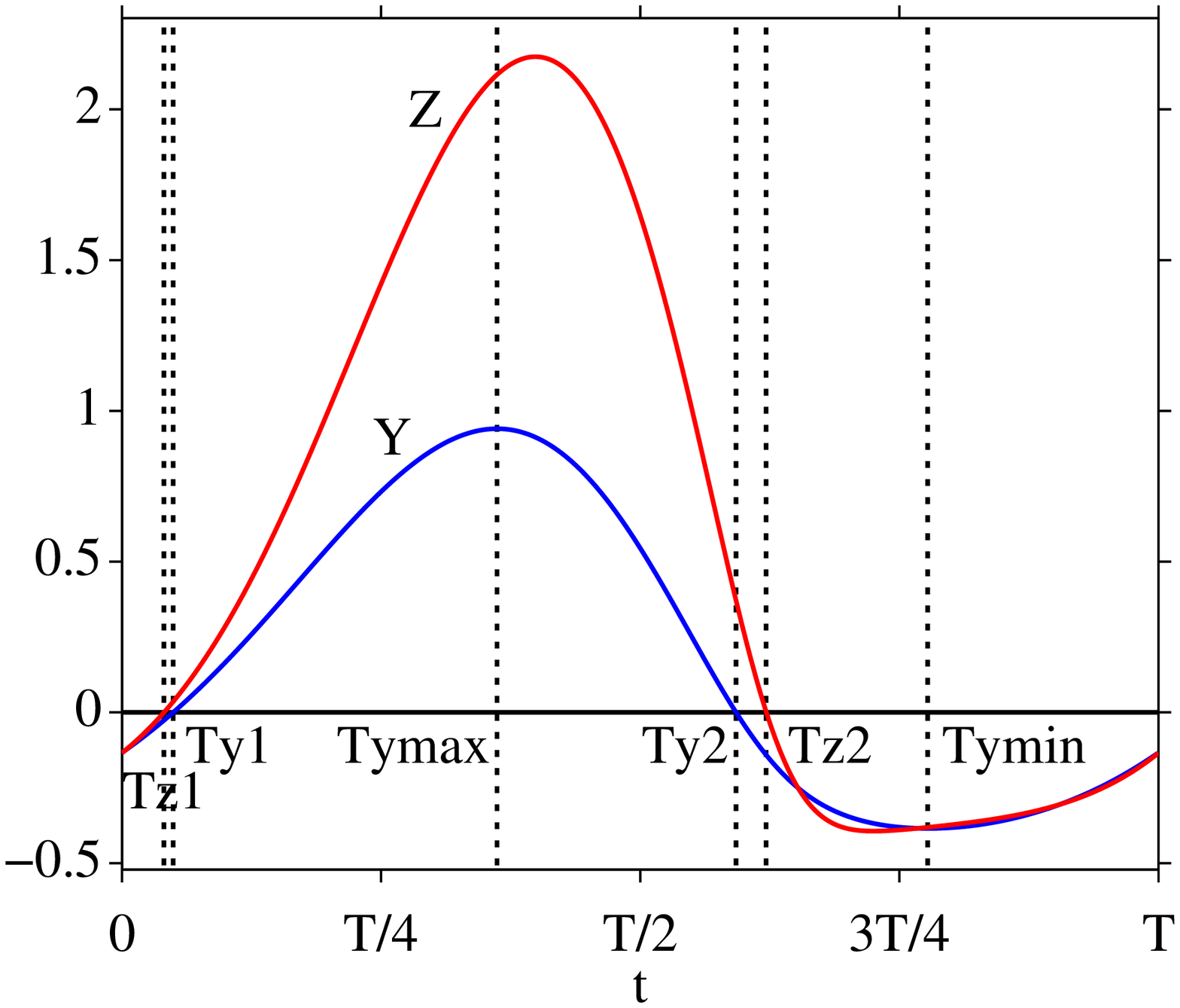}&
\includegraphics[scale=0.31]{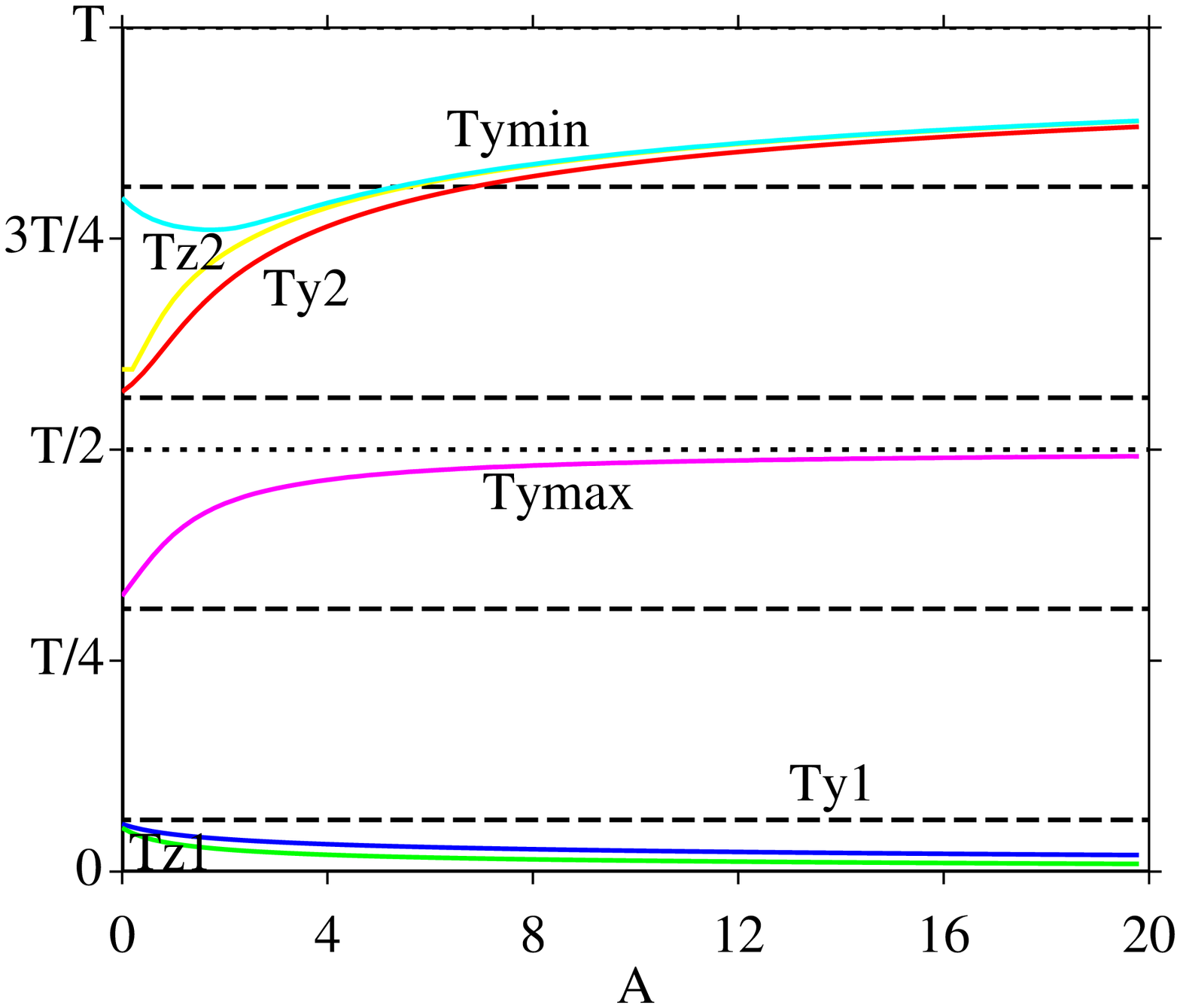}
\end{tabular}
\end{center}
\caption{Critical times in $Y$ and $Z$, in the case of model 1 (\ref{Model1}). Snapshot at $A=0.6$ (a): the vertical dotted lines denote successively $t_{Z_1}$, $t_{Y_1}$, $t_{Y_{\max}}$, $t_{Y_2}$, $t_{Z_2}$ and $t_{Y_{\min}}$.  Parametric study with $A$ (b): a horizontal dotted line has been added at $T/2$, and four horizontal dashed lines have been added at $\theta/\omega+i\,T/4$ ($i=0, ..., 3$) to show the locations of critical times for $A\rightarrow 0$. See proposition \ref{PropositionTy} and corollary \ref{CorollaryTy}.} 
\label{FigTyTz}
\end{figure}

The evolution of $Y_0$ with $A$ is shown in figure \ref{FigY0}. Critical times in $Y$ are defined and estimated in the following proposition and corollary. 

\begin{proposition}
In $[0,\,T]$, the $T$-periodic solution $Y$ of (\ref{ODE_Y}), which has two roots $t_{Y_1}$, $t_{Y_2}$, is maximum and positive at $t_{Y_{\max}}$, and minimum and negative at $t_{Y_{\min}}$. The critical points are ordered as follows: 
$$
\begin{array}{l}
0<t_{Y_1}<T/2<t_{Y_2}<T,\qquad  t_{Y_1}<t_{Y_{\max}}<t_{Y_2}<t_{Y_{\min}},\\     
[6pt]
 T/4<t_{Y_{\max}}<T/2,\qquad \hspace{0.8cm} 3\,T/4<t_{Y_{\min}}<T.
\end{array}
$$
Wit $A \ll 1$ and using $\theta$ defined in (\ref{Zeps0}), we also obtain the following local estimates
\begin{equation}
\left\|
\begin{array}{l}
\displaystyle
Y(t, \,A)=\frac{\textstyle A}{\textstyle \sqrt{1+\omega^2}}\,\sin\left(\omega t-\theta\right)+{\cal O}\left(A^2\right),\\
[10pt]
\displaystyle
t_{Y_1}(0^+)=\frac{\textstyle \theta}{\textstyle \omega},\quad 
t_{Y_{\max}}(0^+)=\frac{\textstyle \theta}{\textstyle \omega}+\frac{\textstyle T}{\textstyle 4},\quad 
t_{Y_2}(0^+)=\frac{\textstyle \theta}{\textstyle \omega}+\frac{\textstyle T}{\textstyle 2},\quad
t_{Y_{\min}}(0^+)=\frac{\textstyle \theta}{\textstyle \omega}+\frac{\textstyle 3\,T}{\textstyle 4}.
\end{array}
\right.
\label{Yeps0}
\end{equation}
\label{PropositionTy}
\end{proposition}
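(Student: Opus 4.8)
The plan is to exploit the phase portrait of (\ref{ODE_Y}) together with the sign information about $Z$ and $W$ established in propositions \ref{PropositionTz} and \ref{PropositionW}. First I would locate the roots $t_{Y_1},t_{Y_2}$ of $Y$. Integrating (\ref{ODE_Y}) over one period and using $T$-periodicity gives $\int_0^T f(Y(t))\,dt = -A\int_0^T\sin\omega t\,dt=0$; since $f$ is strictly monotone with $f(0)=0$, this forces $Y$ to change sign, so $Y$ has at least two roots in $]0,T[$. To see there are exactly two and that they straddle $T/2$, I would argue from the isocline $I_0^y(t)=f^{-1}(-A\sin\omega t)$: for $t\in]0,T/2[$ one has $I_0^y>0$ and for $t\in]T/2,T[$ one has $I_0^y<0$, with $I_0^y(0)=I_0^y(T/2)=0$. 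A standard fence/uniqueness argument (as in the proof of proposition \ref{PropositionYperiod}) shows $Y$ cannot cross the $t$-axis more than once while $I_0^y$ keeps a fixed sign, so exactly one root $t_{Y_1}$ lies in $]0,T/2[$ and one root $t_{Y_2}$ in $]T/2,T[$, with $Y<0$ on $[0,t_{Y_1}[\,\cup\,]t_{Y_2},T]$ and $Y>0$ on $]t_{Y_1},t_{Y_2}[$ (the sign pattern being fixed by $Y_0=Y(0)<0$ from proposition \ref{PropositionY0}).

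Next I would identify the extrema. The critical points of $Y$ are the zeros of $Y'=f(Y)+A\sin\omega t$, i.e.\ the points where $Y$ meets the isocline $I_0^y$. Since $Y>0$ on $]t_{Y_1},t_{Y_2}[$ and $I_0^y>0$ only on $]0,T/2[$, any interior maximum of $Y$ must occur in $]t_{Y_1},T/2[\subset]t_{Y_1},t_{Y_2}[$, so $t_{Y_{\max}}$ is positive; similarly the minimum occurs where $Y<0$ meets $I_0^y<0$, forcing $t_{Y_{\min}}\in]t_{Y_2},T[\subset]t_{Y_2},T[$, hence negative. Monotonicity of $Y$ between consecutive isocline crossings (strictly increasing below $I_0^y$, strictly decreasing above) then yields the full ordering $t_{Y_1}<t_{Y_{\max}}<t_{Y_2}<t_{Y_{\min}}$ and the uniqueness of each extremum. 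To pin down $t_{Y_{\max}}\in]T/4,T/2[$ and $t_{Y_{\min}}\in]3T/4,T[$ I would compare $Y$ with $I_0^y$ near $t=T/4$ and $t=3T/4$: at $t=T/4$ the forcing is maximal and one checks $Y(T/4)<I_0^y(T/4)=f^{-1}(-A)$ (so $Y$ is still increasing there, giving $t_{Y_{\max}}>T/4$), while $t_{Y_{\max}}<t_{Y_2}<$ something slightly less than... here one uses that $I_0^y(t)>0$ only for $t<T/2$ so $Y'$ can vanish with $Y>0$ only for $t<T/2$, giving $t_{Y_{\max}}<T/2$; the bound $3T/4<t_{Y_{\min}}$ is the symmetric statement.

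For the low-forcing local estimates (\ref{Yeps0}), I would Taylor-expand $Y(t,A)=Y(t,0)+A\,Z(t,0)+\mathcal{O}(A^2)$ in $A$. Since $Y(t,0)\equiv0$ (case 1 of proposition \ref{PropositionYperiod}), the leading term is $A\,Z(t,0)$, and $Z(t,0)=\frac{1}{\sqrt{1+\omega^2}}\sin(\omega t-\theta)$ is given by (\ref{Zeps0}). The roots and extrema of this sinusoid are exactly $\theta/\omega + iT/4$ for $i=0,1,2,3$, and a standard implicit-function/regular-perturbation argument (the relevant derivative of $Y'$ does not degenerate at $A=0$ at these simple zeros) shows the true critical times $t_{Y_1},t_{Y_{\max}},t_{Y_2},t_{Y_{\min}}$ converge to these values as $A\to0^+$. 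The main obstacle I anticipate is the careful justification that the interior extremum in each sign interval is unique---i.e.\ that $Y$ cannot oscillate across the isocline more than once on $]t_{Y_1},T/2[$ or $]t_{Y_2},T[$; this requires combining the strict monotonicity properties of $Y$ relative to $I_0^y$ with the fact that $I_0^y$ is itself monotone on each quarter-period (increasing on $]0,T/4[$, decreasing on $]T/4,T/2[$, etc.), so that once $Y$ crosses $I_0^y$ from below it cannot catch up with the now-decreasing isocline again before $I_0^y$ changes sign. The remaining ordering inequalities then follow by bookkeeping, and the $\mathcal{O}(A^2)$ control in the Taylor expansion is routine given the smooth dependence of $Y$ on $A$ already used to define $Z,W,X$.
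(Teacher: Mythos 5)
Your route is essentially the paper's: locate the roots and extrema by comparing $Y$ with the zero-slope isocline $I_0^y$ in (\ref{IsoclineY}), fix the sign pattern with $Y_0<0$ from proposition \ref{PropositionY0}, and obtain the local estimates from the first-order Taylor expansion $Y=A\,Z(t,0)+{\cal O}(A^2)$ with (\ref{Zeps0}); the paper's proof is exactly this, except that for the bounds $T/4<t_{Y_{\max}}$ and $3T/4<t_{Y_{\min}}$ it argues through the \emph{slope} of the isocline (a zero-slope crossing of $Y$ that is a maximum can only occur where $I_0^y$ is decreasing, i.e.\ in $]T/4,\,3T/4[$, and a minimum only where $I_0^y$ is increasing and negative, i.e.\ in $]3T/4,\,T[$), rather than through a pointwise comparison at $t=T/4$.

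One step of your version needs repair: the check ``$Y(T/4)<I_0^y(T/4)=f^{-1}(-A)$'' is only meaningful when $A<|f_{\min}|$; for $A\geq|f_{\min}|$ (e.g.\ model 1 (\ref{Model1}) with $A\geq1$, which is within the range of the proposition) the quantity $f^{-1}(-A)$ does not exist and the isocline has the asymptotes $T_1<T/4<T_2$ of (\ref{AsymptotsY}). The fix is immediate — on $]T_1,\,T_2[$ one has $f(Y)+A\sin\omega t>f_{\min}+|f_{\min}|=0$, so $Y$ is strictly increasing through $T/4$ and $t_{Y_{\max}}\geq T_2>T/4$ — or simply adopt the paper's slope-of-isocline argument, which covers all $A$ uniformly. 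Also note that even in the case $A<|f_{\min}|$ your inequality at $T/4$ must be strict ($f^{-1}(-A)$ is only a \emph{weak} upper fence, touched with zero slope at $T/4$), so a one-line tangency argument is needed to exclude $Y(T/4)=f^{-1}(-A)$; likewise the ``symmetric statement'' for $t_{Y_{\min}}$ deserves to be written out (there it is unproblematic, since $f^{-1}(A)$ exists for all $A$). With these patches your argument is sound, and your concluding remarks on uniqueness of the extremum in each sign interval are consistent with how the paper handles it (the isocline is monotone on each quarter-period, so no earlier local maximum can occur before $T/4$ nor local minimum before $3T/4$).
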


\begin{proof}
The phase portrait of (\ref{ODE_Y}) and the proposition \ref{PropositionY0} mean that $Y(t)$ must cross the $t$-axis twice, at $t_{Y_1}$ and $t_{Y_2}$ in $]0,\,T[$. At $t\geq T/2$, the isocline $I_0^y$ (\ref{IsoclineY}) is negative (figure \ref{FigIsoclineY}). Consequently, $t_{Y_1}\geq T/2$ means that $Y$ crosses $I_0^y$ twice at negative values of $Y$, preventing $Y$ from crossing the $t$-axis, which is impossible.

The slope of $Y$ is null at $t_{Y_{\max}}$, which means that $Y$ crosses $I_0^y$. This occurs only when the slope of $I_0^y$  is negative, between $T/4$ and $3\,T/4$. Since $I_0^y$ must also be positive at the intersection point, it is possible only when $T/4<t_{Y_{\max}}<T/2$. $Y$ subsequently decreases but cannot cross the $t$-axis as long as $I_0^y$ positive, which means that $t_{Y_2}>T/2$.

Lastly, the slope of $Y$ is equal to zero at $t_{Y_{\min}}$, which means that $Y$ crosses $I_0^y$. This is possible only when the slope of $I_0^y$ is positive. Since $I_0^y$ must also be negative, this means that $t_{Y_{\min}}>3\,T/4$. 

To prove the local estimates, a Taylor expansion of $Y$ is written as follows
$$
\begin{array}{lll}
\displaystyle
Y(t,\,A)&=&
\displaystyle Y(t,\,0)+A\,\frac{\textstyle \partial\,Y}{\textstyle \partial\,A}(t,\,0)+{\cal O}(A^2),\\
[8pt]
&=& \displaystyle A\,Z(t,\,0)+{\cal O}(A^2).
\end{array}
$$
Equation (\ref{Zeps0}) and straightforward calculations complete the proof.
\qquad \end{proof}\\

\begin{corollary}
The following properties hold for all $A > 0$:
\begin{romannum}
\item $[t_{Y_1},\,t_{Y_2}]\subset [t_{Z_1},\,t_{Z_2}]$, or equivalently $Z(t_{Y_1},\,A)\geq 0$ and $Z(t_{Y_2},\,A)\geq 0$; \\
\item $\displaystyle \frac{\textstyle \partial\,t_{Y_1}}{\textstyle \partial\,A}<0, 
\quad \frac{\textstyle \partial\,t_{Y_2}}{\textstyle \partial\,A}>0$;\\
\item $\displaystyle t_{Y_1}<\frac{\textstyle \theta}{\textstyle \omega}<\frac{\textstyle T}{\textstyle 4}, 
\quad t_{Y_2}>\frac{\textstyle \theta}{\textstyle \omega}+\frac{\textstyle T}{\textstyle 2}$;\\
\item if $|f_{\min}|<\infty$, then 
$\displaystyle \lim_{A \rightarrow +\infty} t_{Y_{\max}}=\frac{\textstyle T}{\textstyle 2}$.
\end{romannum}
\label{CorollaryTy}
\end{corollary}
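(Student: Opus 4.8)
The plan is to prove (i) first, since it pins down the sign of $Z$ at the zeros of $Y$ and thereby yields both (ii) and (iii); item (iv) is a separate limiting argument. Throughout, $A>0$.

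For (i), introduce the $T$-periodic comparison function $p=A\,Z-Y$. Subtracting (\ref{ODE_Y}) from $A$ times (\ref{ODE_Z}) and using $A\,Z=p+Y$ gives the scalar linear equation $p'=f'(Y)\,p+q$ with $q(t)=f'(Y(t))\,Y(t)-f(Y(t))$. Strict convexity of $f$ together with $f(0)=0$ means the tangent line to $f$ at any point of $]-1,+\infty[$ lies strictly below $f$ off the contact point; reading this at the abscissa $0$ gives $q\ge 0$, with $q(t)=0$ exactly when $Y(t)=0$, so $q\ge 0$ and $q\not\equiv 0$ (for $A>0$, $Y$ is not identically zero). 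Hence $r(t):=p(t)\,e^{B(t)}$, with $B$ as in Lemma \ref{LemmaZ0}, satisfies $r'=q\,e^{B}\ge 0$ and is nondecreasing; since $p$ is $T$-periodic while $B(T)>0$, the identity $r(T)=e^{B(T)}r(0)$ forces $r(0)>0$ (were $r(0)=0$, then $r\equiv 0$, hence $p\equiv 0$ and $q\equiv 0$, a contradiction). Thus $p>0$ on $[0,T]$, i.e. $A\,Z(t)>Y(t)$ for every $t$. Evaluating at the roots $t_{Y_1},t_{Y_2}$ of $Y$ gives $Z(t_{Y_1},A)>0$ and $Z(t_{Y_2},A)>0$, which by Proposition \ref{PropositionTz} ($Z>0$ precisely on $]t_{Z_1},t_{Z_2}[$) places $t_{Y_1}$ and $t_{Y_2}$ in $]t_{Z_1},t_{Z_2}[$, i.e. $[t_{Y_1},t_{Y_2}]\subset[t_{Z_1},t_{Z_2}]$.

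For (ii), Proposition \ref{PropositionTy} gives $0<t_{Y_1}<T/2<t_{Y_2}<T$, so at the zeros of $Y$ equation (\ref{ODE_Y}) reads $\partial_t Y(t_{Y_1})=A\sin\omega t_{Y_1}>0$ and $\partial_t Y(t_{Y_2})=A\sin\omega t_{Y_2}<0$; the implicit function theorem applied to $Y(t_{Y_i}(A),A)=0$ then shows each $t_{Y_i}$ is $C^1$ in $A$ with $\partial_A t_{Y_i}=-Z(t_{Y_i},A)/\partial_t Y(t_{Y_i},A)$, and since the numerators are positive by (i), we get $\partial_A t_{Y_1}<0$ and $\partial_A t_{Y_2}>0$. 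For (iii), $\theta=\arctan\omega\in\,]0,\pi/2[$ gives $\theta/\omega<\pi/(2\omega)=T/4$ at once; moreover by (ii) $t_{Y_1}$ is strictly decreasing on $]0,+\infty[$ while $t_{Y_1}(0^+)=\theta/\omega$ by Proposition \ref{PropositionTy}, whence $t_{Y_1}(A)<\theta/\omega$, and symmetrically $t_{Y_2}$ is strictly increasing with $t_{Y_2}(0^+)=\theta/\omega+T/2$, so $t_{Y_2}(A)>\theta/\omega+T/2$.

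For (iv), assume $|f_{\min}|<\infty$. Since $t_{Y_{\max}}\in\,]T/4,T/2[$ is an interior critical point of $Y$, $\partial_t Y(t_{Y_{\max}})=0$, so (\ref{ODE_Y}) gives $f(Y(t_{Y_{\max}}))=-A\sin\omega t_{Y_{\max}}$; as $f$ ranges in $]f_{\min},+\infty[$, this forces $A\sin\omega t_{Y_{\max}}<|f_{\min}|$, hence $\sin\omega t_{Y_{\max}}<|f_{\min}|/A\to 0$. Because $\omega t_{Y_{\max}}\in\,]\pi/2,\pi[$, where $\sin$ is a strictly decreasing bijection onto $]0,1[$, this forces $\omega t_{Y_{\max}}\to\pi$, i.e. $t_{Y_{\max}}\to T/2$. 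The step calling for insight is (i): one must guess the comparison function $p=A\,Z-Y$ and then extract its strict positivity from $T$-periodicity via the monotone auxiliary quantity $r=p\,e^{B}$, rather than from initial data; once (i) is in hand the remaining items are short, the only mild care being that at $A=0$ the times $t_{Y_i}$ are undefined, so the inequalities in (iii) must be read off from the monotonicity of (ii) together with the $A\to 0^+$ limits of Proposition \ref{PropositionTy}.
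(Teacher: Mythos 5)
Your proposal is correct, and for the key item (i) it takes a genuinely different route from the paper. The paper proves (i) by comparing the sets $I(A)=\{Y(\cdot,A)>0\}$ and $J(A)=\{Z(\cdot,A)>0\}$: it identifies $I(0^+)=J(0)$ from the small-$A$ formulas (\ref{Zeps0})--(\ref{Yeps0}), uses $W=\partial Z/\partial A>0$ (proposition \ref{PropositionW}) to show that $J(A)$ grows monotonically with $A$, and then uses $Y(t,A)=Y(t,0^+)+\int_{0^+}^{A}Z(t,s)\,ds$ to place $I(A)$ inside $J(A)$. You instead introduce the comparison function $p=A\,Z-Y$, observe that $p'=f'(Y)\,p+q$ with $q=f'(Y)\,Y-f(Y)\ge 0$ by strict convexity of $f$ (with $q=0$ only where $Y=0$), and conclude from the monotone quantity $r=p\,e^{B}$ and $T$-periodicity that $p>0$ everywhere; this yields the pointwise inequality $A\,Z(t,A)>Y(t,A)$, hence the \emph{strict} positivity $Z(t_{Y_i},A)>0$, which is stronger than the stated $\ge 0$ and is exactly what makes the strict signs in (ii) clean (the paper's proof of (ii) rests on the weaker $\ge 0$). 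Your argument also bypasses proposition \ref{PropositionW} and the small-$A$ asymptotics for this item, needing only convexity, periodicity and $B(T)>0$ from lemma \ref{LemmaZ0}; the only slight terseness is that you should first rule out $r(0)<0$ (it contradicts $r$ nondecreasing together with $r(T)=e^{B(T)}r(0)<r(0)$) before treating the case $r(0)=0$, a one-line fix. Items (ii) and (iii) follow the paper's own route (implicit differentiation of $Y(t_{Y_i}(A),A)=0$ using (\ref{ODE_Y}), then monotonicity combined with the $A\to 0^+$ limits of proposition \ref{PropositionTy}), and your item (iv) is the paper's estimate in disguise: the relation $f(Y_{\max})=-A\sin\omega t_{Y_{\max}}$ with $T/4<t_{Y_{\max}}<T/2$ is equivalent to the paper's statement $t_{Y_{\max}}>T_2$ with $T_2\to T/2$ from (\ref{AsymptotsY}), just phrased without the isocline geometry.
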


\begin{proof}

\underline{Property (i)}. Two subsets of the $t$-plane are defined:
$$
I(A)=\left\{t\in]0,\,T[;\,Y(t,\,A)>0\right\},\qquad
J(A)=\left\{t\in]0,\,T[;\,Z(t,\,A)>0\right\}.
$$
Equations (\ref{Zeps0}) and (\ref{Yeps0}) prove that $I(0^+)=J(0)$. Phase portraits of (\ref{ODE_Y}) and (\ref{ODE_Z}) show that $I$ and $J$ are open intervals:
$$
I(A)=\left]t_{Y_1},\,t_{Y_2}\right[,\qquad
J(A)=\left]t_{Z_1},\,t_{Z_2}\right[.
$$
Let us consider two forcing parameters $A_1$ and $A_2$, with $A_2>A_1>0$. Proposition \ref{PropositionW} gives $\frac{\partial \,Z}{\partial\,A}>0$, and hence $Z(t,\,A_1)<Z(t,\,A_2)$. Taking $t\in J(A_1)$ gives $0<Z(t,\,A_1)<Z(t,\,A_2)$, and therefore $t \in J(A_2)$: the interval $J$ increases strictly with $A$. To prove $I(A) \subset J(A)$, we take $t\in I(A)$, and hence $Y(t,\,A)>0$. Two cases can be distinguished:
\begin{itemize}
\item $t \notin I(0^+)\Rightarrow Y(t,\,0^+)<0$. The formula
$$
Y(t,\,A)=Y(t,\,0^+)+\int_{0^+}^A Z(t,\,s)\,ds
$$
and the monotonic increase in $J$ with $A$ yields $Z(t,\,A)>0$, hence $t \in J(A)$;
\item $t\in I(0^+)$. Arguments mentioned above yield $t \in J(0) \subset J(A)$, which completes the proof of (i).
\end{itemize} 

\underline{Property (ii)}. 
First we differentiate $Y(t_{Y_1},\,A)=0$ in terms of $A$:
$$
\frac{\textstyle \partial\,Y}{\textstyle \partial\,t}\,\frac{\textstyle \partial\,t_{Y_1}}{\textstyle \partial\,A}+\frac{\textstyle \partial\,Y}{\textstyle \partial\,A}=0.
$$
The time derivative is replaced via (\ref{ODE_Y}) 
$$
\left(f(Y(t_{Y_1},\,A))+A\,\sin \omega t_{Y_1}\right)\,\frac{\textstyle \partial\,t_{Y_1}}{\textstyle \partial\,A}+Z(t_{Y_1},\,A)=0.
$$
Since $f(0)=0$, we obtain
$$
\frac{\textstyle \partial\,t_{Y_1}}{\textstyle \partial\,A}=-\frac{\textstyle Z(t_{Y_1},\,A)}{\textstyle A\,\sin \omega t_{Y_1}}.
$$
Property (i) and proposition \ref{PropositionTy} then prove $\frac{\partial\,t_{Y_1}}{\partial\,A}<0$. A similar derivation gives
$$
\frac{\textstyle \partial\,t_{Y_2}}{\textstyle \partial\,A}=-\frac{\textstyle Z(t_{Y_2},\,A)}{\textstyle A\,\sin \omega t_{Y_2}}.
$$
Property (ii) and proposition \ref{PropositionTy} give $Z(t_{Y_2},\,A)\geq0$ and $\sin \omega t_{Y_2}<0$, and hence $\frac{\partial\,t_{Y_2}}{\partial\,A}>0$. 

\underline{Property (iii)}. The bounds of $t_{Y_1}$ and $t_{Y_2}$ follow from (\ref{Yeps0}) and property (ii). From the definition of $\theta$ (\ref{Zeps0}) and the property of $\arctan$, we also obtain
$$
\frac{\textstyle \theta}{\textstyle \omega}\leq\frac{\textstyle 1}{\textstyle \omega}\,\frac{\textstyle \pi}{\textstyle 2}=\frac{\textstyle T}{\textstyle 4},
$$
which provides an additional bound for $t_{Y_1}$.

\underline{Property (iv)}. If $A\geq|f_{\min}|$, $Y$ crosses the isocline $I_0^y$ at $t_{Y_{\max}}>T_2$: see (\ref{IsoclineY}), (\ref{AsymptotsY}) and figure \ref{IsoclineY}-(b). The bounds of proposition \ref{PropositionTy} ensure $T_2 <t_{Y_{\max}}<T/2$. Using $\displaystyle \lim_{A \rightarrow + \infty}T_2=T/2$ completes the proof.
\qquad\end{proof}\\

Numerical computations illustrating proposition \ref{PropositionTy} and corollary \ref{CorollaryTy} are shown in figure \ref{FigTyTz}. No theoretical results have been obtained about $\frac{\partial\,t_{Y_{\max}}}{\partial\,A}$ and $\frac{\partial\,t_{Y_{\min}}}{\partial\,A}$. The numerical simulations indicate that $t_{Y_{\max}}$ strictly increases with $A$, whereas $t_{Y_{\min}}$ first decreases up to a critical value, and then increases (b). Numerical simulations also indicate that $t_{Z_2}$ tends asymptotically towards $t_{Y_{\min}}$ at large values of $A$.


\subsection{Mean dilatation of the crack}

\begin{figure}[htbp]
\begin{center}
\begin{tabular}{cc}
(a)&(b)\\
\includegraphics[scale=0.31]{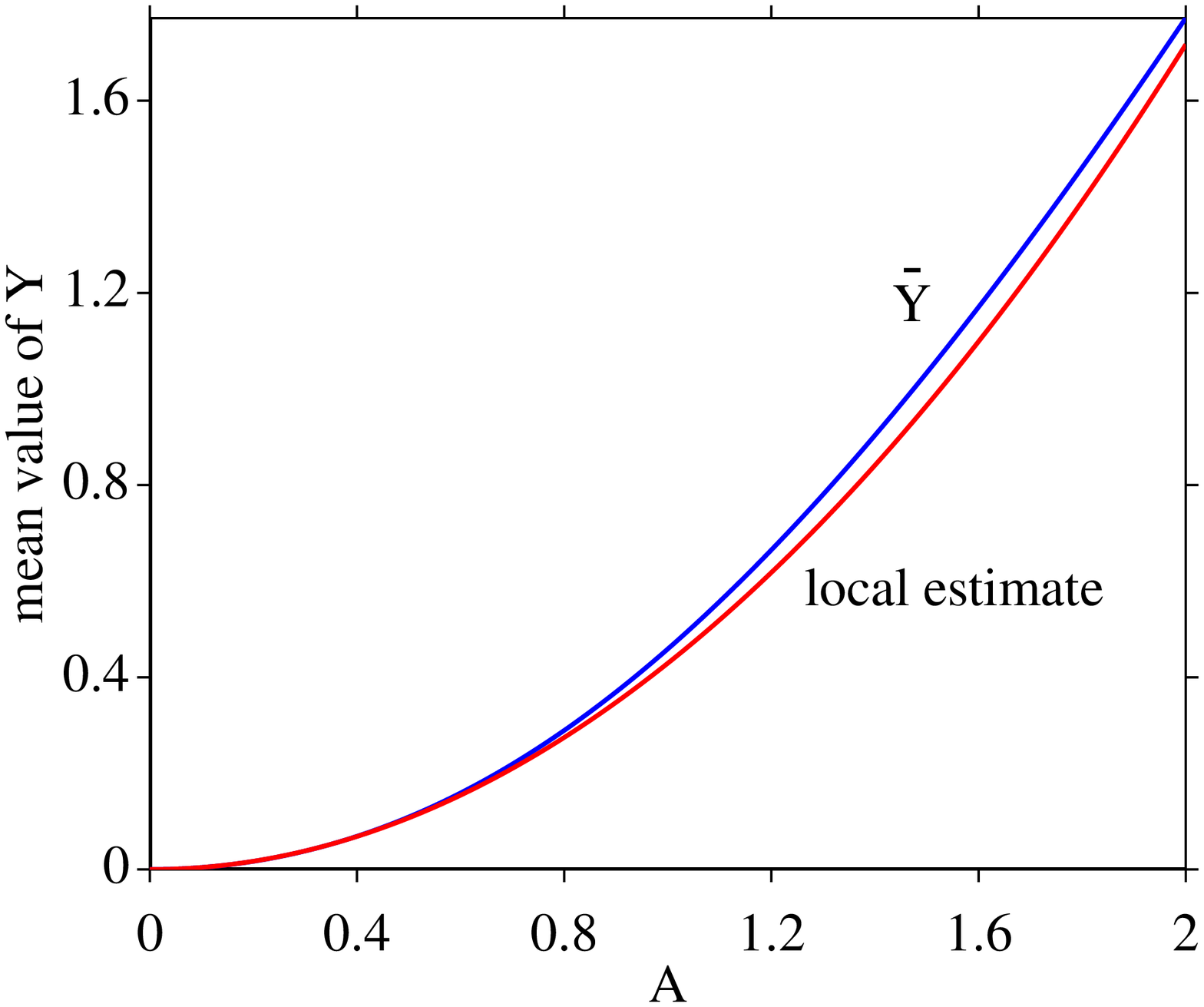}&
\includegraphics[scale=0.31]{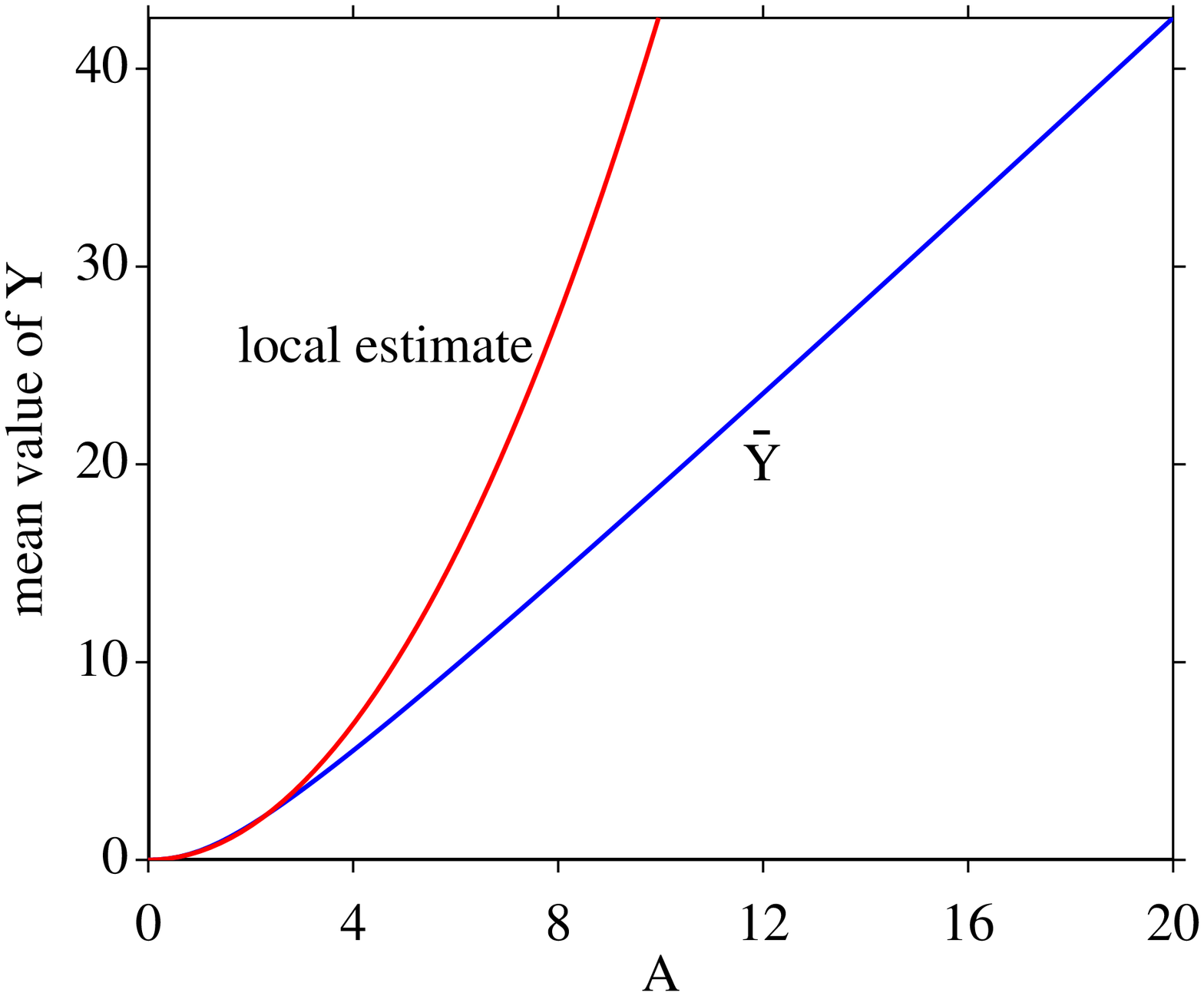}
\end{tabular}
\end{center}
\caption{Evolution of $\overline{Y}$ with the forcing amplitude $A$, at $0\leq A \leq 2$ (a) and $0 \leq A \leq 20$ (b), in the case of model 1 (\ref{Model1}). See theorem \ref{TheoremYbar}.} 
\label{FigYbar}
\end{figure}

As stated in the introduction, the main aim of this study was to prove that a jump in the mean elastic displacement occurs around the crack, as observed numerically. This jump amounts to a mean dilatation of the crack. The next theorem addresses this typically nonlinear phenomenon.

\begin{theorem}
The mean value of the $T$-periodic solution $Y$ in (\ref{ODE_Y}) is positive and increases strictly with the forcing amplitude:
$$
\overline{Y}>0,\qquad \frac{\textstyle \partial\,\overline{Y}}{\textstyle \partial\,A}>0.
$$
At small forcing levels, the following local estimate holds:
\begin{equation}
\overline{Y}=\frac{\textstyle f^{''}(0)}{\textstyle 4}\,\frac{\textstyle A^2}{\textstyle 1+\omega^2}+{\cal O}(A^4).
\label{YbarLocal}
\end{equation}
\label{TheoremYbar}
\end{theorem}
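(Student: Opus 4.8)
The plan is to extract the three assertions — positivity of $\overline{Y}$, strict monotonicity $\partial\overline{Y}/\partial A>0$, and the small-$A$ expansion — from the auxiliary solutions $Z$, $W$, $X$ already set up in section \ref{SecAuxiliary}, exactly as proposition \ref{PropositionY0} extracted the behaviour of $Y_0$. For the monotonicity I would differentiate under the integral sign: $\partial\overline{Y}/\partial A=\overline{Z}$, and by proposition \ref{PropositionTz} the $T$-periodic solution $Z$ is negative on $[0,t_{Z_1}[\cup]t_{Z_2},T]$ and positive on $]t_{Z_1},t_{Z_2}[$ with $t_{Z_1}<T/2<t_{Z_2}$. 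That sign pattern alone does not force $\overline{Z}>0$, so I would instead integrate the $Z$-equation (\ref{ODE_Z}) against a suitable weight, or — more promisingly — use $\overline{Z}=\int_0^T Z$ and the closed form from lemma \ref{LemmaZ0}, which writes $Z(t)=(\,D+\int_0^t e^{B(s)}\sin\omega s\,ds\,)e^{-B(t)}$ with $D$ fixed by periodicity; one then needs $\int_0^T Z(t)\,dt>0$. The cleanest route is to note $\frac{d}{dt}\big(Z\,e^{B(t)}\big)=e^{B(t)}\sin\omega t$, integrate, and exploit that $e^{B}$ is increasing (since $-f'(Y)>0$) to bound $\int_0^T e^{B}\sin\omega t$; but since $\overline{Z}$ is a weighted integral of $Z$ itself I would rather argue directly on the phase portrait together with the fact (corollary \ref{CorollaryTy}(i)) that $[t_{Y_1},t_{Y_2}]\subset[t_{Z_1},t_{Z_2}]$ — i.e. $Z$ is positive throughout the interval where $Y>0$ — and combine this with a comparison of $Z$ against the explicit null-forcing solution $Z(t,0)=\frac{1}{\sqrt{1+\omega^2}}\sin(\omega t-\theta)$, whose mean is zero, using $\partial Z/\partial A=W>0$ (proposition \ref{PropositionW}) to get $Z(t,A)\geq Z(t,0)$ hence $\overline{Z}(A)\geq\overline{Z}(0)=0$, with strict inequality for $A>0$ because $W>0$ strictly. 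That last comparison argument gives both $\partial\overline{Y}/\partial A=\overline{Z}>0$ for $A>0$ and, by integrating from $0$, $\overline{Y}(A)>\overline{Y}(0)=0$, establishing positivity.

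For the local estimate I would Taylor-expand $\overline{Y}(A)$ in $A$ about $0$. Since $\overline{Y}(0)=0$ (the steady state $Y\equiv0$ from case 1 of proposition \ref{PropositionYperiod}), $\partial\overline{Y}/\partial A(0)=\overline{Z(\cdot,0)}=0$ (sinusoid of zero mean, by (\ref{Zeps0})), and $\frac12\partial^2\overline{Y}/\partial A^2(0)=\frac12\overline{W(\cdot,0)}=\frac{f''(0)}{4}\frac{1}{1+\omega^2}$ by lemma \ref{LemmaW0}, the expansion reads $\overline{Y}=\frac{f''(0)}{4}\frac{A^2}{1+\omega^2}+\frac{A^3}{6}\overline{X(\cdot,0)}+{\cal O}(A^4)$. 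Lemma \ref{LemmaX0} gives $\overline{X(\cdot,0)}=0$, so the $A^3$ term vanishes and the error is genuinely ${\cal O}(A^4)$, which is precisely (\ref{YbarLocal}). This part is essentially bookkeeping once the three auxiliary lemmas are invoked; I would just remark that smoothness of $A\mapsto Y(\cdot,A)$ — needed to justify the Taylor expansion and the differentiation under the integral — follows from smooth dependence of solutions of (\ref{ODE_Y}) on the parameter together with the asymptotic stability / uniqueness of the periodic orbit established in proposition \ref{PropositionYperiod}.

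The main obstacle is the monotonicity $\overline{Z}>0$. The sign information on $Z$ from proposition \ref{PropositionTz} is not by itself enough, since $Z$ is negative on two boundary intervals and positive in the middle; one must actually compare sizes. I expect the comparison $Z(t,A)\ge Z(t,0)$ via $W=\partial Z/\partial A>0$ to be the decisive ingredient — it converts the qualitative phase-portrait picture into a quantitative mean inequality without any computation — and the whole theorem then falls out of the monotone family $\{Z(\cdot,A)\}_{A\ge0}$ anchored at the zero-mean sinusoid $Z(\cdot,0)$. If that comparison needs care (e.g. global-in-$A$ validity of $W>0$), one can instead integrate the identity $\frac{d}{dt}(Z e^{B})=e^{B}\sin\omega t$ over $[0,T]$, and then integrate a second time weighting by $e^{-B}$; monotonicity of $e^{B}$ on each half-period yields $\int_0^T Z>0$ after splitting at $T/2$ and using that $B$ is increasing.
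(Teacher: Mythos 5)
Your argument for the monotonicity and for the local estimate is exactly the paper's: differentiate under the integral to get $\partial\overline{Y}/\partial A=\overline{Z}$, use $\overline{Z}(0)=0$ together with $W=\partial Z/\partial A>0$ (proposition \ref{PropositionW}) to conclude $\overline{Z}(A)>0$ for $A>0$, and then do the Taylor bookkeeping with $\overline{Y}(0)=\overline{Z}(0)=0$, lemma \ref{LemmaW0} for $\overline{W}(0)$ and lemma \ref{LemmaX0} for $\overline{X}(0)=0$ to kill the cubic term and obtain (\ref{YbarLocal}). Where you genuinely diverge is the positivity claim: the paper proves $\overline{Y}>0$ directly and independently of the auxiliary solutions, by integrating (\ref{ODE_Y}) over a period to get $\overline{f(Y)}=0$ and then applying Jensen's inequality to the strictly convex $f$, so that $f(\overline{Y})<\overline{f(Y)}=0$ and hence $\overline{Y}>0$ since $f(0)=0$, $f'<0$. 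You instead obtain positivity as a corollary of monotonicity, writing $\overline{Y}(A)=\overline{Y}(0)+\int_0^A\overline{Z}(s)\,ds>0$; this is correct, but it makes positivity contingent on the whole $Z$, $W$ machinery and on smooth dependence of the periodic orbit on $A$ (which you rightly flag), whereas the Jensen route is a two-line, self-contained argument and is precisely what generalizes verbatim to arbitrary zero-mean periodic forcing in theorem \ref{TheoremYbarGene}. Two minor remarks: your instinct that the sign pattern of $Z$ from proposition \ref{PropositionTz} alone cannot give $\overline{Z}>0$ is right, and the paper does not attempt it; and your fallback sketch via $\frac{d}{dt}\bigl(Z\,e^{B}\bigr)=e^{B}\sin\omega t$ is not convincing as stated, since after the second integration the term $Z_0\int_0^T e^{-B(t)}\,dt$ is negative (lemma \ref{LemmaZ0}) and monotonicity of $e^{B}$ does not obviously dominate it — but you only offer it as an alternative, and your principal argument stands.
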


\begin{proof}
$T$-periodicity of $Y$ and $\sin \omega t$ in (\ref{ODE_Y}) yields $\overline{f(Y)}=0$. Applying Jensen's inequality to convex $f$ gives $f(\overline{Y})<\overline{f(Y)}=0$. Since $f(0)=0$ and $f^{'}<0$, then $\overline{Y}>0$. Proposition \ref{PropositionW} and (\ref{Zeps0}) yield $\overline{Z}(0)=0$ and $\overline{W}=\frac{\partial\,\overline{Z}}{\partial\,A}>0$, and hence $\overline{Z}(A)=\frac{\partial\,\overline{Y}}{\partial\,A}(A)>0$, which proves the second inequality. At $A\ll 1$, a Taylor expansion gives
$$
\begin{array}{lll}
\displaystyle
\overline{Y}(A) &=& \displaystyle \overline{Y}(0)+A\,\frac{\partial\,\overline{Y}}{\partial\,A}(0) +\frac{\textstyle A^2}{\textstyle 2}\,\frac{\partial^2\,\overline{Y}}{\partial\,A^2}(0)+\frac{\textstyle A^3}{\textstyle 6}\,\frac{\partial^3\,\overline{Y}}{\partial\,A^3}(0)+{\cal O}(A^4),\\
[6pt]
&=&\displaystyle \overline{Y}(0)+A\,\overline{Z}(0) +\frac{\textstyle A^2}{\textstyle 2}\,\overline{W}(0)+\frac{\textstyle A^3}{\textstyle 6}\,\overline{X}(0)+{\cal O}(A^4).
\end{array}
$$
From (\ref{Yeps0}) and (\ref{Zeps0}), we obtain $\overline{Y}(0)=0$ and $\overline{Z}(0)=0$. Lemmas \ref{LemmaW0} and \ref{LemmaX0} then give the local estimate of $\overline{Y}$. 
\qquad\end{proof}\\

The evolution of $\overline{Y}$ with $A$ is presented in figure \ref{FigYbar}. With $A=2$, the relative error between $\overline{Y}$ and its local estimate (\ref{YbarLocal}) is less than $5\%$. Figure \ref{FigYbar}-(a) may be rather misleading as far as moderate values of $A$ are concerned: it might seem to suggest that $\overline{Y}$ is always greater than the local estimate in (\ref{YbarLocal}). This is not in fact the case with greater values of $A$: the position of $\overline{Y}$ relative to its local estimate is not constant (b).


\subsection{Maximum aperture of the crack}

Let 
$$
Y_{\max}=Y(t_{Y_{\max}})=\sup_{t\in[0,\,T]}Y(t)
$$
be the maximum value of the $T$-periodic solution $Y$. We introduce a  technical lemma.

\begin{lemma}
Let $h\in C^3\left(I\times\Lambda,\,\mathbb{R}\right)$, where $I$ and $\Lambda$ are open subsets of $\mathbb{R}$. The following properties are assumed to hold whatever the value of $\lambda$ in $\Lambda$:
\begin{romannum}
\item $\displaystyle h^+(\lambda)=\sup_{z \in I}h(z,\,\lambda)$ is reached at a single point $z^+(\lambda)\in I$;
\item $\displaystyle \frac{\textstyle \partial^2\,h}{\textstyle \partial\,z^2}(z^+(\lambda),\,\lambda)<0$;
\item $\forall z\in I$, $\displaystyle \frac{\textstyle \partial\,h}{\textstyle \partial\,\lambda}>0$, $\displaystyle \frac{\textstyle \partial^2\,h}{\textstyle \partial\,\lambda^2}>0$.
\end{romannum}
Then $h^+(\lambda)\in C^2(I,\,\mathbb{R})$, and $\lambda\mapsto h^+(\lambda)$ is a strictly increasing and convex function.
\label{LemmaYmax}
\end{lemma}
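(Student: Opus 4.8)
The plan is to realize $h^+$ as the restriction of $h$ to the branch of critical points $\lambda\mapsto z^+(\lambda)$ and then to differentiate twice by the envelope theorem. First I would observe that, since for each $\lambda\in\Lambda$ the supremum $h^+(\lambda)$ is attained at an interior point $z^+(\lambda)$ of the open set $I$, that point is a critical point of $h(\cdot,\lambda)$, i.e. $\partial_z h(z^+(\lambda),\lambda)=0$. Hypothesis (ii) gives $\partial_{zz} h(z^+(\lambda),\lambda)\neq 0$, and since $h\in C^3$ the map $(z,\lambda)\mapsto\partial_z h(z,\lambda)$ is $C^2$; the implicit function theorem then provides, near any fixed $\lambda_0\in\Lambda$, a unique $C^2$ branch $\zeta(\lambda)$ of zeros of $\partial_z h$ with $\zeta(\lambda_0)=z^+(\lambda_0)$. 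To identify this branch with $z^+$, I would first check that $z^+$ is continuous --- a standard consequence of the single-maximizer hypothesis (i) once the maximizing points are known to stay in a compact subset of $I$, which holds in the applications of this lemma --- and then use that $\zeta$ consists of nondegenerate local maxima (by continuity of $\partial_{zz} h$) together with this continuity to conclude $\zeta\equiv z^+$ near $\lambda_0$. Hence $z^+$ is $C^2$ in $\lambda$, and so is $h^+(\lambda)=h(z^+(\lambda),\lambda)$.

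Next I would differentiate. The chain rule together with $\partial_z h(z^+(\lambda),\lambda)=0$ yields the envelope identity
\[
\frac{d\,h^+}{d\,\lambda}(\lambda)=\partial_\lambda h\bigl(z^+(\lambda),\lambda\bigr),
\]
which is strictly positive by (iii), so $h^+$ is strictly increasing. Differentiating the critical-point equation $\partial_z h(z^+(\lambda),\lambda)=0$ gives $\dfrac{d\,z^+}{d\,\lambda}=-\dfrac{\partial_{z\lambda} h(z^+(\lambda),\lambda)}{\partial_{zz} h(z^+(\lambda),\lambda)}$; substituting this into the derivative of the envelope identity leads to
\[
\frac{d^2 h^+}{d\,\lambda^2}(\lambda)=\partial_{\lambda\lambda} h\bigl(z^+(\lambda),\lambda\bigr)-\frac{\bigl(\partial_{z\lambda} h(z^+(\lambda),\lambda)\bigr)^2}{\partial_{zz} h(z^+(\lambda),\lambda)}.
\]
The first term is strictly positive by (iii), and the second is nonnegative because $\partial_{zz} h(z^+(\lambda),\lambda)<0$ by (ii); hence $\dfrac{d^2 h^+}{d\,\lambda^2}>0$ and $h^+$ is strictly convex.

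The two differentiations above are the routine part: they are exactly the classical envelope-theorem computation, and the monotonicity and convexity then drop out of the sign conditions (ii)--(iii). The one genuinely delicate point is the regularity step, because the implicit function theorem only furnishes a \emph{local} branch of critical points; one must invoke the uniqueness hypothesis (i), through the continuity of the maximizer $z^+$, to be certain that this branch is globally the maximizer and not some other critical point. That identification is where I would concentrate the effort; everything after it is mechanical.
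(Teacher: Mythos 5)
Your proof is correct and follows essentially the same route as the paper's: the implicit function theorem applied to $\partial h/\partial z$ at the nondegenerate interior critical point gives $z^+(\lambda)\in C^2$, and then the envelope-theorem computations for the first and second derivatives, combined with the sign hypotheses (ii)--(iii), yield strict monotonicity and strict convexity, exactly as in the paper (your formula $\partial^2_{\lambda\lambda}h-(\partial^2_{z\lambda}h)^2/\partial^2_{zz}h$ is the paper's expression rewritten via $dz^+/d\lambda=-\partial^2_{z\lambda}h/\partial^2_{zz}h$). The branch-identification point you flag --- that the local IFT branch must be shown, e.g.\ via continuity of $z^+$, to coincide with the global maximizer --- is a legitimate subtlety that the paper passes over in silence, so your treatment is if anything slightly more careful.
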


\begin{proof}
The definition of $z^+$ in (i) gives
\begin{equation}
\frac{\textstyle \partial\,h}{\textstyle \partial\,z}\left(z^+(\lambda),\,\lambda\right)=0.
\label{ProofLemmaYmax1}
\end{equation}
In addition, (ii) ensures that $\frac{\partial^2\,h}{\partial\,z^2}(z^+(\lambda),\,\lambda)\neq0$. The implicit function theorem can be applied to $\frac{\partial\,h}{\partial\,z}\in C^2(I\times\Lambda,\,\mathbb{R})$: therefore, $\lambda\mapsto z^+(\lambda) \in C^2(\Lambda,\,\mathbb{R})$, and $h^+(\lambda)=h(z^+,\,\lambda)\in C^2(I,\,\mathbb{R})$.

Based on the first property in (iii) and (\ref{ProofLemmaYmax1}),
$$
\begin{array}{lll}
\displaystyle
\frac{\textstyle d\,h^+}{\textstyle d\,\lambda}(\lambda) &=& \displaystyle \frac{\textstyle \partial\,h}{\textstyle \partial\,\lambda}(z^+,\,\lambda)+\frac{\textstyle \partial\,h}{\textstyle \partial\,z}(z^+,\,\lambda)\,\frac{\textstyle d\,z^+}{\textstyle d\,\lambda},\\
[10pt]
\displaystyle
&=& \displaystyle \frac{\textstyle \partial\,h}{\textstyle \partial\,\lambda}(z^+,\,\lambda)>0,
\end{array}
$$
which proves that $h^+$ is a strictly increasing function of $\lambda$. On the other hand, differentiating (\ref{ProofLemmaYmax1}) in terms of $\lambda$ gives
\begin{equation}
\frac{\textstyle \partial^2\,h}{\textstyle \partial\,\lambda\,\partial\,z}(z^+,\,\lambda)+\frac{\textstyle \partial^2\,h}{\textstyle \partial\,z^2}(z^+,\,\lambda)\,\frac{\textstyle d\,z^+}{\textstyle d\,\lambda}=0.
\label{ProofLemmaYmax2}
\end{equation}
Using (ii), the second property in (iii), and (\ref{ProofLemmaYmax2}), we obtain
$$
\begin{array}{lll}
\displaystyle
\frac{\textstyle d^2\,h^+}{\textstyle d\,\lambda^2}(\lambda) &=& \displaystyle \frac{\textstyle \partial^2\,h}{\textstyle \partial\,\lambda^2}(z^+,\,\lambda)+\frac{\textstyle \partial^2\,h}{\textstyle \partial\,z\,\partial\,\lambda}(z^+,\,\lambda)\,\frac{\textstyle d\,z^+}{\textstyle d\,\lambda},\\
[10pt]
&=& \displaystyle \frac{\textstyle \partial^2\,h}{\textstyle \partial\,\lambda^2}(z^+,\,\lambda)-\frac{\textstyle \partial^2\,h}{\textstyle \partial\,z^2}(z^+,\,\lambda)\left(\frac{\textstyle d\,z^+}{\textstyle d\,\lambda}\right)^2>0,
\end{array}
$$
which proves that $h^+$ is a strictly convex function of $\lambda$.
\qquad\end{proof}\\

\begin{theorem}
$Y_{\max}$ is a strictly increasing convex function of $A$, and the following bounds hold for all $A$:
\begin{equation}
\max\left(\frac{\textstyle A}{\textstyle \sqrt{1+\omega^2}},\,\frac{\textstyle A}{\textstyle \omega}\,\frac{\textstyle 1}{\textstyle \sqrt{1+\omega^2}}-\frac{\textstyle T}{\textstyle 2}\,|f_{\min}|\right)<Y_{\max}<\frac{\textstyle 2\,A}{\textstyle \omega}.
\label{Ymax}
\end{equation}
The first lower bound of $Y_{\max}$ is also a local estimate for small forcing levels $A\ll 1$
\begin{equation}
Y_{\max}=\frac{\textstyle A}{\textstyle \sqrt{1+\omega^2}}+{\cal O}\left(A^2\right).
\label{YmaxLocal}
\end{equation}
\label{TheoremYmax}
\end{theorem}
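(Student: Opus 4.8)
The statement has three parts: the strict monotonicity and convexity of $A\mapsto Y_{\max}$, the two-sided bounds (\ref{Ymax}), and the local estimate (\ref{YmaxLocal}). The plan is to deduce the first from the technical Lemma~\ref{LemmaYmax}, and to obtain the other two by elementary integration of (\ref{ODE_Y}) between the critical times identified in Proposition~\ref{PropositionTy} and Corollary~\ref{CorollaryTy}. Throughout I take $A>0$; at $A=0$ one has $Y\equiv0$ and the inequalities degenerate to equalities.

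For the monotonicity and convexity I would fix $A_0>0$ and apply Lemma~\ref{LemmaYmax} with $h(z,\lambda)=Y(t,A)$, where $z=t$ ranges over a small open interval $I$ around $t_{Y_{\max}}(A_0)$ and $\lambda=A$ over a small open interval $\Lambda$ around $A_0$. Joint $C^3$ regularity of $Y$ in $(t,A)$ follows from the smoothness of $f$ and from the linear variational equations (\ref{ODE_Z})--(\ref{ODE_X}) satisfied by $Z=\partial_A Y$, $W=\partial_A^2 Y$, $X=\partial_A^3 Y$. Hypothesis (ii): at $t_{Y_{\max}}$ one has $\partial_t Y=0$, so differentiating (\ref{ODE_Y}) gives $\partial_t^2 Y(t_{Y_{\max}},A)=A\,\omega\cos\omega t_{Y_{\max}}<0$ since $T/4<t_{Y_{\max}}<T/2$ (Proposition~\ref{PropositionTy}); this nondegeneracy feeds the implicit function theorem, which yields a $C^2$ branch $A\mapsto t_{Y_{\max}}(A)$, so after shrinking $\Lambda$ we may assume $t_{Y_{\max}}(\Lambda)\subset I$, and then $h^+(A)=\sup_{t\in I}Y(t,A)=Y_{\max}(A)$, attained only at $t_{Y_{\max}}(A)$ (hypothesis (i)). Hypothesis (iii): $\partial_A Y=Z$, and Corollary~\ref{CorollaryTy}(i) together with Proposition~\ref{PropositionTz} give $t_{Z_1}<t_{Y_{\max}}<t_{Z_2}$, hence $Z(t_{Y_{\max}},A)>0$; after shrinking $I$ and $\Lambda$, $Z>0$ on $I\times\Lambda$; moreover $\partial_A^2 Y=W>0$ everywhere by Proposition~\ref{PropositionW}. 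Lemma~\ref{LemmaYmax} then gives that $Y_{\max}$ is $C^2$, strictly increasing and strictly convex near $A_0$; since $A_0$ was arbitrary this holds on $(0,+\infty)$, and $Y_{\max}(0)=0<Y_{\max}(A)$ settles the origin.

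For the bounds I would argue as follows. \emph{Upper bound}: on $[t_{Y_1},t_{Y_{\max}}]$ one has $Y\ge0$, hence $f(Y)\le0$ and $Y'\le A\sin\omega t$; integrating from $t_{Y_1}$ (where $Y=0$) gives $Y_{\max}<\frac{A}{\omega}\,(\cos\omega t_{Y_1}-\cos\omega t_{Y_{\max}})<\frac{2A}{\omega}$, using $\omega t_{Y_1}\in(0,\pi/2)$ and $\omega t_{Y_{\max}}\in(\pi/2,\pi)$. \emph{First lower bound}: since $\partial_A^2 Y=W>0$, for $0<s$ one has $Z(t,s)=Z(t,0)+\int_0^s W(t,u)\,du>Z(t,0)$, so $Y(t,A)=\int_0^A Z(t,s)\,ds>A\,Z(t,0)=\frac{A}{\sqrt{1+\omega^2}}\sin(\omega t-\theta)$ by (\ref{Zeps0}); evaluating at $t=\theta/\omega+T/4$ yields $Y_{\max}\ge Y(\theta/\omega+T/4,A)>A/\sqrt{1+\omega^2}$. \emph{Second lower bound} (vacuous when $|f_{\min}|=\infty$): from $f(Y)>f_{\min}=-|f_{\min}|$ we get $Y'>-|f_{\min}|+A\sin\omega t$; integrating from $t_{Y_1}$ to $t^\star=\theta/\omega+T/4$, and using $t_{Y_1}<\theta/\omega<t^\star$ from Corollary~\ref{CorollaryTy}(iii), gives $Y_{\max}\ge Y(t^\star)>-|f_{\min}|(t^\star-t_{Y_1})+\frac{A}{\omega}\,(\cos\omega t_{Y_1}+\sin\theta)$; then $t^\star-t_{Y_1}<T/2$ and $\cos\omega t_{Y_1}>\cos\theta=1/\sqrt{1+\omega^2}$ give the stated bound. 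Finally, the \emph{local estimate}: Proposition~\ref{PropositionTy} yields $|Y(t,A)-A\,Z(t,0)|={\cal O}(A^2)$ uniformly in $t\in[0,T]$ (the Taylor remainder is bounded by $\tfrac12 A^2\sup|W|$ over a fixed compact), so $Y_{\max}(A)=\max_t Y(t,A)=A\max_t Z(t,0)+{\cal O}(A^2)$, and $\max_t Z(t,0)=1/\sqrt{1+\omega^2}$ by (\ref{Zeps0}) gives (\ref{YmaxLocal}); equivalently the envelope identity $\frac{d}{dA}Y_{\max}=Z(t_{Y_{\max}}(A),A)$ gives the same slope at $0^+$.

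The main obstacle is organizational rather than analytic: checking the hypotheses of Lemma~\ref{LemmaYmax}, in particular the joint $C^3$ regularity of $Y$ and then --- through the nondegeneracy $\partial_t^2 Y(t_{Y_{\max}})\ne0$ and the implicit function theorem --- identifying the abstract maximizer $z^+(\lambda)$ of $h$ over the small interval $I$ with the genuine critical time $t_{Y_{\max}}(A)$ of Proposition~\ref{PropositionTy}, so that $h^+$ truly coincides with $Y_{\max}$ on $\Lambda$. Once that is in place, together with $W>0$ and the location of $t_{Y_1},t_{Y_{\max}}$, the bounds and the local estimate are routine.
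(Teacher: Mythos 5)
Your proposal is correct and follows essentially the same route as the paper: Lemma \ref{LemmaYmax} (with the nondegeneracy $\partial_t^2Y(t_{Y_{\max}},A)=A\,\omega\cos\omega t_{Y_{\max}}<0$, $Z>0$ near $t_{Y_{\max}}$ and $W>0$) for the convex increase, integration of $f_{\min}+A\sin\omega t<Y'<A\sin\omega t$ between critical times for the bounds, and a first-order Taylor/tangent argument with (\ref{Zeps0}) for the first lower bound and the local estimate. Your only deviations are cosmetic — you compare $Y(t,A)>A\,Z(t,0)$ pointwise at the fixed time $\theta/\omega+T/4$ and integrate the lower differential inequality up to that fixed time, whereas the paper uses convexity of $Y_{\max}$ itself and integrates up to $t_{Y_{\max}}$ — and you are in fact more explicit than the paper about localizing the hypotheses of Lemma \ref{LemmaYmax} around $t_{Y_{\max}}(A_0)$.
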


\begin{figure}[htbp]
\begin{center}
\begin{tabular}{cc}
(a)&(b)\\
\includegraphics[scale=0.31]{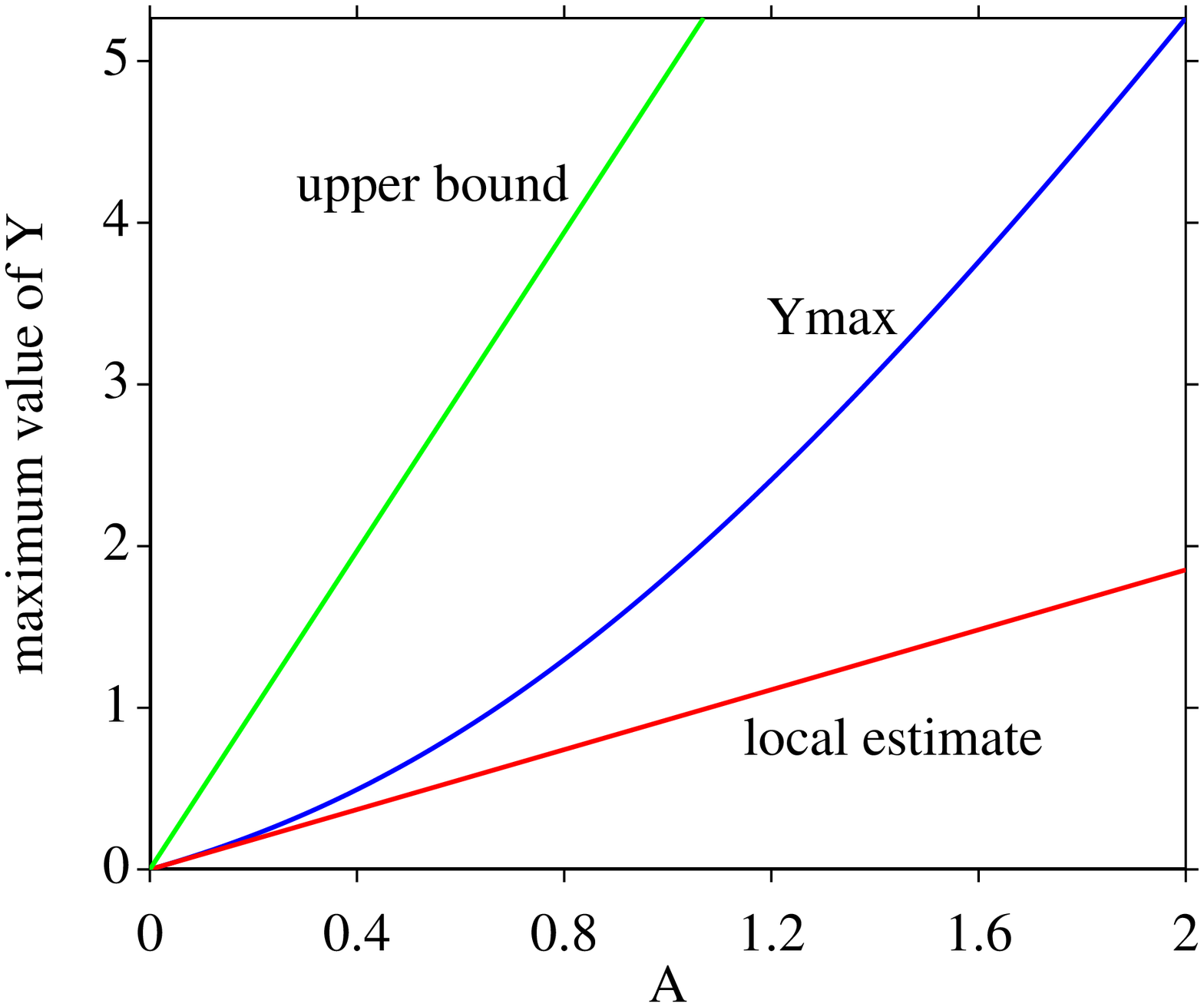} &
\includegraphics[scale=0.31]{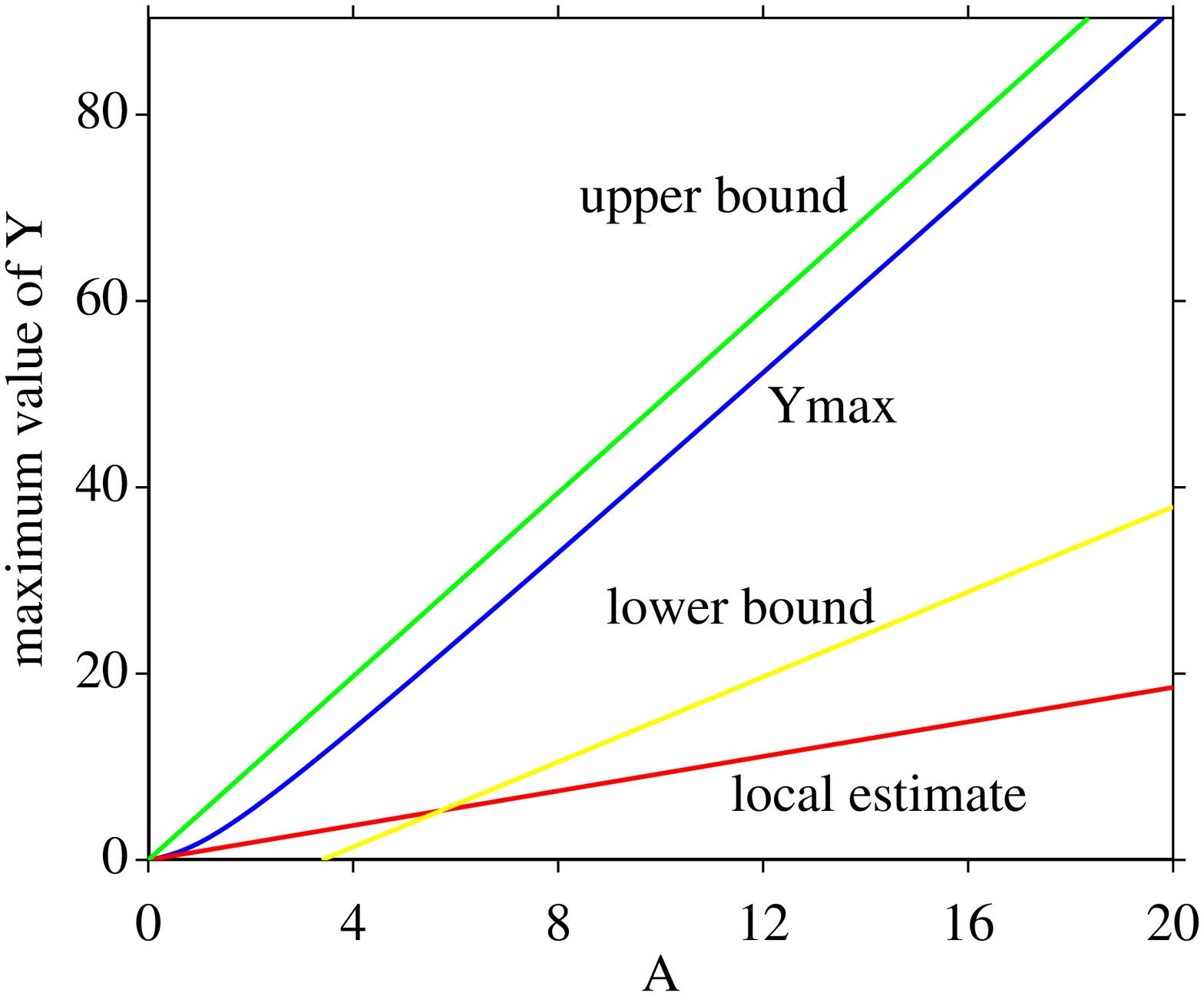} 
\end{tabular}
\end{center}
\caption{Evolution of $Y_{\max}$ with the forcing amplitude $A$, at $0\leq A\leq 2$ (a) and $0\leq A\leq 20$ (b), in the case of model 1 (\ref{Model1}). See theorem \ref{TheoremYmax}.} 
\label{FigYmax}
\end{figure}

\begin{proof}
To analyse the evolution of $Y_{\max}$ in terms of $A$, we proceed in three steps. First, the phase portrait of $Y$ and the proposition \ref{PropositionTy} show that $Y(t,\,A)$ reaches its maximum value at a single time between $T/4$ and $T/2$. Secondly, the definition of $t_{Y_{\max}}$ and proposition \ref{PropositionTy} give
$$
\begin{array}{lll}
\displaystyle
\frac{\textstyle \partial^2\,Y}{\textstyle \partial\,t^2}(t_{Y_{\max}},\,A)&=& \displaystyle f^{'}(Y_{\max})\,\frac  {\textstyle \partial\,Y}{\textstyle \partial\,t}(t_{Y_{\max}},\,A)+A\,\omega\,\cos \omega t_{Y_ {\max}},\\
[10pt]
&=& \displaystyle A\,\omega\,\cos \omega t_{Y_{\max}}<0.
\end{array}
$$
Thirdly, corollary \ref{CorollaryTy} and proposition \ref{PropositionW} yield $\frac{\partial\,Y}{\partial\,A}>0$ and $\frac{\partial\,Y^2}{\partial\,A^2}>0$ on $[T/4,\,T/2]$. The three hypotheses in lemma \ref{LemmaYmax} are therefore satisfied, which proves the convex increasing of $Y_{\max}$ with $A$.

The bounds of $Y_{\max}$ are proved by building upper and lower solutions of $Y$, as in the proof of proposition \ref{PropositionYperiod}, case 3. In $]t_{Y_1},\,t_{Y_{\max}}[$, proposition \ref{PropositionTy} ensures $Y> 0$, which means that $f_{\min}< f(Y)<0$, and hence $f_{\min}+A\,\sin \omega t < \frac{d\,Y}{d\,t}< A\,\sin \omega t$. Integrating this inequality from $t_{Y_1}$ to $t_{Y_{\max}}$ gives
$$
\frac{\textstyle A}{\textstyle \omega}\,\left(\cos \omega t_{Y_1}-\cos \omega t_{Y_{\max}}\right)+\left(t_{Y_1}-t_{Y_{\max}}\right)\,|f_{\min}|< Y_{\max}< \frac{\textstyle A}{\textstyle \omega}\,\left(\cos \omega t_{Y_1}-\cos \omega t_{Y_{\max}}\right).
$$
Proposition \ref{PropositionTy} and corollary \ref{CorollaryTy}-(iii) give the bounds $T/4<t_{Y_{\max}}<T/2$ and $0<t_{Y_1}<\theta/\omega$, and hence 
$$
\left\{
\begin{array}{l}
\displaystyle
\cos \theta < \cos \omega t_{Y_1} < 1,\\
[6pt]
\displaystyle
0 < - \cos \omega t_{Y_{\max}}<1.
\end{array}
\right.
$$
These inequalities, together with the definition of $\theta$ in (\ref{Zeps0}), yield
$$
\frac{\textstyle A}{\textstyle \omega}\,\frac{\textstyle 1}{\textstyle \sqrt{1+\omega^2}}<\frac{\textstyle A}{\textstyle \omega}\,\left(\cos \omega t_{Y_1}-\cos \omega t_{Y_{\max}}\right)<\frac{\textstyle 2\,A}{\textstyle \omega}.
$$
Proposition \ref{PropositionTy} also gives $t_{Y_1}-t_{Y_{\max}}>-T/2$, which gives the upper bound and a first lower bound in (\ref{Ymax}). The latter lower bound is not always an optimum bound: when $A < \frac{T}{2}\,|f_{\min}|\,\omega\,\sqrt{1+\omega^2}$, it is negative, whereas proposition \ref{PropositionTy} states that $Y_{\max}>0$ at all values of $A$. We therefore take advantage of the convex increasing of $Y_{\max}$ with $A$, which was previously proved. Since $Y_{\max}(0)=0$, then $Y_{\max}(A)$ is always above the straight line going through the origin with the slope $A\,Z(t_{Y_{\max}}(0^+),\,0)$. Proposition \ref{PropositionTy} and equation (\ref{Zeps0}) therefore give a second lower bound in (\ref{Ymax}). A second-order Taylor expansion of $Y_{\max}(A)$ proves that this lower bound is also the local estimate of $Y_{\max}$ when $A\ll1$.
\qquad\end{proof}\\

The evolution of $Y_{\max}$ stated in theorem \ref{TheoremYmax} is shown in figure \ref{FigYmax}. If $|f_{\min}|<\infty$ and $\omega\neq 1$, the two lower bounds given in (\ref{Ymax}) intersect at $A=\frac{T}{2}\,|f_{\min}|\,\frac{\omega}{1-\omega}\,\sqrt{1+\omega^2}$. At higher values of $A$, the second lower bound is more accurate than the first one.


\subsection{Maximum closure of the crack}

\begin{figure}[htbp]
\begin{center}
\begin{tabular}{cc}
(a)&(b)\\
\includegraphics[scale=0.31]{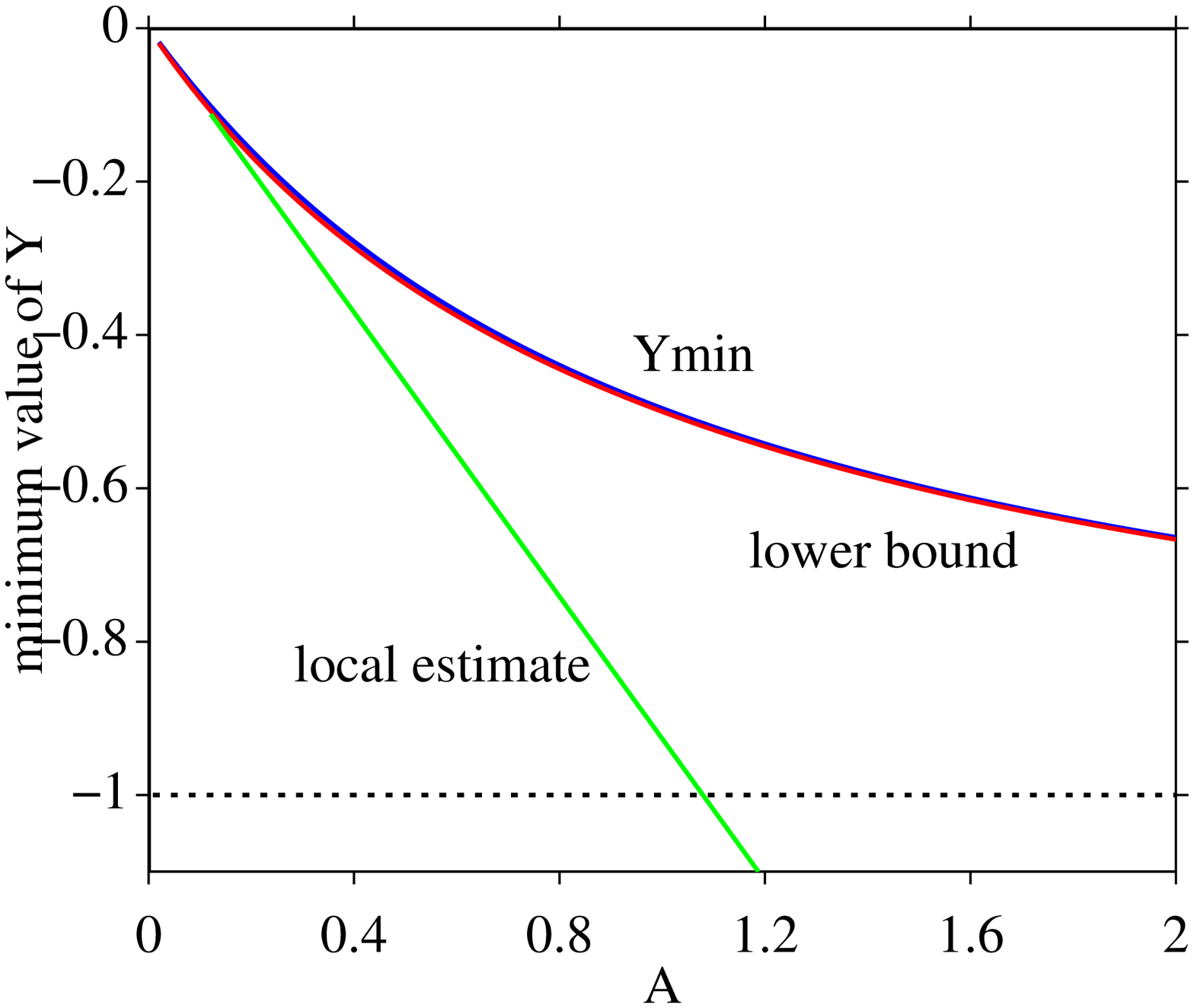} &
\includegraphics[scale=0.31]{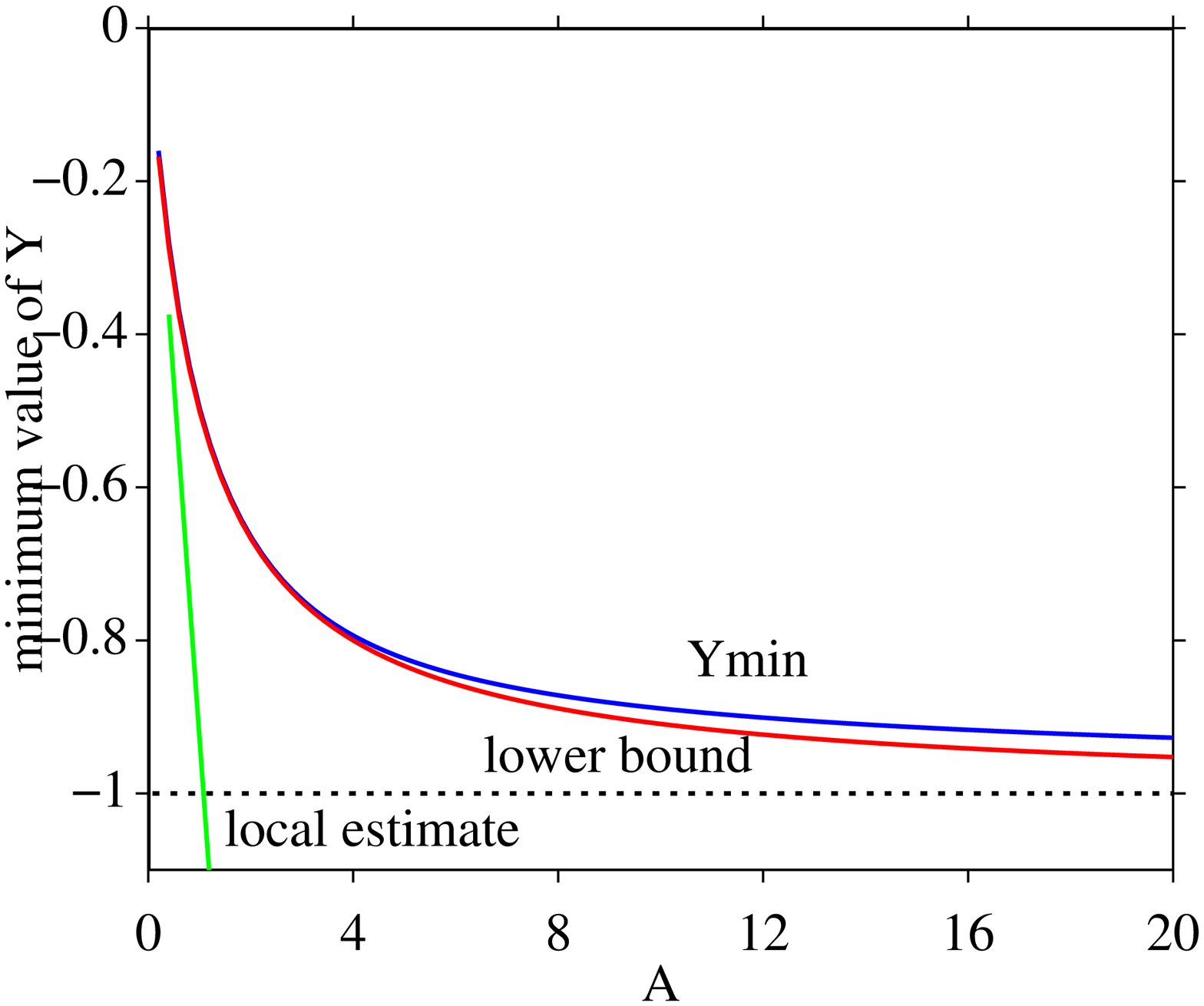} 
\end{tabular}
\end{center}
\caption{Evolution of $Y_{\min}$ with the forcing amplitude $A$, with $0\leq A\leq 2$ (a) and $0\leq A\leq 20$ (b), in the case of model 1 (\ref{Model1}). See theorem \ref{TheoremYmin}.} 
\label{FigYmin}
\end{figure}

Let
$$
Y_{\min}=Y(t_{Y_{\min}})=\inf_{t\in[0,\,T]}Y(t)
$$
be the minimum value of the $T$-periodic solution $Y$. 

\begin{theorem}
The following properties are satisfied by $Y_{\min}$:
\begin{romannum}
\item for all $A$, we obtain the bounds:
\begin{equation}
-1<f^{-1}(A)\leq Y_{\min} \leq 0;
\label{Ymin}
\end{equation}
\item at small forcing levels, we also obtain the local estimate
\begin{equation}
Y_{\min}=-\frac{\textstyle A}{\textstyle \sqrt{1+\omega^2}}+{\cal O}\left(A^2\right);
\label{YminLocal}
\end{equation}
\item at infinite forcing levels, the crack is completely closed:
$$
\lim_{A \rightarrow + \infty}Y_{\min}=-1.
$$
\end{romannum}
\label{TheoremYmin}
\end{theorem}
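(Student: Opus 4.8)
The plan is to prove the three parts of Theorem~\ref{TheoremYmin} in turn, relying heavily on the fence arguments and local estimates already established.

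\textbf{Part (i): the bounds $-1<f^{-1}(A)\leq Y_{\min}\leq 0$.} The upper bound $Y_{\min}\leq 0$ is immediate: since $\overline{f(Y)}=0$ (by $T$-periodicity of $Y$ and $\sin\omega t$) and $f$ is strictly decreasing with $f(0)=0$, the function $Y$ cannot stay strictly positive on $[0,T]$, so its infimum is $\leq 0$. Alternatively, Proposition~\ref{PropositionTy} already states $Y$ has a root and is negative at $t_{Y_{\min}}$, giving $Y_{\min}<0$ for $A>0$ and $Y_{\min}=0$ at $A=0$. For the lower bound, I would invoke the fence/invariant-set construction from the proof of Proposition~\ref{PropositionYperiod}: in both Case~2 and Case~3 the horizontal line $y=f^{-1}(A)$ is a weak lower nonporous fence for (\ref{ODE_Y}), and the compact (resp.\ suitably chosen) interval $K_0$ with lower endpoint $f^{-1}(A)$ is invariant under the flow. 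Hence the periodic solution $Y$ satisfies $Y(t)\geq f^{-1}(A)$ for all $t$, so $Y_{\min}\geq f^{-1}(A)$; and $f^{-1}(A)>-1$ by the properties (\ref{PropFm1}) of $f^{-1}$.

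\textbf{Part (ii): the local estimate.} This follows the same Taylor-expansion strategy used for $Y_{\max}$ in Theorem~\ref{TheoremYmax}. By Proposition~\ref{PropositionTy}, for $A\ll 1$ one has $Y(t,A)=A\,Z(t,0)+{\cal O}(A^2)$ uniformly in $t$, with $Z(t,0)=\frac{1}{\sqrt{1+\omega^2}}\sin(\omega t-\theta)$ from (\ref{Zeps0}). The minimum of $A\,Z(t,0)$ over a period is $-A/\sqrt{1+\omega^2}$, attained at $t=\theta/\omega+3T/4=t_{Y_{\min}}(0^+)$. Since $\frac{\partial^2}{\partial t^2}Y(t_{Y_{\min}},A)=A\,\omega\cos\omega t_{Y_{\min}}>0$ (as $3T/4<t_{Y_{\min}}<T$), the critical point is nondegenerate, so the implicit function theorem gives $t_{Y_{\min}}(A)=t_{Y_{\min}}(0^+)+{\cal O}(A)$ and the value expands as $Y_{\min}(A)=-A/\sqrt{1+\omega^2}+{\cal O}(A^2)$.

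\textbf{Part (iii): complete closure as $A\to+\infty$.} By part~(i), $f^{-1}(A)\leq Y_{\min}(A)<0$ for all $A>0$, and by (\ref{PropFm1}) $\lim_{A\to+\infty}f^{-1}(A)=-1$, so it suffices to produce an \emph{upper} bound for $Y_{\min}$ that also tends to $-1$, or else to bound $Y_{\min}$ above by something $\to -1$. The natural approach is again a one-sided solution estimate on a sub-interval near $t_{Y_{\min}}$: around $3T/4$ the isocline $I_0^y=f^{-1}(-A\sin\omega t)$ is very negative (close to $-1$), and $Y$ is trapped below it before $t_{Y_{\min}}$. More concretely, I would integrate (\ref{ODE_Y}) over an interval on which $\sin\omega t$ is bounded away from $0$ and negative, say $[t_{Y_2},\,t_{Y_{\min}}]$ or a fixed sub-interval of $]T/2,T[$; there $A\sin\omega t\to-\infty$, forcing $\frac{dY}{dt}$ to be very negative unless $f(Y)$ compensates, which by the asymptote $\lim_{y\to-1}f(y)=+\infty$ can only happen when $Y$ is close to $-1$. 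Combining this with the lower fence $Y\geq f^{-1}(A)\to -1$ squeezes $Y_{\min}\to -1$. \textbf{The main obstacle} is making this squeezing rigorous when $f_{\min}=-\infty$ (model~2), where the phase portrait has no asymptotes $T_1,T_2$ and the "very negative slope" argument must be balanced against the steep growth of $f$ near $-1$; the cleanest route is probably to show that for any $\varepsilon>0$, once $A$ is large enough the lower fence $f^{-1}(A)$ itself exceeds $-1-\varepsilon$ while a direct integration estimate on a fixed sub-interval of $]T/2,T[$ forces $Y$ to dip below $-1+\varepsilon$, so that $-1-\varepsilon<f^{-1}(A)\leq Y_{\min}<-1+\varepsilon$ eventually, whence the limit.
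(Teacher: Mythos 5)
Parts (i) and (ii) of your proposal coincide with the paper's own argument: the lower nonporous fence $f^{-1}(A)$ from the proof of Proposition \ref{PropositionYperiod} gives $-1<f^{-1}(A)\leq Y_{\min}$, the sign information comes from Proposition \ref{PropositionTy}, and the local estimate is the same second-order Taylor expansion in $A$ based on (\ref{Zeps0}). No issue there.

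The genuine gap is in part (iii), and it is more serious than the ``model 2'' technicality you flag; in fact the difficulty is at least as bad when $f_{\min}$ is finite (model 1). Your plan is to integrate $\frac{dY}{dt}\leq f(-1+\varepsilon)+A\sin\omega t$ over a fixed subinterval of $]T/2,\,T[$ on which $\sin\omega t\leq -c<0$ and conclude that $Y$ must dip below $-1+\varepsilon$. This cannot close as stated: the total impulse available from the forcing over the whole downswing is $\int_{T/2}^{T}A\,|\sin\omega t|\,dt=2A/\omega$, while the solution enters the downswing at a height of exactly that order (for model 1, $\frac{dY}{dt}\geq f_{\min}+A\sin\omega t$ gives $Y(T/2)\geq 2A/\omega-O(1)$, and Theorem \ref{TheoremYmax} only caps $Y_{\max}$ by $2A/\omega$). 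Hence starting the estimate from $Y(a)\leq Y_{\max}$, the inequality yields at best $Y(b)\leq O(1)$, never $Y(b)<-1+\varepsilon$, and restricting to a subinterval where $|\sin\omega t|\geq c$ only reduces the available pull-down. (A minor slip as well: before $t_{Y_{\min}}$ the decreasing solution lies \emph{above} the isocline $I_0^y$, not below it.) A direct argument would have to start from a time where $Y$ is already $O(1)$, e.g. $t_{Y_2}$ where $Y=0$, and then control how close $t_{Y_2}$ can get to $T$ as $A\to+\infty$; your sketch does not address this, so part (iii) remains unproved.

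For comparison, the paper proves (iii) by a different mechanism: it uses the exact balance $\overline{f(Y)}=0$, written as $I=J$ in (\ref{ProofYminIJ}) with $J$ the contribution of the positive arc $[t_{Y_1},t_{Y_2}]$ and $I$ that of the set where $Y<0$; it then bounds $J$ from below through the explicit solution $s$ of $s'=-s+A\sin\omega t$, $s(t_{Y_1})=0$, which is a lower solution of $Y$ on the positive arc because $f(y)\geq -y$, and argues that $J$ (hence $I$) grows with $A$, so that $f(Y_{\min})$ must blow up and $Y_{\min}\to-1$. Be aware, however, that this quantitative step is itself the delicate point: the inequality $|f(s)|\geq s$ used there conflicts with the convexity bound $|f(s)|\leq s$, and for model 1 one has $J\leq T\,|f_{\min}|$, so a lower bound on $J$ growing linearly in $A$ cannot hold for large $A$. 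So whichever route you take, the real work in (iii) is extracting the blow-up of $f(Y_{\min})$ — for instance from the shrinking of the time set where $Y<0$, or from a careful analysis of $Y$ after $t_{Y_2}$ — and that ingredient is missing from your proposal.
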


\begin{proof}

\underline{Property (i)}.
Proposition \ref{PropositionTy} shows that $Y_{\min}$ is negative. As in the proof of proposition \ref{PropositionY0}, the existence of the lower nonporous fence $f^{-1}(A)$ means that $Y_{\min}\geq f^{-1}(A)$. 

\underline{Property (ii)}. As in the proofs of proposition \ref{PropositionY0} and theorem \ref{TheoremYmax}, the local estimate when $A \ll 1$ follows from a second-order Taylor expansion of $Y$ and from (\ref{Zeps0}).

\underline{Property (iii)}. First we consider the property $\overline{f(Y)}=0$ used in the proof of theorem \ref{TheoremYbar}. Evolution of $f$ in (\ref{ODE_Y}) and the definition of $t_{Y_1}$, $t_{Y_2}$ in proposition \ref{PropositionTy} give
$$
\begin{array}{lll}
0 &=& \displaystyle
\int_0^{t_{Y_1}}f(Y(\tau))\,d\tau+\int_{t_{Y_1}}^{t_{Y_2}}f(Y(\tau))\,d\tau+\int_{t_{Y_2}}^Tf(Y(\tau))\,d\tau,\\
[8pt]
&=& \displaystyle -\left(\int_0^{t_{Y_1}}|f(Y(\tau))|\,d\tau+\int_{t_{Y_2}}^T|f(Y(\tau))|\,d\tau\right)+\int_{t_{Y_1}}^{t_{Y_2}}|f(Y(\tau))|\,d\tau,
\end{array}
$$
hence
\begin{equation}
\underbrace{\left(\int_0^{t_{Y_1}}|f(Y(\tau))|\,d\tau+\int_{t_{Y_2}}^T|f(Y(\tau))|\,d\tau\right)}_{I}=\underbrace{\int_{t_{Y_1}}^{t_{Y_2}}|f(Y(\tau))|\,d\tau}_{J}.
\label{ProofYminIJ}
\end{equation}
Secondly, let us consider the ODE
\begin{equation}
\left\{
\begin{array}{l}
\displaystyle
\frac{\textstyle d\,s}{\textstyle d\,t}=-s+A\,\sin\omega t,\\
[8pt]
\displaystyle
s(t_{Y_1})=0.
\end{array}
\right.
\label{ProofYminODE}
\end{equation}
Straightforward calculations and property (iii) of corollary \ref{CorollaryTy} give
\begin{equation}
s(t)=\frac{\textstyle A}{\textstyle \sqrt{1+\omega^2}}\,\left(|\sin (\omega t_{Y_1}-\theta)|\,e^{-(t-t_{Y_1})} +\sin(\omega t-\theta)\right),
\label{ProofYminS}
\end{equation}
with $\theta$ (\ref{Zeps0}). Comparison between (\ref{ODE_Y}) and (\ref{ProofYminODE}) shows that $s$ is a lower solution of $Y$ when $Y\geq0$: $\forall\,t\in[t_{Y_1},\,t_{Y_2}]$, $Y(t)\geq s(t)$. Since $t_{Y_{\max}}<t_{Y_2}$, it follows from (\ref{ProofYminIJ}) and (\ref{ProofYminS}) that
$$
\begin{array}{lll}
J&\geq& \displaystyle \int_{t_{Y_1}}^{t_{Y_{\max}}}|f(Y(\tau))|\,d\tau \geq \int_{t_{Y_1}}^{t_{Y_{\max}}}|f(s(\tau))|\,d\tau 
,\\
[8pt]
&\geq & \displaystyle \frac{\textstyle A}{\textstyle \sqrt{1+\omega^2}} \int_{t_{Y_1}}^{t_{Y_{\max}}} \left(|\sin (\omega t_{Y_1}-\theta)|\,e^{-(\tau-t_{Y_1})} +\sin(\omega \tau-\theta)\right)\,d\tau,\\
[9pt]
&\geq & \displaystyle \frac{\textstyle A}{\textstyle \sqrt{1+\omega^2}} \,\Delta,
\end{array}
$$
with
$$
\Delta=|\sin (\omega t_{Y_1}-\theta)|\,\left(1-e^{(t_{Y_{\max}}-t_{Y_1})}\right)+\frac{\textstyle 1}{\textstyle \omega}\left(\cos(\omega t_{Y_1}-\theta)-\cos(\omega t_{Y_{\max}}-\theta)\right).
$$
Proposition \ref{PropositionTy} and property (iii) of corollary \ref{CorollaryTy} give
$$
\left\{
\begin{array}{l}
\displaystyle
1-e^{T/2}<1-e^{(t_{Y_{\max}}-t_{Y_1})}<1-e^{T/4-\theta/\omega},\\
[6pt]
\displaystyle
0<|\sin (\omega t_{Y_1}-\theta)|<\sin \theta,\\
[6pt]
\displaystyle
\frac{\textstyle 2\,\cos\theta}{\textstyle \omega}< \frac{\textstyle 1}{\textstyle \omega}\left(\cos(\omega t_{Y_1}-\theta)-\cos(\omega t_{Y_{\max}}-\theta)\right) <\frac{\textstyle 1-\sin\theta}{\textstyle \omega},
\end{array}
\right.
$$
hence $\Delta$ is bounded independently of $A$. Finally, we obtain
$$
J\geq \frac{\textstyle 2\,A}{\textstyle \omega\,\left(1+\omega^2\right)}.
$$
$J$ blows up when $A \rightarrow +\infty$. Equation (\ref{ProofYminIJ}) means that $I$ behaves in a similar way. Since $Y<0$ in $]0,\,t_{Y_1}[\cup]t_{Y_2},\,T[$ and given $f$ in (\ref{ODE_Y}), $Y$ must tend towards -1 when $A \rightarrow +\infty$, to make $I$ blow up, which concludes the proof.
\qquad\end{proof}\\

The evolution of $Y_{\min}$ stated in theorem \ref{TheoremYmin} is shown in figure \ref{FigYmin}. At moderate forcing levels (a), one cannot distinguish between $Y_{\min}$ and its lower bound. At higher forcing levels (b), one observes that $Y_{\min}$ is above its lower bound, as stated in theorem \ref{TheoremYmin}. Although no rigorous proof has been obtained so far, the numerical simulations indicate that $Y_{\min}$ decreases strictly as $A$ increases. 


\section{Generalization}\label{SecGeneralization}

\subsection{Periodic forcing}\label{SecGenePeriodic}

Most of the results obtained in sections \ref{SecPreRes} and \ref{SecQualitY} were based on the simple analytical expression of sinusoidal forcing. In the case of more general forcing, the key properties stated in proposition \ref{PropositionTy} and corollary \ref{CorollaryTy} are lost, which makes it impossible to obtain estimates such as theorems \ref{TheoremYmax} and \ref{TheoremYmin}. However, some properties are maintained if two assumptions are made about $S$ in (\ref{ODEorig}): $T_0$-periodicity, and a null mean value. The latter assumption is physically meaningfull: as deduced from (\ref{Vinc}), a non-null mean value of $S$ results in an amplitude of $u_I$ that increases linearly with $t$.

Since the techniques required are the same as those used in the above sections, the derivations will be shortened. Setting the parameters and the Fourier decomposition of the source as follows
\begin{equation}
\begin{array}{l}
\displaystyle
v_0=\frac{\textstyle 1}{\textstyle 2\,\rho_0\,c_0^2}\,\max_{t\in[0,\,T]}\left|S(t/\beta)\right|,
\quad A=\frac{\textstyle 2\,v_0}{\textstyle \beta\,d},\quad y=\frac{\textstyle [u(\alpha,\,t)]}{\textstyle d},\quad f(y)=-{\cal F}(y),\\
[10pt]
\displaystyle
g(t)=\frac{\textstyle 1}{\textstyle 2\,\rho_0\,c_0^2\,v_0}\,\left|S(t/\beta)\right|=\sum_{n=1}^{\infty} \left(a_n\,\sin n\omega t+b_n\,\cos n\omega t\right), 
\quad T=\beta\,T_0,\quad \omega=\frac{\textstyle 2\,\pi}{\textstyle T},
\label{AdimGene}
\end{array}
\end{equation}
we obtain the model problem
\begin{equation}
\left\{
\begin{array}{l}
\displaystyle
\frac{\textstyle d\,y}{\textstyle d\,t}=f(y)+A\,g(t)=F(t,\,y),\\
[6pt]
\displaystyle
f:\,]-1,\,+\infty[\rightarrow ]f_{\min},\,+\infty[,\quad \lim_{y\rightarrow-1}f(y)=+\infty,\quad -\infty\leq f_{\min}<0,\\
[6pt]
\displaystyle
f(0)=0,\quad f^{'}(0)=-1,\quad f^{'}(y)<0<f^{''}(y),\\
[6pt]
\displaystyle
g(t+T)=g(t),\quad \overline{g}=0,\quad \left|g(t)\right|\leq 1,\quad \max_{t\in [0,\,T]}\left|g(t)\right|=1,\quad 0\leq A<+\infty,\\
[6pt]
\displaystyle
y(0)=y_0\in]-1,\,+\infty[.
\end{array}
\right.
\label{ODE_Ygene}
\end{equation}
The isocline of zero slope of (\ref{ODE_Ygene}) is
\begin{equation}
I_0^y(t)=f^{-1}(-A\,g(t)).
\label{IsoclineYgene}
\end{equation} 


\subsection{Periodic solution}\label{SecPeriodic}

The following result generalizes the proposition \ref{PropositionYperiod}: since the same notations and processes are used, the proof is only sketched here.

\begin{proposition}
There is a unique $T$-periodic solution $Y(t)$ of (\ref{ODE_Ygene}). This solution is asymptotically stable.
\label{PropositionYgene}
\end{proposition}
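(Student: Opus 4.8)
The plan is to mimic the three-case structure of the proof of Proposition \ref{PropositionYperiod}, replacing $A\sin\omega t$ by $A\,g(t)$ throughout and checking at each step that only the \emph{qualitative} properties of the forcing -- $T$-periodicity, boundedness $|g|\le 1$, and $\overline{g}=0$ -- are actually used. First I would dispose of the trivial case $A=0$: equation (\ref{ODE_Ygene}) reduces to the autonomous equation $y'=f(y)$ with fixed point $0$, which is asymptotically stable because $f\in C^1$ and $f'(0)=-1$. For $A>0$ the argument splits according to whether the isocline (\ref{IsoclineYgene}) stays in the domain of $f^{-1}$, i.e. whether $A\,\|g\|_\infty=A<|f_{\min}|$ or not (the latter only being possible when $f_{\min}>-\infty$).

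In the case $0<A<|f_{\min}|$, the isocline $I_0^y(t)=f^{-1}(-A\,g(t))$ is defined and continuous on all of $\mathbb{R}$, and since $-A\le -A\,g(t)\le A$ it is trapped between the constants $f^{-1}(A)<0$ and $f^{-1}(-A)>0$. As in the sinusoidal case, $dy/dt>0$ below $I_0^y$ and $dy/dt<0$ above it, so $f^{-1}(A)$ is a (weak, nonporous) lower fence and $f^{-1}(-A)$ a (weak, nonporous) upper fence -- here the word ``weak'' must be handled slightly more carefully than before, since equality $F(t,y)=0$ on the fence $y=f^{-1}(A)$ now occurs on the set $\{t:g(t)=-1\}$ rather than at the single point $3T/4$; the fence is still nonporous because $y\mapsto F(t,y)$ is Lipschitz, so the funnel theorem of \cite{Hubbard91} still applies and $K_0=[f^{-1}(A),f^{-1}(-A)]$ is invariant under the flow. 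This gives global existence, $\Pi(K_0)\subset K_0$ for the Poincar\'e map $\Pi$, hence a fixed point $Y_0$ by continuity and Brouwer (in one dimension, the intermediate value theorem), hence a $T$-periodic solution $Y$.

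For the case $A\ge|f_{\min}|$ (possible only if $f_{\min}>-\infty$), I would reproduce verbatim the blow-up control of Proposition \ref{PropositionYperiod}, case 3: on the set where $y>0$ one has $f_{\min}<f(y)<0$, so $f_{\min}+A\,g(t)<y'<A\,g(t)$, and integrating the upper bound over any subinterval of one period gives $y(t)<y_0+A\int_0^t g\le y_0+A\,T$ (using only $|g|\le1$), which rules out blow-up; taking $y_0$ large, say $y_0=2|f_{\min}|\,T$, and using $\overline{g}=0$ to control $\int_0^T g=0$ shows $y(T)<y_0$ while $y$ stays positive, so $K_0=[f^{-1}(A),\,2|f_{\min}|\,T]$ is invariant under $\Pi$ and the same fixed-point argument applies. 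Uniqueness and asymptotic stability are then identical to the sinusoidal case and use nothing special about $g$: if $\tilde Y$ is another $T$-periodic solution with $Y_0>\tilde Y_0$, then $Y(t)>\tilde Y(t)$ on $[0,T]$, hence $f(Y)<f(\tilde Y)$, and integrating the ODE difference over one period gives $Y_0-\tilde Y_0=Y_0-\tilde Y_0+\int_0^T(f(Y)-f(\tilde Y))<Y_0-\tilde Y_0$, a contradiction; stability follows from $\int_0^T f'(\varphi(t,y_0))\,dt<0$ exactly as via theorem 4-22 of \cite{HaleKocak91}.

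The only genuinely new point -- and the one I would spell out rather than wave through -- is the ``weak fence'' bookkeeping in the case $0<A<|f_{\min}|$: in the monochromatic setting equality on the fence happened at an isolated time, whereas now $\{t:g(t)=\pm1\}$ could in principle be a larger set, so I would note explicitly that Lipschitz continuity of $F$ in $y$ is what guarantees nonporosity and that a solution, once strictly inside $K_0$, cannot return to the boundary. Everything else is a routine transcription, which is why the statement can reasonably be left ``only sketched.''
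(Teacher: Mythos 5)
Your treatment of the case $A=0$, the funnel argument for $0<A<|f_{\min}|$, and the uniqueness/stability step (monotonicity of $f$, inequality (\ref{ProofYperiodUnique}), sign of (\ref{ProofYperiodA0})) coincide with the paper's proof. The gap is in case $A\ge|f_{\min}|$, where you claim the sinusoidal construction transfers \emph{verbatim} with the same $A$-independent invariant interval $K_0=[f^{-1}(A),\,2|f_{\min}|T]$. That choice worked for $g(t)=\sin\omega t$ only because of a special structural fact you are silently using: $\int_0^t\sin\omega s\,ds=\frac{1}{\omega}(1-\cos\omega t)\ge 0$ for all $t$, so the lower solution $y_0+f_{\min}t+A\int_0^t g$ stays above $y_0-|f_{\min}|T$ \emph{uniformly in $A$}, which is what guarantees $y>0$ on all of $[0,T]$ and hence legitimizes integrating $y'<A\,g(t)$ over the whole period to get $y(T)<y_0$. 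For a general zero-mean periodic $g$, the running integral $\int_0^t g$ can be as negative as order $-|g_{\min}|\,t$, so the lower solution only gives $y(t)\ge y_0-t\bigl(|f_{\min}|+A|g_{\min}|\bigr)$. With $y_0=2|f_{\min}|T$ fixed and $A$ large, the solution can cross zero well before time $T$ (e.g.\ take $g$ close to $-1$ on the first half-period and close to $+1$ on the second); once $y<0$ one has $f(y)>0$, the differential inequality $y'<A\,g(t)$ fails, and the conclusion $y(T)<y_0$ no longer follows --- indeed in that example the forcing then drives $y$ up by roughly $(A-|f_{\min}|)T/2\gg 2|f_{\min}|T$ on the second half-period, so your $K_0$ is simply not invariant under $\Pi$.

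The fix is exactly what the paper does: make the upper endpoint of the invariant interval grow with $A$. Requiring $y_0>T\bigl(|f_{\min}|+A|g_{\min}|\bigr)$, with $g_{\min}=\min_{[0,T]}g<0$, forces $y>0$ on $[0,T]$ via the crude bound above; then $\overline{g}=0$ gives $y(T)<y_0$, and $K_0=\bigl[f^{-1}(A),\,2T\bigl(|f_{\min}|+A|g_{\min}|\bigr)\bigr]$ is invariant under the Poincar\'e map, after which your fixed-point, uniqueness and stability arguments go through unchanged. Your remark about the weak-fence bookkeeping in case 2 (the set $\{t:\,g(t)=\pm1\}$ need not be a single point) is a fair point of care, and your blow-up bound $y(t)\le y_0+AT$ is fine; the only step that actually fails as written is the $A$-independent invariant set.
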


\begin{proof} Three cases can be distinguished.

\underline{Case 1:} $A=0$. This case is identical to case 1 in proposition \ref{PropositionYperiod}.

\underline{Case 2:} $0<A<|f_{\min}|$. Based on the funnel's theorem, the compact set
$$
K_0=\left[f^{-1}(A),\,f^{-1}(-A)\right]
$$
is invariant under the flow of (\ref{ODE_Ygene}), and hence $\Pi(K_0)\subset K_0$. A fixed point argument shows the existence of a $T$-periodic solution $Y$. Equations (\ref{ProofYperiodUnique}) and (\ref{ProofYperiodA0}) still hold: the uniqueness and attractivity of $Y$ are straightforward consequences of $f^{'}(y)<0$. 

\underline{Case 3:} $A\geq |f_{\min}|$. The isocline (\ref{IsoclineYgene}) is not defined when $A\,g(t)\geq -f_{\min}$. In these cases, the invariant set deduced from the isocline is the non-compact set $[f^{-1}(A),\,+\infty[$, which makes it impossible to apply the fixed point argument. To build a compact invariant set, we take $y>0$, which means that $f_{\min}+A\,g(t)<\frac{d\,y}{d\,t}<A\,g(t)$. As long as $y>0$, integration gives lower and upper solutions of $y$
$$
y_0+f_{\min}\,t+\int_0^tA\,g(\xi)\,d\xi<y(t)<y_0+\int_0^tA\,g(\xi)\,d\xi.
$$
From the $T$-periodicity of $g$, it follows that 
$$
y_0+f_{\min}\,T<y(T)<y_0.
$$
It is then only necessary to bound $y_0$ to establish that $y>0$ on $[0,\,T]$. Setting
$$
g_{\min}=\min_{t\in[0,\,T]}g(t)<0,
$$
the lower solution of $y$ gives 
$$
y_0-t\left(|f_{\min}|+A|g_{\min}|\right)\leq y(t),\qquad \forall t\in [0,\,T].
$$
Taking $y_0>T(|f_{\min}|+A|g_{\min}|)$ therefore gives $y>0$. The compact set
$$
K_0=\left[f^{-1}(A),\,2\,T\left(|f_{\min}|+A|g_{\min}|\right)\right]
$$
is therefore invariant under the Poincar\'e map, which completes the proof.
\qquad\end{proof}


\subsection{Auxiliary solution}

Like in section \ref{SecPreRes}, the first derivative of $Y$ with respect to $A$ is introduced: $Z(t,\,A)=\frac{\partial\,Y}{\partial\,A}$. Applying the chain-rule to (\ref{ODE_Ygene}) shows that the $T$-periodic solution $Z$ satisfies
\begin{equation}
\frac{\textstyle d\,Z}{\textstyle d\,t}=f^{'}(Y)\,Z+g(t).
\label{ODE_Zgene}
\end{equation}

\begin{lemma}
In the limit of null forcing $A=0$, the mean values of $Z$ and $Z^2$ are
\begin{equation}
\overline{Z}(0)=0,\qquad \overline{Z^2}(0)=\frac{\textstyle 1}{\textstyle 2}\sum_{n=1}^{\infty}\frac{\textstyle a_n^2+b_n^2}{\textstyle 1+\left(n\,\omega\right)^2},
\label{Zeps0Gene}
\end{equation}
where $a_n$ and $b_n$ are the Fourier coefficients of the source (\ref{AdimGene}).
\label{LemmaZeps0Gene}
\end{lemma}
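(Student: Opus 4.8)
The plan is to set $A=0$ in (\ref{ODE_Ygene}), which forces $Y\equiv 0$ (exactly as in case 1 of proposition \ref{PropositionYgene}), so that (\ref{ODE_Zgene}) collapses to the constant-coefficient linear ODE $Z' = f'(0)\,Z + g(t) = -Z + g(t)$, using $f'(0)=-1$. From here everything is a Fourier computation. First I would expand the $T$-periodic solution $Z(t,0)$ in a Fourier series. Since $g$ has zero mean and $g(t)=\sum_{n\ge1}(a_n\sin n\omega t + b_n\cos n\omega t)$, solving mode by mode gives a periodic $Z(t,0)=\sum_{n\ge1}(\alpha_n\sin n\omega t + \beta_n\cos n\omega t)$ with no constant term; in particular $\overline{Z}(0)=0$, which is the first claim. (This also matches lemma \ref{LemmaZ0}/proposition \ref{PropositionTz} in the sinusoidal case.)

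Next I would identify the coefficients of each mode. Plugging $\alpha_n\sin n\omega t+\beta_n\cos n\omega t$ into $Z'+Z=a_n\sin n\omega t+b_n\cos n\omega t$ and matching $\sin$ and $\cos$ yields the $2\times2$ system $\alpha_n + n\omega\,\beta_n = a_n$, $-n\omega\,\alpha_n + \beta_n = b_n$, whose solution has $\alpha_n^2+\beta_n^2 = (a_n^2+b_n^2)/(1+(n\omega)^2)$ — the cross terms cancel because the system matrix $\begin{pmatrix}1 & n\omega\\ -n\omega & 1\end{pmatrix}$ is a scaled rotation, so it preserves the squared norm of $(a_n,b_n)$ up to the factor $1+(n\omega)^2$. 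Then I would compute $\overline{Z^2}(0)$ by Parseval's identity: for a zero-mean $T$-periodic function $\sum_{n\ge1}(\alpha_n\sin n\omega t+\beta_n\cos n\omega t)$ the mean of its square is $\tfrac12\sum_{n\ge1}(\alpha_n^2+\beta_n^2)$, since distinct harmonics are orthogonal over $[0,T]$ and $\overline{\sin^2 n\omega t}=\overline{\cos^2 n\omega t}=\tfrac12$. Substituting the per-mode identity gives $\overline{Z^2}(0)=\tfrac12\sum_{n\ge1}(a_n^2+b_n^2)/(1+(n\omega)^2)$, which is (\ref{Zeps0Gene}).

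There is no real obstacle here — the argument is entirely routine once $Y\equiv 0$ at $A=0$ is invoked — but the one point requiring a word of care is the interchange of the infinite sum with the integration defining $\overline{\cdot}$, i.e. justifying termwise application of Parseval. This is immediate if one assumes $g\in L^2$ (equivalently $\sum(a_n^2+b_n^2)<\infty$), which holds for the physically relevant forcings; the linear solution operator $g\mapsto Z(\cdot,0)$ is bounded on $L^2$ of $T$-periodic functions (it multiplies the $n$th Fourier mode by a factor of modulus $(1+(n\omega)^2)^{-1/2}\le 1$), so $Z(\cdot,0)\in L^2$ and Parseval applies directly. I would state the result in one or two lines referring back to the $A=0$ reduction and the mode-by-mode computation, mirroring the brevity of lemmas \ref{LemmaW0} and \ref{LemmaX0}.
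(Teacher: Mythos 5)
Your proposal is correct and follows essentially the same route as the paper: set $A=0$ so $Y\equiv 0$, reduce (\ref{ODE_Zgene}) to $Z'+Z=g$, solve mode by mode to get $A_n^2+B_n^2=(a_n^2+b_n^2)/(1+(n\omega)^2)$, and conclude by Parseval (the paper gets $\overline{Z}(0)=0$ by integrating over a period and using $\overline{g}=0$, which is equivalent to your no-constant-term observation). The only blemish is a harmless sign transposition in your $2\times2$ matching system (the correct system is $\alpha_n-n\omega\,\beta_n=a_n$, $n\omega\,\alpha_n+\beta_n=b_n$), which does not affect the norm identity you actually use; your remark on the $L^2$ justification of Parseval is a small extra care the paper leaves implicit.
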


\begin{proof}
Null forcing gives $Y=0$, hence (\ref{ODE_Zgene}) becomes
$$
\frac{\textstyle d\,Z}{\textstyle d\,t}=-Z+g(t).
$$
This equation is integrated over $[0,\,T]$. $T$-periodicity of $Z$ and $\overline{g}=0$ yield $\overline{Z}(0)=0$. The solution $Z$ is therefore sought as a Fourier series with null mean value
$$
Z(t,\,0)=\sum_{n=1}^{\infty}\left(A_n\,\sin n\omega t+B_n\,\cos n\omega t\right).
$$
Injecting this series in the ODE satisfied by $Z(t,\,0)$ and using (\ref{AdimGene}) provides $$
A_n=\frac{\textstyle a_n+n\,\omega\,b_n}{\textstyle 1+\left(n\,\omega\right)^2},\qquad
B_n=\frac{\textstyle b_n-n\,\omega\,a_n}{\textstyle 1+\left(n\,\omega\right)^2}, \quad\mbox{ with } \quad A_n^2+B_n^2=\frac{\textstyle a_n^2+b_n^2}{\textstyle 1+\left(n\,\omega\right)^2}.
$$
Now, Parseval's formula leads to the last equality in (\ref{Zeps0Gene}).
\qquad\end{proof}


\subsection{Mean dilatation of the crack}\label{SecGeneDilat}

The next result extends the theorem \ref{TheoremYbar}. It shows that a positive jump in the mean elastic displacement still occurs.

\begin{theorem}
The mean value of the $T$-periodic solution $Y$ in (\ref{ODE_Ygene}) is positive and increases strictly with the forcing amplitude:
$$
\overline{Y}>0,\qquad \frac{\textstyle \partial\,\overline{Y}}{\textstyle \partial\,A}>0.
$$
At small forcing levels, the following local estimate holds
\begin{equation}
\overline{Y}=f^{''}(0)\,\left(\frac{\textstyle A}{\textstyle 2}\right)^2\,\sum_{n=1}^{\infty}
\frac{\textstyle a_n^2+b_n^2}{\textstyle1+\left(n\,\omega\right)^2}+{\cal O}\left(A^3\right),
\label{YbarLocalGene}
\end{equation}
where $a_n$ and $b_n$ are the Fourier coefficients of the source (\ref{AdimGene}).
\label{TheoremYbarGene}
\end{theorem}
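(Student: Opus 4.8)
The plan is to follow the proof of Theorem \ref{TheoremYbar} line by line, replacing $\sin\omega t$ by $g(t)$ and replacing the explicit formulas (\ref{Zeps0}) and (\ref{Weps0}) by Lemma \ref{LemmaZeps0Gene} wherever the sinusoidal structure was used. First I would integrate the ODE in (\ref{ODE_Ygene}) over one period $[0,\,T]$: the $T$-periodicity of $Y$ established in Proposition \ref{PropositionYgene}, together with $\overline{g}=0$, gives $\overline{f(Y)}=0$. Since $g$ is non-constant, $Y$ is non-constant, so Jensen's inequality applied to the strictly convex function $f$ yields $f(\overline{Y})<\overline{f(Y)}=0=f(0)$; because $f$ is strictly decreasing this forces $\overline{Y}>0$.

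For the monotonicity I would introduce the second derivative $W=\partial^2 Y/\partial A^2$, which by the chain rule satisfies the same linear equation as (\ref{ODE_W}), namely $dW/dt=f^{''}(Y)\,Z^2+f^{'}(Y)\,W$ with $Z$ solving (\ref{ODE_Zgene}). The phase-portrait argument of Proposition \ref{PropositionW} uses only $f^{''}>0>f^{'}$ and $T$-periodicity, not the sinusoidal shape of the forcing, so it carries over verbatim and gives $W(t,\,A)>0$ for all $t$. Differentiating under the integral sign, $\partial\overline{Y}/\partial A=\overline{Z}$ and $\partial\overline{Z}/\partial A=\overline{W}>0$; combined with $\overline{Z}(0)=0$ from Lemma \ref{LemmaZeps0Gene}, this shows $\overline{Z}(A)>0$ for every $A>0$, i.e. $\partial\overline{Y}/\partial A>0$.

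For the local estimate I would Taylor-expand at $A=0$,
\[
\overline{Y}(A)=\overline{Y}(0)+A\,\overline{Z}(0)+\frac{A^2}{2}\,\overline{W}(0)+{\cal O}(A^3).
\]
Null forcing gives $Y\equiv 0$, hence $\overline{Y}(0)=0$, and Lemma \ref{LemmaZeps0Gene} gives $\overline{Z}(0)=0$. To compute $\overline{W}(0)$, set $Y=0$ in the $W$-equation, using $f^{'}(0)=-1$, to get $W^{'}(t,\,0)=f^{''}(0)\,Z(t,\,0)^2-W(t,\,0)$; integrating over $[0,\,T]$ and using $T$-periodicity yields $\overline{W}(0)=f^{''}(0)\,\overline{Z^2}(0)$. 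Substituting the second identity of Lemma \ref{LemmaZeps0Gene} then gives exactly (\ref{YbarLocalGene}).

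The main obstacle, and the reason the remainder here is only ${\cal O}(A^3)$ rather than the ${\cal O}(A^4)$ of Theorem \ref{TheoremYbar}, is that the symmetry argument behind Lemma \ref{LemmaX0} (which annihilated the $A^3$ term in the monochromatic case) has no analogue for a general periodic $g$, so the expansion simply stops at second order. A minor technical point to check is the smoothness of $A\mapsto Y(\cdot,\,A)$ across the transition value $A=|f_{\min}|$ when $f_{\min}$ is finite; this follows from the hyperbolicity of the fixed point of the Poincar\'e map (an inequality of the type (\ref{ProofYperiodA0})) and the implicit function theorem, exactly as in the sinusoidal setting.
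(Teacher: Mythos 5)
Your proposal is correct and follows essentially the same route as the paper's own proof: Jensen's inequality after integrating over a period for $\overline{Y}>0$, the sign of $W$ via the argument of Proposition \ref{PropositionW} combined with $\overline{Z}(0)=0$ from Lemma \ref{LemmaZeps0Gene} for the monotonicity, and a second-order Taylor expansion with $\overline{W}(0)=f^{''}(0)\,\overline{Z^2}(0)$ for the local estimate (\ref{YbarLocalGene}). Your added remarks on why the remainder is only ${\cal O}(A^3)$ and on smoothness in $A$ are sensible refinements but do not change the argument.
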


\begin{proof}
Based on the $T$-periodicity and the null mean value of $g$, the proof of theorem \ref{TheoremYbar} can be straightforwardly extended to prove $\overline{Y}>0$. To prove the second inequality, we proceed in three steps. First, as stated in the proof of proposition \ref{PropositionYgene}, null forcing gives $Y=0$, and hence $\overline{Y}(0)=0$. Secondly, lemma \ref{LemmaZeps0Gene} gives $\overline{Z}(0)=0$. Thirdly, the $T$-periodic function $W=\frac{\partial^2\,Y}{\partial\,A^2}=\frac{\partial\,Z}{\partial\,A}$ satisfies (\ref{ODE_W}). The sign of $f^{'}$ and $f^{''}$ in (\ref{ODE_Ygene}) show that the isocline of zero slope (\ref{IsoclineW}) is positive or null: $I_0^W(t)=0$ occurs only at points where $Z$ vanishes. The proof of proposition \ref{PropositionW} therefore still holds here, yielding $W(t)>0$ at all $t\in[0,\,T]$. We therefore obtain $\overline{W}(A)>0$, and hence $\overline{Z}(A)>\overline{Z}(0)=0$, which proves the second inequality.

Lastly, a Taylor expansion of $Y$ when $A \ll 1$ gives
$$
\begin{array}{lll}
\displaystyle
\overline{Y}(A) &=& \displaystyle \overline{Y}(0)+A\,\frac{\partial\,\overline{Y}}{\partial\,A}(0) +\frac{\textstyle A^2}{\textstyle 2}\,\frac{\partial^2\,\overline{Y}}{\partial\,A^2}(0)+{\cal O}(A^3),\\
[6pt]
&=&\displaystyle \frac{\textstyle A^2}{\textstyle 2}\,\overline{W}(0)+{\cal O}(A^3).
\end{array}
$$
Elementary calculations on (\ref{ODE_W}) lead to
$$
\overline{W}(0)=f^{''}(0)\,\overline{Z^2}(0), 
$$
where $\overline{Z^2}(0)$ is given in (\ref{Zeps0Gene}), which completes the proof.
\qquad\end{proof}

In the case of a pure sinusoidal source, $a_1=1$ and $a_{n>1}=b_n=0$: the estimate (\ref{YbarLocalGene}) recovers the estimate (\ref{YbarLocal}).



\section{Conclusion}

\subsection{Physical observables}

A set of nondimensionalized parameters was used throughout this study. To recover the physical observables, we recall (\ref{Adimentionalize}) and (\ref{AdimGene}). The results of physical interest are as follows.
\begin{itemize}
\item Theorems \ref{TheoremYbar} and \ref{TheoremYbarGene}: mean dilatation of the crack
\begin{equation}
\begin{array}{l}
\displaystyle
\overline{[u]}>0,\qquad \frac{\textstyle \partial\,\overline{[u]}}{\textstyle \partial\,\frac{v_0}{\beta\,d}}>0,\\
[12pt]
\displaystyle
\overline{[u]}=\left|{\cal F}^{''}(0)\right|\,\frac{\textstyle v_0^2}{\textstyle \beta^2\,d}\,
\sum_{n=1}^{\infty}\frac{\textstyle a_n^2+b_n^2}{\textstyle 1+\left(n\,\Omega\,/\,\beta\right)^2}
+{\cal O}\left(\frac{\textstyle v_0^3}{\textstyle \beta^3\,d^2}\right),
\end{array}
\label{Result1}
\end{equation}
where $a_n$ and $b_n$ are the Fourier coefficients of the normalized source. The main aim of the present study was to establish the validity of the inequalities in (\ref{Result1}) with any set of parameters. In the case of model 1 (\ref{Model1}) and a monochromatic source (\ref{ForcageSinus}), the local estimate obtained here matches the theoretical expression obtained in our previous study using a perturbation analysis method \cite{Lombard08}. In the local estimate of (\ref{Result1}), an improvement of one order of accuracy is reached with a purely monochromatic source: see (\ref{YbarLocal}). Lastly, note that the local estimate with model 1 is twice that obtained with model 2 (\ref{Model2}). 
\item Theorem \ref{TheoremYmax}: maximum aperture of the crack (monochromatic source only)
\begin{equation}
\begin{array}{l}
\displaystyle
\frac{\textstyle \partial\,[u]_{\max}}{\textstyle \partial\,\frac{v_0}{\beta\,d}}>0,\qquad \frac{\textstyle \partial^2\,[u]_{\max}}{\textstyle \partial\,\left(\frac{v_0}{\beta\,d}\right)^2}>0,\\
[10pt]
\displaystyle
\max\left(\frac{\textstyle 2\,v_0}{\textstyle \beta}\,\frac{\textstyle 1}{\textstyle \sqrt{1+\left(\Omega/\beta\right)^2}}, \frac{\textstyle 2\,v_0}{\textstyle \Omega}\,\frac{\textstyle 1+\Omega/\beta}{\textstyle \sqrt{1+\left(\Omega/\beta\right) ^2}}-\frac{\textstyle 2\,\pi\,\beta\,d}{\textstyle \Omega}\,|{\cal F}_{\max}|\right) \leq [u]_{\max} \leq \frac{\textstyle 4 \,v_0}{\textstyle \Omega},\\
\\
\displaystyle
[u]_{\max}=\frac{\textstyle 2\,v_0}{\textstyle \beta}\,\frac{\textstyle 1}{\textstyle \sqrt{1+\left(\Omega/\beta\right)^2}}+{\cal O}\left(\frac{\textstyle v_0^2}{\textstyle \beta^2\,d}\right). 
\end{array}
\label{Result2}
\end{equation}
\item Theorem \ref{TheoremYmin}: maximum closure of the crack (monochromatic source only)

\begin{equation}
\begin{array}{l}
\displaystyle
-d < d\,\left(-{\cal F}\right)^{-1}\left(\frac{\textstyle 2\,v_0}{\textstyle \beta\,d}\right)\leq [u]_{\min} \leq 0,\\
[8pt]
\displaystyle
[u]_{\min}=-\frac{\textstyle 2\,v_0}{\textstyle \beta}\,\frac{\textstyle 1}{\textstyle \sqrt{1+\left(\Omega/\beta\right)^2}}+{\cal O}\left(\frac{\textstyle v_0^2}{\textstyle \beta^2\,d}\right),\\
[10pt]
\displaystyle
\lim_{\frac{v_0}{\beta\,d}\rightarrow+\infty}[u]_{\min}=-d. 
\end{array}
\label{Result3}
\end{equation}
\end{itemize}
It is worth noting that $\overline{[u]}>0$ and $[u]_{\min}$ bounded are purely nonlinear phenomena: with the classical linear law (\ref{JClin}), $\overline{[u]}=0$ and $[u]_{\min}$ is not bounded independently of $v_0$.


\subsection{Acoustic determination of the contact law}

One of the applications of the present study is the characterization of the crack model, in particular the finite compressibility of the crack. This data is crucial in geomechanics and geohydrology, where it is linked to the transport of fluids across fractured rocks \cite{Adler99}. If $\overline{[u]}$, the source, and the physical parameters of $\Omega_0$ and $\Omega_1$ are known, then the second equation in (\ref{Result1}) provides a straightforward mean of determining $|{\cal F}^{''}(0)|/d$. Note that the stiffness $K$, which is involved in $\beta$ via the proposition \ref{PropositionODE}, is classically measured using acoustic methods \cite{Pyrak90}. It then suffices to measure $\overline{[u]}$. For this purpose, two possible methods come to mind.

The first method consists in measuring the dilatation of the crack mechanically. This requires installing two strain gauges around the crack \cite{Korshak02}. However, in many contexts such as those encountered in geosciences, this is not practicable. The second method consists in performing acoustic measurements of the diffracted elastic waves far from the crack. The proposition \ref{PropositionYperiod} can be used here, by writing the Fourier series of incident, reflected, and transmitted elastic displacements
\begin{equation}
\begin{array}{l}
\displaystyle
u_I(x,\,t)=\frac{\textstyle v_0}{\textstyle \Omega}\,\left\{\cos(\Omega\,t-k_0\,x)-1\right\},\\
[6pt]
\displaystyle
u_R(x,\,t)=R_0^a+\sum_{n=1}^\infty\left\{R_n^a\,\sin n\,(\Omega\,t+k_0\,x)+R_n^b\,\cos n\,(\Omega\,t+k_0\,x)\right\},\\
[6pt]
\displaystyle
u_T(x,\,t)=T_0^a+\sum_{n=1}^\infty\left\{T_n^a\,\sin n\,(\Omega\,t-k_1\,x)+T_n^b\,\cos n\,(\Omega\,t-k_1\,x)\right\},
\end{array}
\label{Fourier}
\end{equation}
where $k_0=\Omega/c_1$ and $k_1=\Omega/c_1$. Definition of $\overline{[u]}$ implies
\begin{equation}
\overline{[u]}=T_0^a-R_0^a+\frac{\textstyle v_0}{\textstyle \Omega}.
\label{ConcluEq1}
\end{equation}
On the other hand, the causality of the source and the continuity of the stress (\ref{JCsigma}) mean that
\begin{equation}
\rho_1\,c_1\,T_0^a+\rho_0\,c_0\,R_0^a=-\frac{\textstyle \rho_0\,c_0\,v_0}{\textstyle \Omega}.
\label{ConcluEq2}
\end{equation}
Solving (\ref{ConcluEq1}) and (\ref{ConcluEq2}) gives
\begin{equation}
\begin{array}{lll}
\overline{[u]}&=&\displaystyle -\frac{\textstyle \rho_0\,c_0+\rho_1\,c_1}{\textstyle \rho_1\,c_1}\,R_0^a+\frac{\textstyle \rho_1\,c_1-\rho_0\,c_0}{\textstyle \rho_1\,c_1}\,\frac{\textstyle v_0}{\textstyle \Omega},\\
[10pt]
&=&\displaystyle \frac{\textstyle \rho_0\,c_0+\rho_1\,c_1}{\textstyle \rho_0\,c_0}\,T_0^a+2\,\frac{\textstyle v_0}{\textstyle \Omega}.
\end{array}
\label{ConcluEq3}
\end{equation}
Purely acoustic measurements of $R_0^a$ or $T_0^a$ therefore give $\overline{[u]}$, from which the nonlinear crack parameter can be easily determined.


\subsection{Future lines of investigation}

The monotonicity of ${\cal F}$ in (\ref{Fconcave}) - or equivalently $f$ in (\ref{Adimentionalize}) and (\ref{AdimGene}) - was the key ingredient used here to prove the properties of the crack: the strict increasing of ${\cal F}$ - or the strict  decreasing of $f$ - proves the uniqueness and attractivity of the periodic solution. In addition, the strict concavity of ${\cal F}$ - or the strict convexity of $f$ - shows that the mean value of the solution increases with the forcing parameter. Relaxing these hypotheses may lead to more complex situations, and require more sophisticated tools. One thinks for instance to the case of hysteretic models of interfaces \cite{Johnson85,Gusev03}. 

This paper was focused on the normal finite compressibility of a crack. Coupling with shear stress is also an important topic to examine \cite{Pecorari03}.

Lastly, further studies are also needed on networks of nonlinear cracks \cite{Richoux06}. These configurations frequently occur in applications. Direct numerical simulations have been carried out in \cite{Lombard08}, but a rigorous mathematical analysis is still required.

\section*{Acknowledgments}

We are grateful to Didier Ferrand, Martine Pithioux and Benjamin Ricaud for their useful comments on the manuscript.
Many thanks also to Jessica Blanc for her careful reading.



\begin{thebibliography}{00}

\bibitem{Achenbach73}
{\sc J. D. Achenbach}, {\it Wave Propagation in Elastic Solids}, North-Holland Publishing (Amsterdam, 1973).

\bibitem{Achenbach82} 
{\sc J. D. Achenbach, A. N. Norris}, {\em Loss of specular reflection due to nonlinear crack-face interaction}, J. NonDest. Eval., 3 (1982), pp.~229--239. 

\bibitem{Adler99}
{\sc P. M. Adler}, {\it Fractures and Fracture Networks}, Theory and Applcations of Transport in Porous Media (Kluwer Academic Publishers, 1999).

\bibitem{Bandis83} 
{\sc S. C. Bandis, A. C. Lumsden and N. R. Barton}, {\em Fundamentals of rock fracture deformation}, Int. J. Rock Mech. Min. Sci. Geomech. Abstr., 20 (1983), pp.~249--268. 
 
\bibitem{Biwa04} 
{\sc S. Biwa, S. Nakajima and N. Ohno}, {\em On the acoustic nonlinearity of solid-solid contact with pressure dependent interface stiffness}, 
ASME J. Appl. Mech., 71 (2004), pp.~508--515. 

\bibitem{Lambert96} 
{\sc R. M. Corless, G. H. Gonnet, D. E. Hare, D. J. Jeffrey and D. E. Knuth}, {\em On the Lambert W function}, Advances in Computational Mathematics, 5 (1996), pp.~329--359. 

\bibitem{Gusev03}
{\sc V. Gusev, B. Castagnede, A. Moussatov}, {\em Hysteresis in response of nonlinear bistable interface to continuously varying acoustic loading}, Ultrasonics 41, (2003), pp.~643--654.

\bibitem{HaleKocak91}
{\sc J. Hale and H. Ko\c{c}ak}, {\em Dynamics and Bifurcations}, Texts in Applied Mathematics
(Springer-Verlag, 1991).

\bibitem{Hirsekorn01}
{\sc S. Hirsekorn}, {\em Nonlinear transfer of ultrasounds by adhesive joints - a theoretical description}, Ultrasonics 39 (2001), pp.~57--68.

\bibitem{Hubbard91}
{\sc J. H. Hubbard and B. H. West}, {\em Differential Equations: a Dynamical System Approach}, Texts in Applied Mathematics
(Springer-Verlag, 1991).

\bibitem{Johnson85}
{\sc K. L. Johnson}, {\em Contact Mechanics} 
(Cambridge University Press, 1985).

\bibitem{JuRo08}
{\sc S. Junca, B. Rousselet}, {\em The method of strain coordinates with weak unilateral springs}, preprint (2008), {\tt http://math1.unice.fr/$\sim$junca/pdffiles/SCUC.pdf}

\bibitem{Kim06}
{\sc J. Y. Kim, A. Baltazar, J. W. Hu, S. I. Rokhlin}, {\em Hysteretic linear and nonlinear acoustic responses from pressed interfaces}, Int. J. Solids. Struct., 43 (2006), pp.~6436--6452.

\bibitem{Korshak02}
{\sc B. A. Korshak, I. Y. Solodov and E. M. Ballad}, {\em DC effects, sub-harmonics, stochasticity and "memory" for contact acoustic non-linearity}, Ultrasonics, 40 (2002), pp.~707--713.

\bibitem{Lombard07}
{\sc B. Lombard and J. Piraux}, {\em Modeling 1-D elastic P-waves in a fractured rock with hyperbolic jump conditions}, J. Comput. App. Math., 204 (2007), pp.~292--305.

\bibitem{Lombard08}
{\sc B. Lombard and J. Piraux}, {\em Propagation of compressional elastic waves through a 1-D medium with contact nonlinearities}, 5th Meeting of GdR 2501-CNRS (2008), {\tt http://w3lma.cnrs-mrs.fr/$\sim$MI/Congres/GdR08A.pdf}.

\bibitem{Malama03}
{\sc B. Malama, P. H. Kulatilake}, {\em Models for normal fracture deformation under compressive loading}, Int. J. Rock Mech. Min. Sci., 40 (2003), pp.~893--901.

\bibitem{Nayfeh95}
{\sc A. H. Nayfeh, B. Balachandran}, {\em Applied Nonlinear Dynamics}, Wiley series in nonlinear science
(John Wiley \& Sons, 1995).

\bibitem{Ohara06}
{\sc Y. Ohara, T. Mihara, K. Yamanaka}, {\em Effect of adhesion force between crack planes on subharmonic and DC responses in nonlinear ultrasound},  Ultrasonics 44 (2006), pp.~194--199.

\bibitem{Pecorari03}
{\sc C. Pecorari}, {\em Nonlinear interaction of plane ultrasonic waves with an interface between rough surfaces in contact}, J. Acoust. Soc. Am., 113 (2003), pp.~3065--3072.

\bibitem{Pyrak90}
{\sc L. Pyrak-Nolte, L. Myer and N. Cook}, {\em Transmission of seismic waves across single natural fractures}, J. Geophys. Res., 95 (1990), pp.~8617--8638.

\bibitem{Richardson79}
{\sc J. M. Richardson}, {\em Harmonic generation at an unbonded interface: I. Planar interface between semi-infinite elastic media}, Int. J. Eng. Sci., 17 (1979), pp.~73--85.      

\bibitem{Richoux06}
{\sc O. Richoux, C. Depollier, J. Hardy}, {\em Propagation of mechanical waves in a one-dimensional nonlinear disordered lattice}, Physical Review E 73, 026611 (2006).

\bibitem{Rokhlin91}
{\sc S. I. Rokhlin, Y. J. Wang}, {\em Analysis of boundary conditions for elastic wave interaction with an interface between two solids}, J. Acoust. Soc. Am., 89-2 (1991), pp.~503--515.      


\bibitem{Solodov98}
{\sc I. Y. Solodov}, {\em Ultrasonics of non-linear contacts: propagation, reflection and NDE-applications}, 
Ultrasonics, 36 (1998), pp.~383--390.

\end{thebibliography}
\end{document}